\documentclass[a4paper, 11pt]{article} 
\usepackage{ourpreamble}

\begin{document}

\makeatletter
\if@todonotes@disabled\else
    \tableofcontents\newpage
\fi
\makeatother

\maketitle

\begin{abstract}
    \noindent The one-way model of measurement-based quantum computing implements computations via successive adaptive single-qubit measurements on a resource graph state.
    This model has practical applications, particularly in photonics, and it is also useful as a theoretical tool e.g.\ for optimisation.
    Gflow is a necessary and sufficient condition for implementing certain one-way computations deterministically (in a suitable sense); it is also used in efficient translations from the one-way model to quantum circuits.
    For a computation on $n$ qubits, a gflow can be found in $\mathcal{O}(n^3)$ time.

    Here, we consider an incompletely specified computation given by an unlabelled open graph: the graph state as well as the input and output qubits are known, but the measurements have not yet been fixed.
    We give an algorithm that identifies a measurement labelling and a compatible gflow, and runs in $\mathcal{O}(n^3)$, strictly generalising the previous approach.
    The new algorithm can also handle restrictions on the order of the measurements and returns only solutions compatible with these constraints.
    We additionally prove that if an open graph has equal numbers of inputs and outputs, it has at most one labelling compatible with gflow; and show how to identify additional inputs for an open graph that does not yet have the maximal number, without breaking an existing gflow.
\end{abstract}

\section{Introduction}

Quantum computing offers not only polynomial-time algorithms for some problems where no efficient classical algorithm is known, it also gives rise to entirely non-classical models of computation.
Where quantum circuits are a generalisation of reversible classical logic circuits, the one-way model of measurement-based quantum computing (MBQC) \cite{raussendorfOneWayQuantumComputer2001} is intrinsically quantum.
In this model, an entangled resource state called a graph state is prepared at the beginning, then the computation is driven by successive adaptive single-qubit measurements, which remove from the computation the qubit being measured and simultaneously change the state on the remaining qubits.
As measurements are probabilistic, care is needed to ensure that a computation is deterministic overall.
One option is to choose a universal family of graph states such as the cluster states, for which it is well-understood how to implement arbitrary computations deterministically \cite{raussendorfOneWayQuantumComputer2001}.
Yet using cluster states is somewhat inefficient: this approach generally involves many qubits and measurements which do not meaningfully contribute to the computation.

To improve efficiency, it is better to work with graph states tailored to the specific computation, yet the trade-off is that more work needs to be done to ensure determinism \cite{danosDeterminismOnewayModel2006}.
A graph state is described by a finite simple graph, with graph vertices corresponding to qubits and edges corresponding to entanglement.
An open graph is a graph with two distinguished sets of vertices called `inputs' and `outputs', these correspond to the inputs and outputs of the one-way computation.
Finally, each non-output qubit has a measurement label that fixes one of several families of measurements, for an overall structure called a \emph{labelled open graph}.
A procedure for ensuring determinism is called a \emph{flow}, it consists of a partial order on the qubits (indicating which measurements need to be happen earlier or later), and a way of correcting each undesired measurement outcome \cite{danosDeterminismOnewayModel2006,browneGeneralizedFlowDeterminism2007}.
In this work, we will focus on the flow variant called gflow, which is necessary and sufficient for robust determinism when measurements belong to three families\footnote{Some earlier literature uses `gflow' to refer only to the flow property on \LOG{}s where all measurements are $\XY$ and uses `extended gflow' to denote the general case. We will instead specify `$\XY$-only gflow' where needed.} denoted $\XY,\XZ,\YZ$ \cite{browneGeneralizedFlowDeterminism2007}.
This choice allows universal quantum computation.
Gflow is one example in an entire family of flow properties that differ in terms of which measurement types and which kinds of corrections are allowed \cite{danosDeterminismOnewayModel2006,browneGeneralizedFlowDeterminism2007,mhallaShadowPauliFlow2025}.
We consider gflow here because it is a good intermediate balance point: it is significantly more powerful and interesting than the simpler causal flow, yet less complex to reason about than the computationally universal case of the more general Pauli flow \cite{browneGeneralizedFlowDeterminism2007} or Shadow Pauli flow \cite{mhallaShadowPauliFlow2025}.

Previous work regarding gflow has considered how to find gflow given a labelled open graph \cite{mhallaFindingOptimalFlows2008a,backensThereBackAgain2021,mitosekAlgebraicInterpretationPauli2026}.
The most efficient algorithm for finding gflow is the suitable restriction of the more general algorithm for finding Pauli flow, which runs in $\bigO(n^3)$ \cite{mitosekAlgebraicInterpretationPauli2026,Mitosek2025thesis}.
Moreover, there has been a lot of interest in the family of operations that transform one labelled open graph which has flow into another open graph with flow such that the two represent the same computation, which is used for optimisation \cite{duncanGraphtheoreticSimplificationQuantum2020, backensThereBackAgain2021, simmonsRelatingMeasurementPatterns2021, boothMeasurementbasedQuantumComputation2022, staudacherReducing2QuBitGate2023, staudacherMulticontrolledPhaseGate2024, holkerCausal2024}.
In this context, the existence of flow is also required in order to have an efficient algorithm that translates a one-way computation to a circuit \cite{duncanGraphtheoreticSimplificationQuantum2020,backensThereBackAgain2021,simmonsRelatingMeasurementPatterns2021}.

Yet in the noisy intermediate-scale quantum computing (NISQ) setting, a different question also arises: given some graph state that can be prepared experimentally, what deterministic one-way computations could this state be used to implement \cite{adcockMappingGraphState2020}?
Fully resolving this question is likely computationally hard as there are many different labelled open graphs with the same underlying graph, and it seems unlikely there is an efficient way of checking the existence of gflow for all of them.
Nevertheless, as we show in this work, it is possible to take significant steps towards addressing this question while remaining efficient.

Indeed, we give a polynomial-time algorithm for the following generalisation of the usual flow-finding problem:
\begin{description}[noitemsep]
     \item[Input:] An open graph with no measurement labelling or with some partial constraints on the labelling.
     \item[Output:] A measurement labelling compatible with the constraints, and with the property that the resulting labelled open graph has gflow, or a message that no such labelling exists.
\end{description}
This problem has not been resolved before, though variants of it have been considered.
In \cite{boothOutcomeDeterminismMeasurementbased2023}, Booth et al.\ briefly explain how a solution could function by adapting their qudit flow-finding algorithm; yet that approach is incomplete.
In \cite{mitosekPauliFlowOpen2024, Mitosek2025thesis}, one of the authors showed that, given an open graph, one can efficiently decide whether there exists a measurement labelling consistent with the (more general) Pauli flow.
However, the method used there considered only measurements in the two Pauli eigenbases $X$ and $Z$, which are insufficient for universal computation on graph states; and solutions of this form do not imply solutions to our problem of finding a measurement labelling compatible with gflow.
Flow-finding where the labelling can change during the search was also considered in \cite{erneExtractionZXdiagramsGflowfixed}, which presents a fixed-parameter tractable algorithm for finding a minimal set of vertices that need to be turned into outputs to allow gflow.

Our algorithm for finding a labelling consistent with gflow matches the complexity of the fastest general-purpose gflow-finding algorithm from \cite{mitosekAlgebraicInterpretationPauli2026}.
Since our approach also works if the labelling constraints fully specify a fixed measurement labelling, it is a strict generalisation of the previous algorithm.
Moreover, the new algorithm can take in constraints regarding the partial (time) order of the different measurements and will then return only labellings for which there exists a gflow compatible with these constraints; this does not affect the complexity.
Such constraints may for example arise in experiments, based on the order in which qubits in the graph state are prepared.

It is well-known that any labelled open graph which has gflow also has a `focused gflow' where the corrections for undesired measurement outcomes satisfy certain additional properties \cite{mhallaWhichGraphStates2014a,backensThereBackAgain2021}.
Focused gflow is particularly nice to work with because on labelled open graphs that have equal numbers of inputs and outputs, which we will call \emph{balanced}, there exists at most one focused gflow \cite{mhallaWhichGraphStates2014a,backensThereBackAgain2021,mitosekAlgebraicInterpretationPauli2026}.
We generalise this result to show that on balanced open graphs, there exists at most one measurement labelling that leads to gflow, meaning the useful uniqueness property still holds when no labelling is specified.
Additionally, this motivates a further problem, which we also resolve with a polynomial-time algorithm:
\begin{description}[noitemsep]
    \item[Input:] An unbalanced open graph which has gflow for some labelling.
    \item[Output:] A subset of non-inputs that can be turned into inputs in order to balance the open graph, while preserving the existence of gflow.
\end{description}
This algorithm first finds a labelling for which the input open graph has gflow if no such labelling is specified.
It then makes only minimal changes to this labelling during the balancing procedure: the only labels that may change are those of vertices that become inputs.

This paper is structured as follows.
Section~\ref{sec:preliminaries} contains background and preliminaries.
The new gflow-finding algorithm for open graphs or partial labelling information is developed and proved correct in Section~\ref{sec:algorithm}.
Results related to balanced open graphs and the balancing algorithm are given in Section~\ref{sec:balancing}.
Finally, Section~\ref{sec:conclusions} contains the conclusions.

\section{Preliminaries}
\label{sec:preliminaries}

In the one-way model of measurement-based quantum computation, a computation proceeds via successive adaptive measurements on a resource graph state.
The resource is sometimes called an `open graph state' when input qubits in some arbitrary initial state may be entangled into the graph state itself.
Individual measurements are non-deterministic, but a well-designed one-way computation can be made deterministic overall by adapting later measurements depending on the outcomes of earlier ones.
This exploits the stabiliser property of graph states: every graph state is the joint eigenstate of certain tensor products of Pauli operators.
The permitted measurements are chosen so that the `undesired' measurement outcome can be changed into the desired outcome via one of the Pauli operations.
Hence, a measurement can be corrected after the fact as follows: complete the Pauli operation that changes the undesired outcome into the desired one to a graph stabiliser.
This stabiliser must act non-trivially only on the qubit to be corrected and on qubits that have not yet been measured.
Then the correction `by-products' on the unmeasured qubits can be incorporated into those later measurements; only those Pauli by-products on outputs need to be applied explicitly.

To decide whether a one-way computation can be implemented deterministically, it suffices to know the underlying graph, which qubits are inputs or outputs of the computation, and the measurement label of each non-output qubit.
The measurements that are correctable by Pauli operations are those that lie in the three planes of the Bloch sphere spanned by the eigenstates of two Pauli matrices:
\begin{itemize}
	\item the $\XY$-plane containing states of the form $\ket{\pm_{\XY,\alpha}} = \frac{1}{\sqrt{2}}\left(\ket{0} \pm e^{i\alpha}\ket{1}\right)$,
	\item the $\XZ$-plane containing states of the form $\ket{+_{\XZ,\alpha}} = \cos\frac{\alpha}{2}\ket{0} + \sin\frac{\alpha}{2}\ket{1}$ and $\ket{-_{\XZ,\alpha}} = \sin\frac{\alpha}{2}\ket{0} - \cos\frac{\alpha}{2}\ket{1}$, and
	\item the $\YZ$-plane containing states of the form $\ket{+_{\YZ,\alpha}} = \cos\frac{\alpha}{2}\ket{0} + i\sin\frac{\alpha}{2}\ket{1}$ and $\ket{-_{\YZ,\alpha}} = \sin\frac{\alpha}{2}\ket{0} - i\cos\frac{\alpha}{2}\ket{1}$,
\end{itemize}
where in each case $\alpha$ is an arbitrary real parameter whose value does not affect whether the computation is deterministic (although of course it does affect the specific linear map that is implemented).
Here, a state labelled `$+$' denotes the desired outcome of a measurement and the corresponding state labelled `$-$' denotes the undesired outcome of the same measurement.

A deterministic computation is then witnessed by a flow, which consists of a `correction function' that indicates a correcting stabiliser for each measured qubit, as well as a strict partial order that restricts the order of measurements (to ensure that correction by-products appear only on unmeasured qubits).
We now make these notions formal.

\subsection{Fundamental definitions}

\begin{definition}
    An \emph{open graph} is a triple $\GIO$ where $G = (V,E)$ is a finite simple undirected graph, $I \subseteq V$ is a set of input vertices, and $O \subseteq V$ is a set of output vertices.
\end{definition}

The corresponding resource state used in the computation in the MBQC model is the associated graph state where vertices corresponding to inputs are not initialised to $\ket{+}$:

\begin{definition}\label{def:state}
    Let $(G,I,O)$ be an open graph. We define the corresponding \emph{graph state with input} as $\ket{G,\psi_I}$ as:\begin{equation*}
        \ket{G,\psi_I} := \prod_{uv \in E} CZ_{u,v} \left(\ket{\psi}_I \otimes \bigotimes_{v \in \comp{I} } \ket{+}\right)
    \end{equation*}
    \noindent where $\ket{\psi}_I$ is the input state on qubits in $I$.
\end{definition}

Much of our work concerns the dichotomy of cases depending on whether the numbers of inputs and outputs match. We give the corresponding property a name:

\begin{definition}
    An open graph $(G,I,O)$ is \emph{balanced} when $\abs{I}=\abs{O}$ and \emph{unbalanced} when $\abs{I}<\abs{O}$.
\end{definition}

Note that open graphs with $\abs{I}>\abs{O}$ cannot have gflow, so they will not be considered in this work.

\begin{notation}
    Let $K$ be a stabiliser of a graph state $\ket{G}$.
    We denote the set of vertices on which $K$ has a non-trivial effect as $\supp{K}$, and the set of vertices on which $K$ acts precisely via the Pauli matrix $\sigma$ as $\suppP{\sigma}{K}$ for $\sigma \in \{ \X, \Y, \Z \}$.
\end{notation}

\begin{definition}
    An \emph{open graph stabiliser} $K$ of an open graph $\GIO$ is a stabiliser of the graph state $\ket{G}$ such that $(\suppX{K} \cup \suppY{K}) \cap I = \emptyset$, \ie $K$ can act on inputs only via the Pauli $\Z$.
\end{definition}

Throughout, $n := |V|$.
Since we consider only planar measurements in this work, the following definition is more restrictive than equivalents used elsewhere in the literature in the context of Pauli flow.

\begin{definition}
    A \emph{\LOG} is a quadruple $\lGIO$ where $\GIO$ is an open graph and $\ld \colon \comp{O} \to \{ \XY, \YZ, \XZ \}$ is a \emph{measurement labelling}.
\end{definition}

Before defining flow, we need notation for the odd neighbourhood of a set of vertices, which consist of all those vertices that have an odd number of neighbours in the original set.

\begin{definition}
    Let $G$ be a graph and let $A \subseteq V$ be a set of vertices. The \emph{odd neighbourhood} of $A$ is the set of vertices in $V$ with an odd number of neighbours in $A$:
    \[
    	\odd{A} := \left\{ v \in V : \abs{\{ a \in A \mid av \in E \}} \equiv 1 \bmod 2 \right\}.
    \]
\end{definition}

The following was originally defined in \cite{browneGeneralizedFlowDeterminism2007}; the version below is adapted from \cite{backensThereBackAgain2021}.

\begin{definition}\label{def:gflow}
    A \LOG $\Gamma = (G,I,O,\ld)$ has \emph{generalised flow} (or \emph{gflow} for short) if there exists a map $c \colon \comp{O} \to \powerset(\comp{I})$ and a strict partial order $\prec$ on $V$ such that for all $v \in \comp{O}$ and $w \in V$:\begin{enumerate}
        \item[\conlab{g1}] If $w \in c(v)$ and $v \ne w$, then $v \prec w$.
        \item[\conlab{g2}] If $w \in \odd{c(v)}$ and $v \ne w$, then $v \prec w$.
        \item[\conlab{g3}] If $\ld(v) = \XY$, then $v \notin c(v)$ and $v \in \odd{c(v)}$.
        \item[\conlab{g4}] If $\ld(v) = \YZ$, then $v \in c(v)$ and $v \notin \odd{c(v)}$.
        \item[\conlab{g5}] If $\ld(v) = \XZ$, then $v \in c(v)$ and $v \in \odd{c(v)}$.
    \end{enumerate}
\end{definition}

The map $c$ is called the \textit{correction function}.
For all $v \in \comp{O}$, $c(v)$ is a subset of non-inputs.
This subset is called the \textit{correction set} of $v$.

\begin{definition}\label{def:induced stabiliser}
	Let $\GIO$ be an open graph, and let $C \subseteq \comp{I}$.
	This set induces an open graph stabiliser $K = \prod_{u \in C} \left( X_u \prod_{w \in N(u)} Z_w \right)$, where the subscript denotes the qubit on which the Pauli operator is acting.
\end{definition}

\begin{observation}\label{obs:induced stabiliser}
    In the above definition, by collecting the operators acting on each qubit, we may rewrite $K = \pm \prod_{u\in C} X_u \prod_{w\in\odd{C}} Z_u$, where the sign is irrelevant to our arguments.
    Thus:
	\begin{itemize}
		\item $\suppX{K} = C\setminus\odd{C}$,
		\item $\suppY{K} = C\cap\odd{C}$, and
		\item $\suppZ{K} = \odd{C}\setminus C$.
	\end{itemize}
	The property $I\cap C = \emptyset$ implies this is indeed an \emph{open} graph stabiliser as $I\cap \suppX{K} = \emptyset = I\cap\suppY{K}$.
\end{observation}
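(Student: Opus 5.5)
We verify the three support identities by computing $K$ qubit by qubit, and then the openness claim. Since $K$ is a product of generators of the stabiliser group of $\ket{G}$, it is itself a stabiliser.

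\emph{Collecting the Pauli operators.} Expand $K=\prod_{u\in C}\bigl(X_u\prod_{w\in N(u)}Z_w\bigr)$ and reorder the factors so that all operators on the same qubit are adjacent. Reordering Pauli operators only produces signs, since any two Pauli operators commute or anticommute; this yields the global sign $\pm$, which we discard. Fix a vertex $v$.
\begin{itemize}
  \item The number of $X_v$ factors is $1$ if $v\in C$ and $0$ otherwise, because the $X$ part of generator $u$ acts only on $u$.
  \item The number of $Z_v$ factors is $\abs{\{u\in C \mid uv\in E\}}$, because $v$ receives a $Z$ from each generator $u\in C$ with $v\in N(u)$.
\end{itemize}
Since $Z^2=I$, the $Z$ factors on $v$ reduce to $Z_v$ exactly when this count is odd, that is, when $v\in\odd{C}$. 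Hence $K=\pm\prod_{u\in C}X_u\prod_{w\in\odd{C}}Z_w$, which is the claimed form. The $Z_u$ in the statement is clearly a typo for $Z_w$.

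\emph{Reading off the supports.} At each vertex $v$, the operator is $X^{[v\in C]}Z^{[v\in\odd{C}]}$. There are four cases.
\begin{itemize}
  \item If $v\in C\setminus\odd{C}$, the operator is $X$.
  \item If $v\in\odd{C}\setminus C$, the operator is $Z$.
  \item If $v\in C\cap\odd{C}$, the operator is $XZ=-iY$. This is $Y$ up to a phase, and the phase is absorbed into the global scalar.
  \item If $v$ lies in neither set, the operator is the identity.
\end{itemize}
These four cases give exactly the three support identities. The only point needing care is the $XZ\propto Y$ step: the scalar phases must be tracked into the irrelevant global factor rather than treated as changing the Pauli type.

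\emph{Openness.} By the identities just shown, $\suppX{K}\cup\suppY{K}=C$. The hypothesis $C\subseteq\comp{I}$ then gives $(\suppX{K}\cup\suppY{K})\cap I=C\cap I=\emptyset$. Therefore $K$ is an open graph stabiliser. Inputs may still lie in $\odd{C}$, but there $K$ acts only via $Z$, which the definition of an open graph stabiliser permits.

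No step here is a genuine obstacle; the only subtlety is the sign and phase bookkeeping in the $Y$ case.
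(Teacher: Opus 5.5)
Your proposal is correct and follows the same reasoning as the paper. You collect the Pauli factors qubit by qubit, reduce the $Z$ count modulo $2$ to obtain $\odd{C}$, read off the three supports, and observe that $\suppX{K}\cup\suppY{K}=C$ is disjoint from $I$. Your handling of the $XZ\propto Y$ phase is fine, since $\abs{C\cap\odd{C}}$ is even and so the overall scalar is in fact $\pm 1$, and you are right that the $Z_u$ in the statement should read $Z_w$.
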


\subsection{Simplifying the partial order}

Instead of defining a gflow as a tuple $(c,\prec)$, it is also possible to define a gflow using only the correction function $c$ and the relation that $c$ induces via conditions \conref{g1} and \conref{g2} \cite{broadbentParallelizingQuantumCircuits2009,perdrixDeterminismComputationalPower2017}.
Note that some of the literature on gflow defines the partial order $\prec$ only on the non-outputs, making use of the property that when outputs are included in the partial order, they are always maximal (see Observation~\ref{obs:inputs-minimal-outputs-maximal} below).
We do include outputs in the partial order, which means having to adapt some of the existing results from the literature; this does not cause any problems.

\begin{definition}[{adapted from \cite[Definition~2.11]{mitosekAlgebraicInterpretationPauli2026}}]\label{def:induced-order}
	Let $\Gamma = (G,I,O,\lambda)$ be a labelled open graph and let $c \colon \comp{O} \to \powerset(\comp{I})$ be a correction function on $\Gamma$.
	The \textit{induced relation} $\trl_c$ is the minimal relation on $V$ implied by \conref{g1} and \conref{g2}.
	That is, for all $v \in \comp{O}$ and $w \in V$, the relation $v \trl_c w$ holds if and only if at least one of the following is satisfied:
	\begin{itemize}
		\item $w \in c(v) \wedge v \ne w$ (corresponding to \conref{g1}),
		\item $w \in \odd{c(v)} \wedge v \ne w$ (corresponding to \conref{g2}).
	\end{itemize}
	We denote the transitive closure of $\trl_c$ by $\prec_c$ and call it the \emph{induced partial order}.
\end{definition}

The induced order has useful properties; in particular, it is the minimal order that makes $c$ into a gflow, in the sense of the following two results.

\begin{lemma}[{adapted from \cite[Lemma~2.13]{mitosekAlgebraicInterpretationPauli2026}}]\label{lem:minimalorder containment}
	Let $\Gamma = (G,I,O,\lambda)$ be a labelled open graph and let $(c,\prec)$ be a gflow.
	Then ${\prec_c} \subseteq {\prec}$.
\end{lemma}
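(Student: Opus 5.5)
The plan is to show ${\prec_c} \subseteq {\prec}$ in two steps: first that $\prec$ contains the one-step relation $\trl_c$, and then that $\prec$ also contains its transitive closure. Both steps follow directly from the definitions, since $\prec_c$ is built as the smallest transitive relation generated by \conref{g1} and \conref{g2}.

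\textbf{Step 1: ${\trl_c} \subseteq {\prec}$.} Take any pair with $v \trl_c w$. By Definition~\ref{def:induced-order}, $v \in \comp{O}$, $w \in V$, and one of two cases holds:
\begin{itemize}
\item $w \in c(v)$ and $v \ne w$. Since $(c,\prec)$ is a gflow, condition \conref{g1} gives $v \prec w$.
\item $w \in \odd{c(v)}$ and $v \ne w$. Condition \conref{g2} gives $v \prec w$.
\end{itemize}
In either case $v \prec w$, so every pair in $\trl_c$ lies in $\prec$.

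\textbf{Step 2: closure under transitivity.} Now take any pair with $v \prec_c w$. Because $\prec_c$ is the transitive closure of $\trl_c$, there is a finite chain
\[ v = u_0 \trl_c u_1 \trl_c \cdots \trl_c u_k = w \]
with $k \geq 1$. By Step 1, each link satisfies $u_{i} \prec u_{i+1}$. Since $\prec$ is a strict partial order, it is transitive, and induction on $k$ gives $v \prec w$. Equivalently, $\prec$ is a transitive relation containing $\trl_c$, so it contains the least such relation, which is $\prec_c$.

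\textbf{Possible subtlety.} Step 1 only applies \conref{g1} and \conref{g2}, whose hypotheses $v \in \comp{O}$ and $w \in V$ match the domain of $\trl_c$ exactly. So including outputs in the order causes no difficulty here.

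The only point needing care is whether $\prec_c$ is actually a strict partial order, which requires irreflexivity. That property is not needed for the containment, since it only uses transitivity of $\prec$. It does follow from the containment once a gflow exists: if $v \prec_c v$ held, then $v \prec v$, contradicting irreflexivity of $\prec$. I therefore expect no real obstacle; the argument is a direct unfolding of the definitions.
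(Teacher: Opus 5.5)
Your argument is correct and is the standard one: \conref{g1} and \conref{g2} give ${\trl_c} \subseteq {\prec}$ directly, and since $\prec$ is transitive while $\prec_c$ is the least transitive relation containing $\trl_c$, the containment ${\prec_c} \subseteq {\prec}$ follows. The paper does not reprove this lemma but adapts it from the cited reference, whose proof is this same unfolding of definitions; your side remark that irreflexivity of $\prec_c$ follows from that of $\prec$ is also correct.
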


\begin{theorem}[{adapted from \cite[Theorem~2.14]{mitosekAlgebraicInterpretationPauli2026}}]\label{thm:extending correction set to Pauli flow}
	Let $\Gamma = (G,I,O,\lambda)$ be a labelled open graph and let $c \colon \comp{O} \to \powerset(\comp{I})$ be a correction function on $\Gamma$.
	Then there exists $\prec$ such that $(c,\prec)$ is a gflow if and only if $(c, \prec_c)$ is a gflow.
\end{theorem}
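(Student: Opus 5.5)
The statement to prove is Theorem~\ref{thm:extending correction set to Pauli flow}: for a correction function $c$, some $\prec$ makes $(c,\prec)$ a gflow if and only if $(c,\prec_c)$ does. The plan is to prove the two directions separately, using Lemma~\ref{lem:minimalorder containment} for the non-trivial one.

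The reverse direction is immediate: if $(c,\prec_c)$ is a gflow, then $\prec_c$ itself is a suitable witness. The only point to check is that $\prec_c$ is a strict partial order. This is part of the hypothesis that $(c,\prec_c)$ is a gflow, since Definition~\ref{def:gflow} requires $\prec$ to be a strict partial order.

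For the forward direction, suppose $(c,\prec)$ is a gflow for some strict partial order $\prec$. I will show, in order:
\begin{enumerate}
  \item $\prec_c$ is a strict partial order on $V$;
  \item $(c,\prec_c)$ satisfies \conref{g1}--\conref{g5}.
\end{enumerate}

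For step 1, transitivity holds by construction, since $\prec_c$ is a transitive closure. Irreflexivity is the one step with content, because a transitive closure can in principle contain a pair $v\prec_c v$ arising from a cycle. Here I would invoke Lemma~\ref{lem:minimalorder containment}, which gives ${\prec_c}\subseteq{\prec}$. Then $v\prec_c v$ would imply $v\prec v$, contradicting the strictness of $\prec$. If one prefers not to rely on the lemma, the inclusion follows directly. Each generating pair $v\trl_c w$ satisfies $v\prec w$, because $(c,\prec)$ satisfies \conref{g1} and \conref{g2}. Since $\prec$ is transitive, the transitive closure of $\trl_c$ is therefore contained in $\prec$.

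For step 2, conditions \conref{g1} and \conref{g2} hold by Definition~\ref{def:induced-order}. Whenever $w\in c(v)$ or $w\in\odd{c(v)}$ with $w\neq v$, we have $v\trl_c w$, and hence $v\prec_c w$. Conditions \conref{g3}--\conref{g5} refer only to $c$ and $\ld$, not to the order. They are therefore inherited unchanged from $(c,\prec)$.

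The expected main obstacle is only the acyclicity (irreflexivity) check in step 1, and the inclusion ${\prec_c}\subseteq{\prec}$ handles it. Including outputs in the order causes no problems. Outputs never appear on the left of $\trl_c$, since $c$ is defined only on $\comp{O}$, so they introduce no new cycles.
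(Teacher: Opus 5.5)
Your proof is correct and is essentially the intended argument. The paper does not reprove this cited result; it places it directly after the containment lemma so that ${\prec_c}\subseteq{\prec}$ yields irreflexivity, with \conref{g1}--\conref{g2} holding by construction of $\trl_c$ and \conref{g3}--\conref{g5} not involving the order, which is exactly your route (including your direct derivation of the containment from \conref{g1}--\conref{g2} and transitivity of $\prec$).
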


Since $v \in c(v)$ for all $v$ with $\ld(v) \in \{ \YZ, \XZ \}$, the inputs must instead be measured in the $\XY$ plane or be outputs:

\begin{remark}\label{rem:inputs-XY}
    If a labelled open graph $(G,I,O,\ld)$ has gflow, then $\ld(i) = \XY$ for all $i \in I \setminus O$.
\end{remark}

The computation specified by the labelled open graph can be performed in a robustly deterministic way if and only if the labelled open graph has gflow \cite[Theorem~2 and Theorem~3]{browneGeneralizedFlowDeterminism2007}.
The idea of using gflow for the correction procedure is as follows: Suppose that, when measuring a non-output $v \in \comp{O}$, an undesired outcome is observed.
The corresponding Pauli matrix can be extended to an open graph stabiliser $K$ specified by $c(v)$ in the sense of Definition~\ref{def:induced stabiliser}: in particular, $\suppX{K} = c(v) \setminus \odd{c(v)}$, $\suppZ{K} = \odd{c(v)} \setminus c(v)$, and $\suppY{K} = c(v) \cap \odd{c(v)}$.

For example, if $\ld(v) = \YZ$, the undesired measurement outcome is equivalent to having an undesired Pauli operation $\X$ on $v$. Condition~\ref{g4} ensures that the local effect of the correcting stabiliser $K$ at $v$ is also $\X$. Thus, multiplying by $K$ cancels the $\X$ on the measured vertex and replaces it instead with what is called a `Pauli by-product' on the other vertices in the support of $K$. Conditions~\ref{g3} and~\ref{g5} ensure similar conditions for $\XY$-~and $\XZ$-measured vertices.
Finally, conditions~\ref{g1} and~\ref{g2} ensure that this correction procedure is causal: all vertices which receive a non-trivial Pauli by-product during the correction of $v$ must either be outputs or be measured strictly later than $v$.

\subsection{Focused gflow}

If a labelled open graph has gflow, it generally exhibits multiple different gflows with different correction functions.
One way of reducing the number of different options is called `focusing', which here means restricting the by-products of a correction on other measured vertices depending on the measurement labels of these other vertices.
Focused gflow is often particularly useful to work with (e.g.\ when considering flow-preserving rewrite rules for optimisation) because for balanced labelled open graphs, the focused correction function (if it exists) is unique \cite{mhallaWhichGraphStates2014a,simmonsRelatingMeasurementPatterns2021}.
In the absence of constraints on the partial order or on the existence of qubits at a certain `time', using focused gflow is without loss of generality \cite{mhallaWhichGraphStates2014a,backensThereBackAgain2021}.
The definition of focusing additionally gives rise to certain stabilisers with special properties generated by what are called `focused sets'; these will be important when we address how to balance (labelled) open graphs in Section~\ref{sec:balancing}.

\begin{definition}[{Adapted from \cite[Definition~4.3]{simmonsRelatingMeasurementPatterns2021}}]\label{def:focusing}
	Given a \LOG $\Gamma = (G,I,O,\ld)$, a set $\focusedset\sse\comp{I}$ is \emph{focused over} $S\sse\comp{O}$ if:
	\begin{enumerate}
		\item[\conlab{FX}] For all $w\in S\cap\focusedset$, we have $\ld(w) = \XY$.
		\item[\conlab{FZ}] For all $w\in S\cap\odd{\focusedset}$, we have $\ld(w)\in\{\XZ, \YZ\}$.
	\end{enumerate}
	A \emph{focused set} $\focusedset$ for $\Gamma$ is focused over $\comp{O}$.
	A gflow $(g,\prec)$ is \emph{focused} if $g(v)$ is focused over $\comp{O}\setminus\{v\}$ for all $v\in\comp{O}$.
\end{definition}

The focused sets of a \LOG with gflow form a vector space with dimension $\abs{O} - \abs{I}$.
This space is the kernel of the `flow-demand matrix' \cite[Theorem 3.23]{mitosekAlgebraicInterpretationPauli2026} in the algebraic formulation of gflow, which we present next because it gives rise to the most efficient known gflow-finding algorithm and will also motivate our labelling-finding algorithm in Section~\ref{sec:algorithm}.

\begin{definition}[Flow-demand matrix, adapted from {\cite[Definition~3.4]{mitosekAlgebraicInterpretationPauli2026}}]\label{def:flow-demand matrix}
    Let $\Gamma = (G,I,O,\lambda)$ be a labelled open graph. We define the \emph{flow-demand matrix} $M_{\Gamma}$ as the $(n-n_O) \times (n-n_I)$ matrix with rows corresponding to non-outputs $\comp{O}$ and columns corresponding to non-inputs $\comp{I}$, where the row $M_{v,*}$ corresponding to the vertex $v \in \comp{O}$ satisfies the following for any $w \in \comp{I} \setminus \{ v \}$:\begin{itemize}
        \item if $\lambda(v) = \XY$, then $M_{v,v} = 0$ (provided entry $M_{v,v}$ exists) and $M_{v,w} = Adj_{v,w}$ i.e.\ the $v$ row encodes the neighbourhood of $v$ and
        \item if $\lambda(v) \in \{YZ, XZ\}$, then $M_{v,v} = 1$ (provided entry $M_{v,v}$ exists) and $M_{v,w} = 0$, i.e.\ the $v$ row contains a $1$ at the intersection with the $v$ column and is identically $0$ otherwise.
    \end{itemize}
\end{definition}

\begin{definition}[Order-demand matrix, adapted from {\cite[Definition~3.5]{mitosekAlgebraicInterpretationPauli2026}}]\label{def:order-demand matrix}
     Let $\Gamma = (G,I,O,\lambda)$ be a labelled open graph. We define the \textit{order-demand matrix} $N_{\Gamma}$ as the $(n-n_O) \times (n-n_I)$ matrix with rows corresponding to non-outputs $\comp{O}$ and columns corresponding to non-inputs $\comp{I}$, where the row $N_{v,*}$ corresponding to the vertex $v \in \comp{O}$ satisfies the following for any $w \in \comp{I}\setminus \{ v \}$:\begin{itemize}
        \item if $\lambda(v) = YZ$, then $N_{v,v} = 0$ (provided entry $N_{v,v}$ exists) and $N_{v,w} = Adj_{v,w}$, i.e.\ the $v$ row encodes the neighbourhood of $v$,
        \item if $\lambda(v) = XZ$, then $N_{v,v} = 1$ (provided entry $N_{v,v}$ exists) and $N_{v,w} = Adj_{v,w}$, i.e.\ the $v$ row encodes the neighbourhood of $v$ and also has a $1$ at the intersection with the $v$ column, and
        \item if $\lambda(v) = XY$, then $N_{v,v} = 1$ (provided entry $N_{v,v}$ exists) and $N_{v,w} = 0$, i.e.\ the $v$ row contains a $1$ at the intersection with the $v$ column and is identically $0$ otherwise.
    \end{itemize}
\end{definition}

\begin{definition}[Correction matrix, adapted from {\cite[Definition~3.6]{mitosekAlgebraicInterpretationPauli2026}}]\label{def:correction-matrix}
    Let $\Gamma = (G,I,O,\lambda)$ be a labelled open graph and let $c$ be a correction function on $\Gamma$. We define the \textit{correction matrix $C$} encoding the function $c$ as the $(n-n_I) \times (n-n_O)$ matrix with rows corresponding to non-inputs $\comp{I}$ and columns corresponding to non-outputs $\comp{O}$, where $C_{u,v} = 1$ if and only if $u \in c(v)$.
\end{definition}

\begin{theorem}[Algebraic formulation of gflow, adapted from {\cite{mitosekAlgebraicInterpretationPauli2026}}]\label{th:algebraic formulation}
    Let $\Gamma = (G,I,O,\lambda)$ be a labelled open graph, $c$ be a correction function, $M$ be the flow-demand matrix of $\Gamma$, and $N$ be the order-demand matrix of $\Gamma$. Then $\Gamma$ has gflow if and only if there exists a correction matrix $C$ such that:\begin{itemize}
        \item $MC = Id_{\comp{O}}$,
        \item $NC$ is the adjacency matrix of a directed acyclic graph.
    \end{itemize}
\end{theorem}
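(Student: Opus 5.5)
The plan is to verify that the two matrix conditions are exactly the gflow conditions \conref{g1}--\conref{g5}, read entrywise through the correction matrix $C$, with Theorem~\ref{thm:extending correction set to Pauli flow} supplying the partial order. We identify a correction function $c$ with its matrix $C$ (Definition~\ref{def:correction-matrix}), so the $v$-th column of $C$ is the indicator vector of $c(v)\subseteq\comp{I}$. Over $\mathbb{F}_2$, for a row $u\in\comp{O}$ and a column $v\in\comp{O}$, we compute $(MC)_{u,v}$ and $(NC)_{u,v}$ by cases on $\ld(u)$.

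\emph{Entries of $MC$.} If $\ld(u)=\XY$, the $u$ row of $M$ is the adjacency row of $u$ with a zero diagonal. Hence $(MC)_{u,v}=|N(u)\cap c(v)| \bmod 2$, which is $[u\in\odd{c(v)}]$. If $\ld(u)\in\{\XZ,\YZ\}$, the row is the indicator of $u$, so $(MC)_{u,v}=[u\in c(v)]$. When $u\in I$ the diagonal entry does not exist and the row is zero, so $(MC)_{u,u}=0$.

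\emph{Entries of $NC$.} In the same way, $(NC)_{u,v}$ equals
\begin{itemize}
\item $[u\in\odd{c(v)}]$ if $\ld(u)=\YZ$,
\item $[u\in\odd{c(v)}]+[u\in c(v)]$ if $\ld(u)=\XZ$,
\item $[u\in c(v)]$ if $\ld(u)=\XY$.
\end{itemize}

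\emph{Diagonal entries.} Requiring $(MC)_{v,v}=1$ and $(NC)_{v,v}=0$ gives, label by label:
\begin{itemize}
\item for $\XY$: $v\in\odd{c(v)}$ and $v\notin c(v)$, which is \conref{g3};
\item for $\YZ$: $v\in c(v)$ and $v\notin\odd{c(v)}$, which is \conref{g4};
\item for $\XZ$: $v\in c(v)$, and then $NC_{v,v}=0$ forces $v\in\odd{c(v)}$, which is \conref{g5}.
\end{itemize}
Inputs are covered by this: the diagonal condition for $v\in I\setminus O$ fails unless $\ld(v)=\XY$ and $I\cap c(v)=\emptyset$, which matches Remark~\ref{rem:inputs-XY}. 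The diagonal of a DAG adjacency matrix must be zero, so this use of $(NC)_{v,v}=0$ is legitimate.

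\emph{Off-diagonal entries ($u\ne v$).} The condition $(MC)_{u,v}=0$ says that $u\notin\odd{c(v)}$ if $u$ is $\XY$, and that $u\notin c(v)$ otherwise. This is precisely the focusing condition. So the matrix conditions in fact characterise focused correction functions. The proof therefore needs the standard fact that gflow implies focused gflow \cite{mhallaWhichGraphStates2014a,backensThereBackAgain2021}; we use this to reduce the ``only if'' direction to a focused $c$.

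For a focused $c$ and $u\ne v$, the combination of the one nonzero indicator with focusing gives $(NC)_{u,v}=1$ exactly when $u\in c(v)\cup\odd{c(v)}$. For example, if $u$ is $\XZ$, then $u\notin c(v)$, so $(NC)_{u,v}=[u\in\odd{c(v)}]$. Thus, restricted to non-outputs, the directed graph with adjacency matrix $(NC)^{T}$ is exactly the relation $\trl_c$ of Definition~\ref{def:induced-order}. Vertices outside $\comp{O}$ are outputs, which have no outgoing edges and so never create cycles.

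\emph{Conclusion.} Acyclicity of $NC$ is then equivalent to the transitive closure $\prec_c$ being a strict partial order.
\begin{itemize}
\item For ``if'': the matrix conditions give \conref{g3}--\conref{g5}, and $\prec_c$ is a strict partial order satisfying \conref{g1} and \conref{g2} by construction, so $(c,\prec_c)$ is a gflow.
\item For ``only if'': take a focused gflow $(c,\prec)$. By Lemma~\ref{lem:minimalorder containment}, $\prec_c\subseteq\prec$, so $\trl_c$ is acyclic. Hence $NC$ is a DAG, and $MC=Id$ follows from \conref{g3}--\conref{g5} together with focusing.
\end{itemize}

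The main obstacle is the off-diagonal bookkeeping: establishing that $MC=Id$ off the diagonal is exactly focusing, and that under focusing the nonzero entries of $NC$ coincide with $\trl_c$. This includes handling the $\XZ$ case, where $c(v)$ and $\odd{c(v)}$ would otherwise cancel mod 2, and the transpose or orientation convention.
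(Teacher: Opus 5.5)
Your proof is correct. The paper does not prove this theorem itself: it imports it from the cited work on the algebraic interpretation of Pauli flow, where the corresponding statement is established in the more general Pauli-flow setting. So there is no in-paper argument to compare against.

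Your route specialises that argument naturally to the three planar labels. You compute $(MC)_{u,v}$ and $(NC)_{u,v}$ entrywise by the label of $u$, and observe that $MC=Id_{\comp{O}}$ is exactly \conref{g3}--\conref{g5} on the diagonal plus the focusing conditions off the diagonal. You then show that, for such a $c$, $(NC)_{u,v}=1$ iff $v\trl_c u$ for $u\neq v$. This agrees with the paper's own remark that $NC$ is exactly the matrix of $\trl_c$ restricted to $\comp{O}$ for a focused gflow. Your version also makes explicit what the paper's statement leaves implicit: the matrix conditions characterise \emph{focused} correction functions. Hence Proposition~\ref{prop:focused} is the one external ingredient needed for the ``only if'' direction, while the ``if'' direction derives focusing from $MC=Id_{\comp{O}}$.

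There are three small points. First, you never actually need Theorem~\ref{thm:extending correction set to Pauli flow}, since in the ``if'' direction you verify directly that $(c,\prec_c)$ is a gflow. Second, in the ``only if'' direction, Lemma~\ref{lem:minimalorder containment} can be replaced by the immediate observation that $\trl_c\subseteq{\prec}$ by \conref{g1} and \conref{g2}. Third, the clause ``$I\cap c(v)=\emptyset$'' in your treatment of inputs is vacuous, because the rows of $C$ are indexed by $\comp{I}$.
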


For an example, see Figure~\ref{fig:uniqueness-example}.

\begin{figure}
    \centering

    \begin{subfigure}[c]{0.36\textwidth}
        \centering
        \begin{tikzpicture}
	\begin{pgfonlayer}{nodelayer}
		\node [style=ZH H] (18) at (-3, 0) {};
        \node [style=ZH H] (19) at (2, 0) {};
		\node [style=GR Tiny Black] (0) at (-3, 0) {};   
		\node [style=GR Tiny Black] (1) at (-1, 2) {};   
		\node [style=GR Tiny Black] (2) at (-1, -2) {};  
		\node [style=GR Tiny Black] (3) at (0, 0) {};    
		\node [style=GR Tiny Black] (4) at (2, 0) {};    
		\node [style=GR Tiny Empty] (5) at (4, 1) {};    
		\node [style=GR Tiny Empty] (6) at (4, -2) {};   
		\node [style=BlackTEXT] (7) at (-3, -0.6) {$i^{\XY}$};
		\node [style=BlackTEXT] (8) at (-1, 2.5) {$a^{\XZ}$};
		\node [style=BlackTEXT] (9) at (-1, -2.5) {$b^{\XY}$};
		\node [style=BlackTEXT] (10) at (0.2, -0.5) {$c^{\YZ}$};
		\node [style=BlackTEXT] (11) at (2, 0.6) {$j^{\XY}$};
		\node [style=BlackTEXT] (12) at (4, 1.5) {$o_1$};
		\node [style=BlackTEXT] (13) at (4, -2.5) {$o_2$};
		\node [style=none] (14) at (-4.5, 0) {};
		\node [style=none] (15) at (4.5, 0) {};
		\node [style=none] (16) at (0, 3.5) {};
		\node [style=none] (17) at (0, -3.5) {};
	\end{pgfonlayer}
	\begin{pgfonlayer}{edgelayer}
		\draw (0) to (1); 
		\draw (0) to (2); 
		\draw (1) to (2); 
		\draw (1) to (3); 
		\draw (1) to (4); 
		\draw (2) to (3); 
		\draw (2) to (4); 
		\draw (3) to (4); 
		\draw (4) to (5); 
		\draw (2) to (6); 
	\end{pgfonlayer}
\end{tikzpicture}
        \caption{The labelled open graph.}
        \label{fig:uniqueness-example-graph}
    \end{subfigure}
    \hfill
    \begin{subfigure}[c]{0.33\textwidth}
        \centering
        \[
        \scalemath{0.85}{
        \begin{pNiceArray}{cc:ccc}[first-row, first-col]
                & i & j & a & b & c \\
            a   & 0 & 0 & 1 & 0 & 0 \\
            b   & 1 & 0 & 1 & 0 & 0 \\
            c   & 0 & 0 & 0 & 0 & 1 \\
            \hdottedline
            o_1 & 1 & 1 & 0 & 0 & 1 \\
            o_2 & 0 & 0 & 1 & 1 & 1
        \end{pNiceArray}}
        \]
        \caption{The correction matrix.}
        \label{fig:uniqueness-example-correction}
    \end{subfigure}
    \hfill
    \begin{subfigure}[c]{0.28\textwidth}
        \centering
        \[
        \scalemath{0.85}{
        \begin{pNiceArray}{cc:ccc}[first-row, first-col]
                & i & j & a & b & c \\
            i   & 0 & 0 & 0 & 0 & 0 \\
            j   & 0 & 0 & 0 & 0 & 0 \\
            \hdottedline
            a   & 1 & 0 & 0 & 0 & 1 \\
            b   & 1 & 0 & 1 & 0 & 0 \\
            c   & 1 & 0 & 0 & 0 & 0
        \end{pNiceArray}}
        \]

        \caption{The induced relation matrix.}
        \label{fig:uniqueness-example-order}
    \end{subfigure}

    \caption{\hyperref[fig:uniqueness-example-graph]{(\subref*{fig:uniqueness-example-graph})} An open graph with $I = \{ i, j \}$ and $O = \{ o_1, o_2 \}$, together with a measurement labelling resulting in gflow; \hyperref[fig:uniqueness-example-correction]{(\subref*{fig:uniqueness-example-correction})} a correction matrix corresponding to focused gflow; \hyperref[fig:uniqueness-example-order]{(\subref*{fig:uniqueness-example-order})} and a matrix of the corresponding induced relation restricted to non-outputs.}
    \label{fig:uniqueness-example}
\end{figure}

\begin{proposition}[{\cite[Prop.~3.14]{backensThereBackAgain2021}}]\label{prop:focused}
	A \LOG $(G,I,O,\ld)$ has gflow if and only if it has focused gflow.
\end{proposition}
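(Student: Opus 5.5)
The backward direction is immediate, since a focused gflow is in particular a gflow. For the forward direction I would start from an arbitrary gflow $(c,\prec)$ and rewrite it one correction set at a time, processing the non-outputs in reverse order of a linear extension of $\prec$. The invariant is that, once $v$ has been processed, $c(v)$ is focused over $\comp{O}\setminus\{v\}$ and $(c,\prec)$ is still a gflow. The basic move is to replace $c(v)$ by a symmetric difference $c(v)\,\Delta\,c(w)$ with $v\prec w$, or by a sum of several such sets. Correction sets are subsets of $\comp{I}$ and odd neighbourhoods are linear over $\mathbb{F}_2$, so $\odd{c(v)\Delta c(w)}=\odd{c(v)}\Delta\odd{c(w)}$.

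To focus $c(v)$, I would look for violations of \conref{FX} and \conref{FZ} at vertices $w\neq v$ in $c(v)\cup\odd{c(v)}$. Such a $w$ lies in $\comp{O}$ and, by \conref{g1} and \conref{g2}, satisfies $v\prec w$. I fix the violations in increasing $\prec$-order (a linear extension), taking at each step the $\prec$-minimal violating $w$.
\begin{itemize}
\item If $\ld(w)=\XY$ and $w\in\odd{c(v)}$, then \conref{g3} gives $w\in\odd{c(w)}$ and $w\notin c(w)$. Replacing $c(v)$ by $c(v)\Delta c(w)$ removes $w$ from the odd neighbourhood and keeps $w$ out of the set.
\item If $\ld(w)=\YZ$ and $w\in c(v)$, then \conref{g4} gives $w\in c(w)$ and $w\notin\odd{c(w)}$. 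The same replacement removes $w$ from the set without adding it to the odd neighbourhood.
\item If $\ld(w)=\XZ$, a violation means $w$ lies in exactly one of $c(v)$ and $\odd{c(v)}$. Since \conref{g5} gives $w\in c(w)\cap\odd{c(w)}$, the replacement toggles both memberships. The result is either that $w$ disappears from both, or that it ends up in both, which is still a violation of \conref{FX}. So I need to check carefully which memberships count as violations in the $\XZ$ case.
\end{itemize}
The definition says an $\XZ$ vertex may lie in the odd neighbourhood but not in the set. Hence $w\in c(v)\setminus\odd{c(v)}$ toggles to $w\in\odd{c(v)}\setminus c(v)$, which is allowed. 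And $w\in c(v)\cap\odd{c(v)}$ toggles to $w\notin c(v)\cup\odd{c(v)}$, which is also allowed. So every violation can be fixed by a single toggle.

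Each toggle by $c(w)$ only changes membership at $w$ itself and at vertices $u$ with $w\prec u$, because of \conref{g1} and \conref{g2} for $c(w)$. Therefore earlier fixed vertices, which precede or are incomparable-earlier in the linear extension, are never disturbed. The gflow conditions for $v$ also survive. \conref{g3}–\conref{g5} at $v$ are unchanged, since $v\notin c(w)\cup\odd{c(w)}$: $v\prec w$ and the order is strict. \conref{g1} and \conref{g2} remain valid with the same $\prec$ by transitivity, since new elements $u$ satisfy $v\prec w\preceq u$.

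The process terminates because the position of the minimal violating vertex in the linear extension strictly increases. Processing every $v$ then yields a focused gflow with the same order $\prec$. The subtle step is the $\XZ$ case analysis together with the claim that toggles never reintroduce violations at earlier vertices. Both rest on correction sets of later vertices only affecting strictly later vertices, plus the vertex itself.
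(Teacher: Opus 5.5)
Your proof is correct and is essentially the standard stepwise focusing argument of the cited source \cite[Section~3.3]{backensThereBackAgain2021}, which the paper invokes rather than reproving: repeatedly replace $c(v)$ by $c(v)\symd c(w)$ for the earliest violating $w\succ v$, using that $c(w)\cup\odd{c(w)}$ only touches $w$ and its strict successors. One clean-up: in the $\XZ$ case a violation is exactly $w\in c(v)$, not ``$w$ lies in exactly one of $c(v)$ and $\odd{c(v)}$''; your final sub-case analysis already uses the correct condition, so the first formulation should simply be deleted.
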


\begin{observation}\label{obs:inputs-minimal-outputs-maximal}
	Let $\Gamma = (G,I,O,\lambda)$ be a labelled open graph with gflow $(c,\prec_c)$ where $c \colon \comp{O} \to \powerset(\comp{I})$ is a correction function on $\Gamma$ and $\prec_c$ is the corresponding induced order.
	Then all outputs are maximal in $\prec_c$: outputs have no correction sets and thus do not appear on the left-hand side of $\trl_c$, so they do not appear on the left-hand side of the transitive closure of $\trl_c$ (\ie $\prec_c$).
	
	Moreover, if $c$ is focused, then all inputs that are not outputs are minimal in $\prec_c$.
	To see this, recall that inputs can only be $\XY$-measured by Remark~\ref{rem:inputs-XY}.
	Now, inputs do not appear in correction sets by the definition of a correction function, and $\XY$-measurements do not appear in odd neighbourhoods of correction sets (other than their own) by focusing.
	Hence inputs cannot appear on the right-hand side of $\trl_c$ unless they are also outputs.
\end{observation}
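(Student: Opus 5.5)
The statement is Observation~\ref{obs:inputs-minimal-outputs-maximal}, and both of its claims come down to reading off when a vertex can occur on the left or right of $\trl_c$. Since $\prec_c$ is the transitive closure of $\trl_c$, a pair $u \prec_c w$ always begins with a step $u \trl_c x$ and ends with a step $y \trl_c w$. So it is enough to show two things. First, an output never occurs on the left of $\trl_c$. Second, when $c$ is focused, an input that is not an output never occurs on the right of $\trl_c$.

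For maximality of outputs, we use Definition~\ref{def:induced-order}. There, $v \trl_c w$ is only defined for $v \in \comp{O}$, because $c$ has domain $\comp{O}$. Hence no output $o$ satisfies $o \trl_c x$ for any $x$. It follows that no chain witnessing $o \prec_c w$ can start at $o$, so $o$ is maximal.

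For minimality of inputs under focusing, fix $i \in I \setminus O$ and suppose $v \trl_c i$ for some $v \in \comp{O}$ with $v \neq i$. We rule out the two clauses of Definition~\ref{def:induced-order} in turn.

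The first clause is $i \in c(v)$. This cannot hold, because $c(v) \subseteq \comp{I}$.

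The second clause is $i \in \odd{c(v)}$. Since $(c,\prec_c)$ is a gflow, Remark~\ref{rem:inputs-XY} gives $\ld(i) = \XY$. Focusing (Definition~\ref{def:focusing}) says that $c(v)$ is focused over $\comp{O}\setminus\{v\}$, and $i$ lies in that set. Condition \conref{FZ} then forces $\ld(i) \in \{\XZ,\YZ\}$, which is a contradiction.

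So $i$ never appears on the right of $\trl_c$, and therefore nothing is strictly below $i$ in $\prec_c$. The only point requiring care is to check that $i \in \comp{O}\setminus\{v\}$ so that \conref{FZ} applies; this holds because $i \notin O$ and $i \neq v$. There is no real obstacle beyond that.
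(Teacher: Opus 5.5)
Your proof is correct and follows the paper's own argument essentially verbatim. Outputs never occur on the left of $\trl_c$ because $c$ is defined only on $\comp{O}$. A non-output input $i$ never occurs on the right of $\trl_c$: it cannot lie in $c(v)\subseteq\comp{I}$, and it cannot lie in $\odd{c(v)}$ because $i\in\comp{O}\setminus\{v\}$, so \conref{FZ} together with Remark~\ref{rem:inputs-XY} rules this out. Your only addition is making explicit that every $\prec_c$-chain begins and ends with a $\trl_c$ step, which the paper leaves implicit.
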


\begin{remark}
	As a consequence of Proposition~\ref{prop:focused} and Observation~\ref{obs:inputs-minimal-outputs-maximal}, if a partial order is not fixed, we may assume without loss of generality that inputs which are not outputs are minimal, and that outputs are maximal.
\end{remark}

The above properties of the partial order also follow directly from the algebraic formulation:

\begin{remark}
    Let $(G,I,O,\ld)$ be a \LOG, $M$ the corresponding flow-demand matrix, $N$ the corresponding order-demand matrix, $(c,\prec)$ a focused gflow and $C$ the correction matrix corresponding to $c$. Then the product $NC$ is exactly the matrix of $\trl_c$ restricted to $\comp{O}$, that is $\forall v,w \in \comp{O}$ we have $\left(NC\right)_{v,w} = 1 \Leftrightarrow w \trl_c v$.
\end{remark}

\subsection{Flow preservation}

Some of our proofs invoke flow-preserving rewrite rules: these are transformations that modify the \LOG in such a way that if the \LOG prior to the transformation had gflow, then so does the \LOG afterwards.
These rules are interesting in their own right, particularly for their applications in optimising quantum computations; see for instance \cite{duncanGraphtheoreticSimplificationQuantum2020,backensThereBackAgain2021,staudacherReducing2QuBitGate2023,mcelvanneyCompleteFlowPreservingRewrite2023,mcelvanneyFlowpreservingZXcalculusRewrite2023,mitosekAlgebraicInterpretationPauli2026,perezBackens2025,backensCompleteness2026}.
Here, we include only the few needed for our proofs.

\begin{theorem}[Local complementation; adapted from {\cite[Definition~2.24 and~Lemmata~3.1 and~3.2]{backensThereBackAgain2021}}]\label{th:LC}
    Let $(G,I,O,\ld)$ be a \LOG with gflow and let $u \in \comp{I}$. Then $(G*u,I,O,\ld')$ has gflow where:\begin{equation*}
        G*u = (V(G),E(G) \symd \{ (b,c) \mid (b,u), (c,u) \in E(G) \wedge b \ne c \}),
    \end{equation*}
    and $\ld'(u)$ is defined provided $u \notin O$ as follows:
    \begin{equation*}
        \ld'(u) = \begin{cases}
            \XZ &\text{if } \ld(u)=\XY \\
            \XY &\text{if } \ld(u)=\XZ \\
            \YZ &\text{if } \ld(u)=\YZ
        \end{cases}
    \end{equation*}
    and for $v \in \comp{O} \setminus \{ u \}$:\begin{equation*}
        \ld'(v) = \begin{cases}
            \YZ &\text{if } v \in N_G(u) \wedge \ld(v)=\XZ \\
            \XZ &\text{if } v \in N_G(u) \wedge \ld(v)=\YZ \\
            \ld(v) &\text{otherwise}.
        \end{cases}
    \end{equation*}
\end{theorem}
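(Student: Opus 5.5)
Fix a gflow $(c,\prec)$ on $\Gamma=(G,I,O,\ld)$ and $u\in\comp{I}$. I will write down an explicit correction function $c'$ for $\Gamma'=(G*u,I,O,\ld')$ and show that $(c',\prec')$ is a gflow, where $\prec'$ is a slight enlargement of $\prec$. By Theorem~\ref{thm:extending correction set to Pauli flow} it is enough to exhibit $c'$ together with some strict partial order satisfying \conref{g1}--\conref{g5}.

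\textbf{Tool.} Write $\mathrm{Odd}'$ for odd neighbourhoods in $G*u$. Local complementation at $u$ toggles adjacency between pairs of neighbours of $u$. Hence for any $A\subseteq V$,
\[
\mathrm{Odd}'(A)=\odd{A}\symd\bigl(N_G(u)\cap\text{[correction term]}\bigr),
\]
where the correction term depends on $|A\cap N_G(u)|$ and on whether $A$ meets $N_G(u)$ at a given vertex. Concretely, $w\in N_G(u)$ has its parity flipped exactly when $|A\cap N_G(u)\setminus\{w\}|$ is odd. This is the standard computation, and I would record it first as a small lemma.

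\textbf{Definition of $c'$.} For each $v\in\comp{O}$ set
\[
c'(v)=\begin{cases}
c(v) & \text{if } u\notin\odd{c(v)},\\
c(v)\symd\{u\} & \text{if } u\in\odd{c(v)} \text{ and } v\neq u,
\end{cases}
\]
with an additional adjustment when $v=u$ (possibly adding or removing $u$ itself). The guiding idea is that adding $u$ to a correction set adds $N_G(u)$ to its odd neighbourhood, and this exactly cancels the toggling produced by local complementation. The stabiliser induced by $c'(v)$ in $G*u$ is then the image of the stabiliser induced by $c(v)$ under the local Clifford associated with the complementation. Since $u\in\comp{I}$, the set $c'(v)$ still lies in $\comp{I}$.

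\textbf{Checking the conditions.}
\begin{enumerate}
\item \emph{Labels at $v\neq u$.} Using the lemma, I will determine whether $v\in c'(v)$ and whether $v\in\mathrm{Odd}'(c'(v))$.
\begin{itemize}
\item For $v\notin N_G(u)$ both memberships are unchanged, matching $\ld'(v)=\ld(v)$.
\item For $v\in N_G(u)$, membership in $c'(v)$ is unchanged. Membership in the odd neighbourhood flips exactly when $v\in c'(v)$. Thus $\XZ$ and $\YZ$ swap and $\XY$ is preserved, which is precisely $\ld'$.
\end{itemize}
\item \emph{Labels at $v=u$.} Here $N_{G*u}(u)=N_G(u)$, and the rule $\XY\leftrightarrow\XZ$, $\YZ\mapsto\YZ$ should follow by the same parity bookkeeping once the case $v=u$ of $c'$ is chosen appropriately.
\item \emph{Order conditions \conref{g1} and \conref{g2}.} Every new element of $c'(v)$ or of $\mathrm{Odd}'(c'(v))$ lies in $\{u\}\cup N_G(u)$. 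Such an element is added only when $u\in\odd{c(v)}$, so $v\prec u$ in that case. For the remaining new elements $w\in N_G(u)$, I would need $v\prec w$.
\end{enumerate}

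\textbf{Main obstacle: the order.} The difficulty is exactly this last point: showing $v\prec w$ for the new $w\in N_G(u)$. We may not have $u\prec w$, since $w$ may precede $u$. My first attempt would be to pass to a focused gflow via Proposition~\ref{prop:focused}, which rigidifies which neighbours of $u$ can lie in $c(v)\cup\odd{c(v)}$. If that fails, I would instead define $\prec'$ directly as the transitive closure of the relation induced by $c'$. I would then prove acyclicity by showing that any cycle in the relation induced by $c'$ maps to a cycle in $\prec$, possibly passing through $u$, which is impossible because $\prec$ is a strict partial order. This acyclicity argument is where I expect most of the case analysis to lie.
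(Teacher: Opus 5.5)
Your choice of $c'(v)$ for $v\neq u$ and your label check for $v\neq u$ are right. The proof is incomplete, though, at exactly the point you call the main obstacle: you never verify \conref{g1}--\conref{g2} for $c'$. Neither fallback you sketch (focusing, or an unspecified acyclicity argument for the closure of the relation induced by $c'$) is carried out, and the case $v=u$ is left open.

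The missing step is to finish your own parity computation. For $w\in N_G(u)$, the parity of $|A\cap N_G(u)\setminus\{w\}|$ is whether $u\in\mathrm{Odd}_G(A)$ plus whether $w\in A$. So the flip set is $N_G(u)\setminus A$ if $u\in\mathrm{Odd}_G(A)$, and $N_G(u)\cap A$ otherwise. Toggling $u$ adds $N_G(u)$ to $\mathrm{Odd}_G$ without changing whether $u\in\mathrm{Odd}_G$. Hence both branches of your definition give
\begin{equation*}
    \mathrm{Odd}_{G*u}\bigl(c'(v)\bigr)=\mathrm{Odd}_G\bigl(c(v)\bigr)\symd\bigl(N_G(u)\cap c(v)\bigr).
\end{equation*}
Every $w\in N_G(u)$ that newly enters the odd neighbourhood therefore already lies in $c(v)$, so $v\prec w$ by \conref{g1} for the \emph{original} gflow. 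Likewise $u$ enters $c'(v)$ only if $u\in\mathrm{Odd}_G(c(v))$, so $v\prec u$ by \conref{g2}. Thus $c'(v)\cup\mathrm{Odd}_{G*u}(c'(v))=c(v)\cup\mathrm{Odd}_G(c(v))$, and $(c',\prec)$ is a gflow with $\prec$ unchanged.

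You never need to route through $u$, so the worry that $w$ may precede $u$ is beside the point. Neither Proposition~\ref{prop:focused} nor any cycle analysis is needed. This is just your stabiliser remark made precise: conjugation by a local Clifford preserves supports, and the stabiliser induced by $C$ has support $C\cup\mathrm{Odd}(C)$. The paper itself does not reprove the statement; it imports it from Backens et al., where the gflow is likewise transported with the order left unchanged.

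For $v=u$ no special adjustment is needed: use the same rule, toggling $u$ iff $u\in\mathrm{Odd}_G(c(u))$. Since $u\notin N_G(u)$ and $N_{G*u}(u)=N_G(u)$, neither the toggle nor the complementation changes whether $u$ lies in the odd neighbourhood. If $\ld(u)\in\{\XY,\XZ\}$, then $u\in\mathrm{Odd}_G(c(u))$, so $u$'s membership in the correction set flips while it stays in the odd neighbourhood. This yields $\XZ$ and $\XY$ respectively. If $\ld(u)=\YZ$, then $c'(u)=c(u)$ and $u$ stays in the set but outside its odd neighbourhood. The order conditions at $u$ follow from the same support identity.
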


\begin{lemma}[Removal of $\XZ$ and $\YZ$ labelled vertices, adapted from {\cite[Lemma~3.4]{backensThereBackAgain2021}}]\label{lemma:Z-like removal}
    Let $(G,I,O,\ld)$ be a \LOG with gflow and let $u \in \comp{O}$ with $\ld(u) \in \{ \XZ, \YZ)$. Then $(G-u,I,O,\ld\rvert_{\comp{O}\setminus\{u\}})$ has gflow where $G-v$ is graph $G$ with vertex $v$ removed.
\end{lemma}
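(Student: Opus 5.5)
We build a gflow on $G-u$ directly from a gflow $(c,\prec)$ of $(G,I,O,\ld)$. The idea is to cancel $u$ out of every correction set that contains it, using the correction set of $u$ itself. Since $\ld(u)\in\{\XZ,\YZ\}$, conditions \conref{g4} and \conref{g5} give $u\in c(u)$. For each $v\in\comp{O}\setminus\{u\}$ define
\[
c'(v)=\begin{cases} c(v) & \text{if } u\notin c(v),\\ c(v)\symd c(u) & \text{if } u\in c(v).\end{cases}
\]
In both cases $u\notin c'(v)$. Moreover $c'(v)\subseteq\comp{I}$, because both $c(v)$ and $c(u)$ are sets of non-inputs. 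So $c'$ is a correction function on $(G-u,I,O,\ld\rvert_{\comp{O}\setminus\{u\}})$.

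Next we relate odd neighbourhoods in the two graphs. For any $A\subseteq V\setminus\{u\}$, deleting $u$ does not change how many neighbours in $A$ any vertex $w\neq u$ has, so $\mathrm{Odd}_{G-u}(A)=\mathrm{Odd}_G(A)\setminus\{u\}$. Odd neighbourhoods are linear over $\symd$, so in the second case
\[
\mathrm{Odd}_G(c'(v))=\mathrm{Odd}_G(c(v))\symd\mathrm{Odd}_G(c(u)).
\]
As the new partial order we take the restriction of $\prec$ to $V\setminus\{u\}$.

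\emph{Conditions \conref{g1} and \conref{g2}.} Fix $v$ with $u\in c(v)$ and take $w\neq v$ in $c'(v)$ or in $\mathrm{Odd}_{G-u}(c'(v))$. Then $w\neq u$, and either $w$ lies in $c(v)\cup\mathrm{Odd}_G(c(v))$, or it lies in $c(u)\cup\mathrm{Odd}_G(c(u))$.
\begin{itemize}
\item In the first case, $v\prec w$ directly by \conref{g1} and \conref{g2} for $c$.
\item In the second case, \conref{g1} for $c$ applied to $u\in c(v)$ gives $v\prec u$. Since $w\neq u$, \conref{g1} and \conref{g2} for $c(u)$ give $u\prec w$. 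Transitivity yields $v\prec w$.
\end{itemize}
If $u\notin c(v)$, then $c'(v)=c(v)$ and these conditions are inherited unchanged.

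\emph{Conditions \conref{g3}--\conref{g5}.} This is the step needing care: we must check that symmetric difference with $c(u)$ does not change whether $v$ lies in its own correction set or in its odd neighbourhood. Suppose $u\in c(v)$, so $v\prec u$. If $v$ belonged to $c(u)$ or to $\mathrm{Odd}_G(c(u))$, then since $v\neq u$, conditions \conref{g1} and \conref{g2} would give $u\prec v$. This contradicts $v\prec u$, because $\prec$ is a strict partial order. Hence $v\notin c(u)\cup\mathrm{Odd}_G(c(u))$. It follows that $v\in c'(v)$ if and only if $v\in c(v)$. Likewise, since $v\neq u$, $v\in\mathrm{Odd}_{G-u}(c'(v))$ if and only if $v\in\mathrm{Odd}_G(c(v))$. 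The labels of all $v\neq u$ are unchanged, so \conref{g3}--\conref{g5} carry over from $c$ to $c'$.

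Therefore $(c',\prec\rvert_{V\setminus\{u\}})$ is a gflow for $(G-u,I,O,\ld\rvert_{\comp{O}\setminus\{u\}})$.
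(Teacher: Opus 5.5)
Your proof is correct and complete. The paper gives no proof of this lemma and imports it from \cite{backensThereBackAgain2021}; your construction (replace $c(v)$ by $c(v)\symd c(u)$ whenever $u\in c(v)$, use $v\prec u$ to show $v\notin c(u)\cup\odd{c(u)}$, and restrict $\prec$) is the standard argument for it, and the only point left implicit is that $u\notin I$, which follows from $u\in c(u)\subseteq\comp{I}$ and ensures $(G-u,I,O)$ is a well-formed open graph.
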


\begin{observation}[Insertion of $\XZ$ or $\YZ$ labelled vertex, adapted from {\cite[Observation~4.8 and Theorems~4.3 and~5.2]{perezBackens2025}}]\label{obs:Z-like insertion}
    Let $(G,I,O,\ld)$ be a \LOG with gflow where $\abs{I}=\abs{O}$ and let $S \subseteq V$ be a subset of vertices. Then at most one of $(G',I,O,\ld')$ and $(G',I,O,\ld'')$ has gflow where:\begin{equation*}
        G' = (V\sqcup\{v_S\}, E(G) \cup \{ v_Su \mid u \in S \})
    \end{equation*}
    and the new measurement labellings are:
    \begin{equation*}
        \ld'(u) = \begin{cases}
            \ld(u) &\text{if } u\in V\setminus O \\
            \XZ &\text{if } u = v_S
        \end{cases}
        \qquad\text{and}\qquad
        \ld''(u) = \begin{cases}
            \ld(u) &\text{if } u\in V\setminus O \\
            \YZ &\text{if } u = v_S.
        \end{cases}
    \end{equation*}
\end{observation}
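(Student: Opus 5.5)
The plan is to prove this with the algebraic formulation of gflow (Theorem~\ref{th:algebraic formulation}) and no graph-theoretic reasoning. Write $\Gamma' = (G',I,O,\ld')$ and $\Gamma'' = (G',I,O,\ld'')$, and suppose for contradiction that both have gflow. The new vertex $v_S$ is neither an input nor an output, and $\abs{I}=\abs{O}$. So both flow-demand matrices are square, of size $(n+1-\abs{O}) \times (n+1-\abs{I})$, with rows indexed by $V'\setminus O$ and columns by $V'\setminus I$. Also, $v_S$ indexes both a row and a column, so the diagonal entries at $v_S$ exist in every matrix considered below.

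The first step is to observe that $M_{\Gamma'} = M_{\Gamma''}$. By Definition~\ref{def:flow-demand matrix}, the row of an $\XZ$-labelled vertex and the row of a $\YZ$-labelled vertex are both the unit vector at that vertex. So the $v_S$ rows coincide. Every other row depends only on the label of its vertex and on $G'$, and these are shared by the two labelled open graphs.

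Call this common matrix $M$. Since both $\Gamma'$ and $\Gamma''$ have gflow, there are correction matrices $C'$ and $C''$ with $MC' = \mathrm{Id} = MC''$. Because $M$ is square, it is invertible, and hence $C' = C'' = M^{-1} =: C$. Moreover, the $v_S$ row of $M$ is $e_{v_S}^{T}$, so the $v_S$ row of $MC = \mathrm{Id}$ gives $C_{v_S,*} = e_{v_S}^{T}$. That is, $v_S \in c(w)$ if and only if $w = v_S$.

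The second step compares the order-demand matrices using Definition~\ref{def:order-demand matrix}. The matrices $N_{\Gamma'}$ and $N_{\Gamma''}$ agree everywhere except at the $v_S$ row. There, $N_{\Gamma'}$ has the neighbourhood of $v_S$ together with a $1$ at $(v_S,v_S)$, while $N_{\Gamma''}$ has the neighbourhood of $v_S$ with a $0$ there. Hence
\[
(N_{\Gamma'}C)_{v_S,*} = (N_{\Gamma''}C)_{v_S,*} + C_{v_S,*} = (N_{\Gamma''}C)_{v_S,*} + e_{v_S}^{T}.
\]
In particular, the diagonal entries $(N_{\Gamma'}C)_{v_S,v_S}$ and $(N_{\Gamma''}C)_{v_S,v_S}$ differ by $1$ over $\mathbb{F}_2$. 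So exactly one of $N_{\Gamma'}C$ and $N_{\Gamma''}C$ has a $1$ on the diagonal, which is a self-loop. A matrix with a self-loop is not the adjacency matrix of a directed acyclic graph. This contradicts the second condition of Theorem~\ref{th:algebraic formulation} for that labelled open graph, so at most one of $\Gamma'$ and $\Gamma''$ has gflow.

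The main point needing care is the uniqueness step $C' = C''$, which relies on balancedness being preserved by the insertion, and then reading off $C_{v_S,*}$ from the $v_S$ row of $M$. Once those are established, the diagonal comparison is immediate.
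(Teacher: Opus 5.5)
Your proof is correct, and it takes a different route from the paper. The paper gives no argument for this statement. It imports it from \cite{perezBackens2025}, where it is assembled from separate results on when inserting a new XZ- or YZ-measured vertex preserves gflow. You instead derive it directly from Theorem~\ref{th:algebraic formulation}, which is already in the preliminaries.

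Each step holds:
\begin{itemize}
\item XZ and YZ rows of the flow-demand matrix are both the unit vector on the diagonal, so $M_{\Gamma'}=M_{\Gamma''}=:M$.
\item Adding $v_S\notin I\cup O$ increases the row and column counts by one each. Hence $M$ is square, and any $C$ with $MC=\mathrm{Id}$ must equal $M^{-1}$. This is the matrix form of uniqueness of focused gflow on balanced graphs.
\item The $v_S$ row of $MC=\mathrm{Id}$ gives $C_{v_S,*}=e_{v_S}^{T}$.
\item The two order-demand matrices differ only in the $(v_S,v_S)$ entry. So the $(v_S,v_S)$ entries of $N_{\Gamma'}C$ and $N_{\Gamma''}C$ differ by $1$, and one of the two products has a self-loop.
\end{itemize}

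Your route is self-contained and slightly more general. You never use the hypothesis that $(G,I,O,\lambda)$ has gflow, so the mutual exclusion holds for every balanced labelled open graph. It also isolates the role of balancedness in the single step $C=M^{-1}$.

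The cited route additionally provides a criterion for when each insertion succeeds. Your computation gives the necessary half of this for free. Write the correction set encoded by the $v_S$ column of $M^{-1}$ as $\{v_S\}\cup D$ with $D\subseteq V\setminus I$. Then the YZ label forces $|S\cap D|$ to be even and the XZ label forces it to be odd. A sufficiency statement would still require checking that the rest of $NC$ is acyclic.
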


\section{Algorithm for finding a measurement labelling}\label{sec:algorithm}

First, we make formal the informal description in the introduction of finding a measurement labelling that is consistent with gflow.
It will be useful to modify the definition of a gflow on a labelled open graph (Def.~\ref{def:gflow}) by moving the labelling $\lambda$ from the input open graph tuple (which is then no longer labelled) to the output gflow tuple instead.

\begin{definition}\label{def:lgflow}
    A \emph{labelled gflow} (or \emph{lgflow} for short) on an open graph $\GIO$ is a triple $(\ld,c,\prec)$ where:\begin{itemize}
        \item $\ld \colon \comp{O} \to \{ \XY, \YZ, \XZ \}$ is a measurement labelling on $\GIO$ and
        \item $(c,\prec)$ is a gflow on $\lGIO$.
    \end{itemize} 
\end{definition}

\begin{definition}\label{def:space of labellings}
    Let $\GIO$ be an open graph.
    We say that $\ld$ \emph{results in gflow} if it is part of some lgflow on $\GIO$.
    We call the set of $\ld$ that result in gflow the \emph{\lspace}.
\end{definition}

Now, we can define the main problem studied in this work:
\problemstatement{\mainproblem}{an open graph $\GIO$}{an lgflow $(\ld,c,\prec)$ on $\GIO$, or a message that no such $(\ld,c,\prec)$ exists}
Equivalently, the goal is to find an element of the \lspace, if it is non-empty.

We show that $\mainproblem$ can be solved in $\bigO(n^3)$ where $n = |V|$.
The algorithm utilises a layer-by-layer approach for flow finding, initially used in \cite{mhallaFindingOptimalFlows2008a}.
Specifically, we combine ideas from \cite{backensThereBackAgain2021} (which generalises \cite{mhallaFindingOptimalFlows2008a}) and \cite{mitosekAlgebraicInterpretationPauli2026} to obtain an $\bigO(n^3)$ method.
In particular, our approach uses the earlier layer-by-layer ideas from \cite{backensThereBackAgain2021}, but splits the Gaussian elimination cost more efficiently across many layers, still yielding an $\bigO(n^3)$ bound.
Unlike the algorithms of \cite{backensThereBackAgain2021,mitosekAlgebraicInterpretationPauli2026}, our method does not necessarily produce a focused flow, though the produced flow is still guaranteed to be maximally delayed, meaning qubits are measured as late as possible for compatibility with flow.
(As flow-finding algorithms generally work down the partial order from the outputs, maximally delayed flow corresponds to a greedy algorithmic approach.)

We include a worked example illustrating a run of the algorithm in Appendix~\ref{subsec:worked example balanced} and~\ref{subsec:worked example unbalanced}.
We also briefly discuss relevant optimisation tricks in Appendix~\ref{sec:implementation}.

\subsection{Algorithm overview}\label{sec:algo-overview}
Before presenting the efficient $\bigO(n^3)$ algorithm, we consider a minimal algorithm that takes an open graph $\GIO$ and returns a function $\ld$ such that $\lGIO$ has gflow.
The algorithm follows the steps below:\begin{enumerate}
    \item Initialise an empty $\ld$ and initialise the set of solved vertices $\Solved$ to contain all outputs. If $\Solved = V$, return $\ld$.
    \item For all $v \in V\setminus\Solved$, find all planes in which $v$ can be measured in such a way that all order constraints $v \prec u$ implied by the correction set for $v$ satisfy $u \in \Solved$. If $v \in I$, accept only the $\XY$ plane. If no suitable measurement planes were found for any vertex, return that there is no lgflow. Otherwise, record some valid $\ld(v)$ for each vertex $v$ where suitable measurement planes were found, and add the newly solved vertices to $\Solved$.
    \item Check if $\Solved = V$. If not, repeat the previous step. If yes, return the recorded $\ld$.
\end{enumerate}

As we will prove, the above algorithm is correct. The above algorithm does not keep track of the correction sets, yet once $\ld$ is found, the standard gflow-finding algorithm can be applied to obtain the full lgflow $(\ld, c, \prec)$.

The naive approach to the above algorithm runs in $\bigO(n^5)$ time complexity: in the second step, $\bigO(n)$ vertices are considered, each requiring $\bigO(n^3)$ time to check each possible measurement plane via Gaussian elimination, and the second step can be entered $\bigO(n)$ times.
The full algorithm we present later follows the same general idea, but fuses the construction of $(c,\prec)$ into the algorithm and takes a more careful approach to solving the underlying linear systems, achieving $\bigO(n^3)$ time complexity.

\subsection{Finding the measurement label for a single vertex}
The second step of the algorithm sketched in Section~\ref{sec:algo-overview} attempts to find $\ld(v)$ such that the order constraints implied by the correction set of $v$ only necessitate $v \prec u$ for $u \in \Solved$, \ie the implied `forward cone' (defined below) of $v$ lies entirely in the set of previously solved vertices~$\Solved$.
In this subsection, we show how to achieve this.
First, we formalise the meaning of the forward cone:

\begin{definition}
    Let $\GIO$ be an open graph, let $v \in \comp{O}$ be a vertex, and let $\mathcal{C} \subseteq \comp{I}$ be such that the stabiliser induced by $\mathcal{C}$ has a non-trivial effect on $v$.
    We define the \emph{forward cone of $v$ with respect to $\mathcal{C}$}, denoted $\future{v}{\mathcal{C}}$, as the set of vertices that must come directly after $v$ in the partial order if $v$ has correction set $\mathcal{C}$, \ie \begin{equation*}
        \future{v}{\mathcal{C}} := \{ u \in V\setminus\{ v \} \mid u \in \mathcal{C} \vee u \in \odd{\mathcal{C}} \}
    \end{equation*}
\end{definition}

The above definition adapts \cite[Definition 1]{raussendorfCone2002}, which considered only the $\XY$ planar measurements on cluster states.
Therefore, $\future{v}{\mathcal{C}}$ is precisely the set of vertices $u$ such that the gflow conditions~\ref{g1} and~\ref{g2} imply $v \prec u$ if $\mathcal{C}$ is used as the correction set for $v$; correspondingly it is the set of vertices such that $v\trl_c u$ if $c(v) = \mathcal{C}$, \cf Definition~\ref{def:induced-order}.

To check which measurement plane can be assigned to a given vertex, we use the following:

\begin{observation}
    Let $\GIO$ be an open graph, $v \in \comp{O}$ be a vertex, and $S$ be a set such that $O \subseteq \Solved \subseteq V$. Then $v \in V$ can be measured in $\ld(v)$ with a correction procedure that implies $v \prec u$ only for $u \in \Solved$ if and only if there exists a set $\mathcal{C} \subseteq \comp{I}$ such that $v\in\mathcal{C}\cup\odd{\mathcal{C}}$ and $\future{v}{\mathcal{C}} \subseteq \Solved$.
    Furthermore, $\mathcal{C}$ specifies the correction set of $v$ and it implies $\ld(v)$ as follows:\begin{itemize}
        \item If $v \in \mathcal{C} \setminus \odd{\mathcal{C}}$, then $\ld(v) = \YZ$,
        \item if $v \in \odd{\mathcal{C}} \setminus \mathcal{C}$, then $\ld(v) = \XY$, and
        \item if $v \in \mathcal{C} \cap \odd{\mathcal{C}}$, then $\ld(v) = \XZ$.
    \end{itemize}
\end{observation}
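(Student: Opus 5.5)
The plan is to unfold the meaning of ``$v$ can be measured in $\ld(v)$ with a correction procedure that implies $v \prec u$ only for $u \in \Solved$'' directly via Definition~\ref{def:gflow} and Definition~\ref{def:induced-order}. This phrase means there is a correction set $\mathcal{C} = c(v) \subseteq \comp{I}$ satisfying whichever of \conref{g3}--\conref{g5} corresponds to $\ld(v)$, and such that every $u$ with $v \trl_c u$ lies in $\Solved$. By the remark following the definition of the forward cone, the set of such $u$ is exactly $\future{v}{\mathcal{C}}$. Both directions of the equivalence, together with the label assignment, therefore reduce to a case analysis on the membership of $v$ in $\mathcal{C}$ and $\odd{\mathcal{C}}$.

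For the forward direction, I take the correction set $\mathcal{C}$ that witnesses the procedure. Each of \conref{g3}, \conref{g4} and \conref{g5} requires $v \in \mathcal{C}$ or $v \in \odd{\mathcal{C}}$, so $v \in \mathcal{C} \cup \odd{\mathcal{C}}$ in every case. The order constraints imposed by $\mathcal{C}$ are exactly $v \prec u$ for $u \in \future{v}{\mathcal{C}}$, so the hypothesis gives $\future{v}{\mathcal{C}} \subseteq \Solved$.

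For the converse, I am given $\mathcal{C}$ with $v \in \mathcal{C} \cup \odd{\mathcal{C}}$, which splits into exactly three disjoint cases: $v \in \mathcal{C}\setminus\odd{\mathcal{C}}$, $v \in \odd{\mathcal{C}}\setminus\mathcal{C}$, and $v \in \mathcal{C}\cap\odd{\mathcal{C}}$. In these cases $\mathcal{C}$ satisfies exactly \conref{g4}, \conref{g3} and \conref{g5} respectively, which yields $\ld(v) = \YZ$, $\XY$ and $\XZ$. This same matching gives the ``furthermore'' part: the conditions \conref{g3}--\conref{g5} are mutually exclusive in the membership pattern they demand, so $\mathcal{C}$ determines the plane uniquely. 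Setting $c(v) := \mathcal{C}$, the only relations that \conref{g1} and \conref{g2} impose from $v$ are $v \prec u$ for $u \in \future{v}{\mathcal{C}} \subseteq \Solved$, as required. Stabiliser language is not needed for this argument, although Observation~\ref{obs:induced stabiliser} confirms the interpretation: $v$ sits in $\suppX{K}$, $\suppZ{K}$ or $\suppY{K}$ in the three cases respectively.

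I expect the main obstacle to be interpretive rather than technical, namely fixing what ``a correction procedure that implies $v\prec u$ only for $u\in\Solved$'' means formally. I would state it explicitly as the existence of $c(v)$ satisfying the labelling condition for $v$ with $\{u : v \trl_c u\} \subseteq \Solved$. I would also remark that the input restriction is already built in, because $\mathcal{C} \subseteq \comp{I}$. Consequently, if $v \in I$ then $v \notin \mathcal{C}$, which forces the $\XY$ case, consistent with Remark~\ref{rem:inputs-XY} and with step~2 of the overview.
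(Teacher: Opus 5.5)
Your proposal is correct and follows the paper's reasoning. The paper gives no separate proof: it treats the observation as immediate from two facts, namely that $\future{v}{\mathcal{C}}$ is exactly the set of $u$ with $v \trl_c u$ when $c(v)=\mathcal{C}$, and that \conref{g3}--\conref{g5} demand mutually exclusive membership patterns for $v$ in $\mathcal{C}$ and its odd neighbourhood. Your definitional unfolding and three-way case split spell out this same argument; your explicit remark that $\mathcal{C}\subseteq\comp{I}$ forces the $\XY$ case for inputs is a nice consistency check with Remark~\ref{rem:inputs-XY}.
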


Suppose, given a set $\Solved$, we want to find a correction set $\mathcal{C}$ for vertex $v$ such that the forward cone $\future{v}{\mathcal{C}}$ is contained in $\Solved$ (or conclude that no such $\mathcal{C}$ exists).
To do this, we must solve certain systems of linear equations.

\begin{notation}
    Given a matrix $M$ with its sets of row labels and column labels, we denote the submatrix of $M$ restricted to the rows in set $R$ and columns in set $C$ by $M[R,\ C]$.
    This is sometimes called the $R\times C$ submatrix of $M$.
    
    We also use $*$ as shorthand for all rows or all columns; for instance, $M[*,\ C]$ is the restriction of $M$ to all rows and to those columns in $C$.
    Finally, we use $M_{*,v}$ for the vector $M[*,\ \{v\}]$ and $M_{u,*}$ for the vector $M[\{u\},\ *]$.
\end{notation}

\begin{definition}
    Given an open graph $(G,I,O)$, its \emph{reduced adjacency matrix} $\Adj$ is the $\comp{O} \times \comp{I}$ submatrix of the full adjacency matrix (which has both rows and columns labelled by $V$).
\end{definition}

\begin{definition}
    Given a finite set $W$, the indicator vector of a subset $T\sse W$ is the vector $t\in\Ftwo^W$ that satisfies $t_w = 1 \Leftrightarrow w\in T$ for all $w\in W$.
    We sometimes write $T' = \supp{t}$.
    In a slight abuse of notation, we identify singlet sets in this context with their sole element, \ie we may talk about the indicator vector of an element $a\in W$, which is just the indicator vector of $\{a\}$.
\end{definition}

\begin{lemma}\label{lemma:linear systems}
    Let $\GIO$ be an open graph, $\Solved$ be a set such that $O \subseteq \Solved \subseteq V$, and $a \in V \setminus \Solved$ be a vertex.
    Let $\Adj$ be the reduced adjacency matrix of $G$, \ie the $\comp{O} \times \comp{I}$ submatrix of the full adjacency matrix, and let $e_a \in \Ftwo^{V \setminus \Solved}$ be the indicator vector of $a$.
    Let $A$ be the submatrix of the adjacency matrix and let $b_a$ be the column vector defined as follows:\begin{align*}
        A &:= \Adj[V \setminus \Solved,\ \Solved \setminus I]\\
        b_a &:= \Adj[V \setminus \Solved,\ \{a\}].
    \end{align*}
    Then, a set $\mathcal{C} \subseteq \comp{I}$ such that $\future{a}{\mathcal{C}} \subseteq \Solved$ exists if and only if at least one of the following holds:\begin{enumerate}
        \item The linear system $A\SolutionVector = e_a$ has a solution $\SolutionVector$. Then $\mathcal{C} = \supp{\SolutionVector}$ and $\ld(a) = \XY$.
        \item The vertex $a$ is not an input, and the linear system $A\SolutionVector = b_a$ has a solution $\SolutionVector$. Then $\mathcal{C} = \supp{\SolutionVector} \cup \{a\}$ and $\ld(a) = \YZ$.
        \item The vertex $a$ is not an input, and the linear system $A\SolutionVector = b_a + e_a$ has a solution $\SolutionVector$. Then $\mathcal{C} = \supp{\SolutionVector} \cup \{a\}$ and $\ld(a) = \XZ$.
    \end{enumerate}
\end{lemma}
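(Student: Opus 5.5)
The plan is to translate the combinatorial condition $\future{a}{\mathcal{C}} \subseteq \Solved$ (together with the implicit requirement $a \in \mathcal{C}\cup\odd{\mathcal{C}}$ from the preceding observation) into linear-algebraic constraints over $\Ftwo$, and then split into three cases according to whether $a \in \mathcal{C}$ and whether $a \in \odd{\mathcal{C}}$.

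\emph{Restricting the candidate sets.} Since $\mathcal{C} \subseteq \comp{I}$ and every element of $\mathcal{C}$ other than $a$ lies in $\future{a}{\mathcal{C}}$, the condition forces $\mathcal{C} \setminus \{a\} \subseteq \Solved\setminus I$. Write $\mathcal{C} = T$ or $\mathcal{C} = T \cup \{a\}$ with $T \subseteq \Solved \setminus I$; note $a\notin\Solved$, so $a \notin T$. The case $a\in\mathcal{C}$ is possible only if $a \notin I$.

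\emph{Encoding the odd-neighbourhood condition.} The remaining requirement is that $\odd{\mathcal{C}} \cap (V\setminus\Solved) \subseteq \{a\}$, with the membership of $a$ fixed by the desired label. For $u \in V\setminus\Solved$ we have $u \in \comp{O}$ because $O \subseteq \Solved$, so $u$ indexes a row of $\Adj$. For a set $D\subseteq\comp{I}$ with indicator vector $d$, the restriction of $\odd{D}$ to $V\setminus\Solved$ is $\Adj[V\setminus\Solved,\ *]\,d$. Linearity of $\odd{\cdot}$ (symmetric difference) then gives:
\[
\odd{\mathcal{C}}\cap(V\setminus\Solved) = \supp{A\SolutionVector} \quad\text{or}\quad \supp{A\SolutionVector + b_a},
\]
according to whether $a\notin\mathcal{C}$ or $a\in\mathcal{C}$, where $\SolutionVector$ is the indicator vector of $T$.

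\emph{The three cases.}
\begin{itemize}
\item If $a \notin \mathcal{C}$, we need $a \in \odd{\mathcal{C}}$ and no other unsolved vertex in $\odd{\mathcal{C}}$, i.e.\ $A\SolutionVector = e_a$. This is case 1, with label $\XY$.
\item If $a \in \mathcal{C}$ and $a \notin \odd{\mathcal{C}}$, we need $A\SolutionVector + b_a = 0$, i.e.\ $A\SolutionVector = b_a$. This is case 2, with label $\YZ$.
\item If $a \in \mathcal{C}\cap\odd{\mathcal{C}}$, we need $A\SolutionVector = b_a + e_a$. This is case 3, with label $\XZ$.
\end{itemize}
Each implication is reversible. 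Given a solution $\SolutionVector$, set $\mathcal{C}$ to be $\supp{\SolutionVector}$, or $\supp{\SolutionVector}\cup\{a\}$ in cases 2 and 3. Its elements other than $a$ lie in $\Solved$, and its odd neighbourhood outside $\Solved$ is at most $\{a\}$, so $\future{a}{\mathcal{C}}\subseteq\Solved$. The labels follow from the observation preceding this lemma. The entry $(b_a)_a=\Adj_{a,a}=0$ since the graph is simple, so $a$'s self-membership in $\odd{\mathcal{C}}$ is governed only by $T$; this is consistent with the labels above.

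The main point needing care is the interpretation of "exists $\mathcal{C}$": it must include the nontriviality condition $a\in\mathcal{C}\cup\odd{\mathcal{C}}$ from the definition of forward cone. Without it, $\mathcal{C}=\emptyset$ would satisfy the containment trivially, whereas the three systems exhaust exactly the nontrivial options. I would state this explicitly, together with the fact that rows of $\Adj$ exist for all of $V\setminus\Solved$; the rest is routine bookkeeping.
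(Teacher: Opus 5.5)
Your proposal is correct and takes essentially the same approach as the paper's proof: a case split on the three possible non-trivial effects of the induced stabiliser on $a$, with the condition on unsolved vertices encoded as $A x = e_a$, $A x = b_a$, or $A x = b_a + e_a$, and the converse obtained by checking the constructed set. Your explicit remarks, that the forward-cone definition builds in the non-triviality requirement and that $(b_a)_a = 0$ because the graph is simple, state points the paper leaves implicit.
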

\begin{proof}
    $(\Rightarrow)$: for every $u \in (V \setminus \Solved) \setminus \{a\}$, the assumption $\future{a}{\mathcal{C}} \subseteq \Solved$ implies $u \notin \mathcal{C} \cup \odd{\mathcal{C}}$.
    We consider the three possible non-trivial effects of the stabiliser induced by $\mathcal{C}$ on $a$.\begin{enumerate}
        \item Suppose $a \in \odd{\mathcal{C}} \setminus \mathcal{C}$.
        Then $\mathcal{C} \subseteq \Solved \setminus I$.
        Let $\SolutionVector \in \Ftwo^{\Solved \setminus I}$ satisfy $\supp{\SolutionVector} = \mathcal{C}$.
        Since $\odd{\mathcal{C}} \cap (V \setminus \Solved) = \{a\}$, we have $A\SolutionVector = e_a$.

        \item Suppose $a \in \mathcal{C} \setminus \odd{\mathcal{C}}$.
        Necessarily $a \notin I$, and $\mathcal{C} \setminus \{a\} \subseteq \Solved \setminus I$.
        Let $\SolutionVector \in \Ftwo^{\Solved \setminus I}$ satisfy $\supp{\SolutionVector} = \mathcal{C} \setminus \{a\}$.
        Since $\odd{\mathcal{C}} \cap (V \setminus \Solved) = \emptyset$, we have $A\SolutionVector + b_a = 0$, which yields the second system.

        \item Suppose $a \in \mathcal{C} \cap \odd{\mathcal{C}}$.
        Again $a \notin I$ and $\mathcal{C} \setminus \{a\} \subseteq \Solved \setminus I$.
        For $\SolutionVector \in \Ftwo^{\Solved \setminus I}$ with $\supp{\SolutionVector} = \mathcal{C} \setminus \{a\}$, the equality $\odd{\mathcal{C}} \cap (V \setminus \Solved) = \{a\}$ yields $A\SolutionVector + b_a = e_a$, which leads to the third system.
    \end{enumerate}

    $(\Leftarrow)$: suppose that one of the three systems has a solution $\SolutionVector$ and construct $\mathcal{C}$ as stated.
    In the first case, $\mathcal{C} \subseteq \Solved \setminus I$ and the equation gives $\odd{\mathcal{C}} \cap (V \setminus \Solved) = \{a\}$, so $a \in \odd{\mathcal{C}} \setminus \mathcal{C}$.
    In the second and third cases, $\mathcal{C} \setminus \{a\} \subseteq \Solved \setminus I$; the respective equations give $\odd{\mathcal{C}} \cap (V \setminus \Solved) = \emptyset$ and $\odd{\mathcal{C}} \cap (V \setminus \Solved) = \{a\}$, so $a \in \mathcal{C} \setminus \odd{\mathcal{C}}$ and $a \in \mathcal{C} \cap \odd{\mathcal{C}}$, respectively.
    In every case, no vertex of $(V \setminus \Solved) \setminus \{a\}$ belongs to $\mathcal{C} \cup \odd{\mathcal{C}}$, and therefore $\future{a}{\mathcal{C}} \subseteq \Solved$.
\end{proof}

This concludes the explanation of the second step of the general approach. The method assigns a label to a vertex as soon as it detects that an assignment is possible, \ie as soon as the set of solved vertices $\Solved$ can contain the entire forward cone of $a$. Note that when attempting to find a label for a specific $a$, the labels of vertices that will be in the forward cone are unimportant; it only matters that these vertices have already been solved. Effectively, this is equivalent to the original three-plane gflow finding procedure \cite{backensThereBackAgain2021}, albeit without ensuring that the recorded set $\mathcal{C}$ corresponds to a focused gflow. The distinction is important: if $\mathcal{C}$ is required to satisfy the focusing conditions, the linear system described above no longer works because the labels of vertices in $\Solved$ then do matter. Nevertheless, once a labelling is found, we can determine a focused flow in $\bigO(n^3)$; thus, our algorithm has no real disadvantage.

We can now move to the presentation of the $\bigO(n^3)$ algorithm.
Here, the algorithm is not optimised beyond the target complexity.
We discuss various improvements in Appendix~\ref{sec:implementation}.

\subsection{Full algorithm}

We will set up a whole family of systems of linear equations that share some of their coefficients but have different constants.
Yet at the beginning, only part of the variables in this system are allowed to take non-zero values; more variables become available in this sense as the algorithm proceeds.

\begin{definition}\label{def:active}
    A column of coefficients is called \emph{active} if the corresponding variable is allowed to take non-zero values at this stage of the algorithm.
\end{definition}

We now show the following:

\begin{theorem}
    There exists an $\bigO(n^3)$ algorithm solving $\mainproblem$, where $n = |V|$.
\end{theorem}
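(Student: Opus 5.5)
The proof has two parts: correctness of the layer-by-layer procedure of Section~\ref{sec:algo-overview}, and an implementation of it in $\bigO(n^3)$ time.

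\textbf{Soundness.} Whenever a vertex $a$ is added to $\Solved$ via Lemma~\ref{lemma:linear systems}, we record $\ld(a)$ and the correction set $c(a)=\mathcal{C}$ with $\future{a}{\mathcal{C}}\subseteq\Solved$. The order is defined by layers: $v\prec u$ iff $v$ was solved in a strictly later round than $u$, with outputs in round $0$. This is a strict partial order. Conditions \conref{g1} and \conref{g2} hold because the forward cone lies in previously solved vertices. Conditions \conref{g3}--\conref{g5} hold by the case distinction in the observation preceding Lemma~\ref{lemma:linear systems}. Inputs receive $\XY$ only, since cases 2 and 3 of the lemma exclude them, and correction sets avoid $I$ by construction.

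\textbf{Completeness.} Suppose some lgflow $(\ld^*,c^*,\prec^*)$ exists. By Theorem~\ref{thm:extending correction set to Pauli flow} we may take $\prec^*=\prec_{c^*}$. Assume the algorithm halts with $\Solved\neq V$. Choose a $\prec^*$-maximal vertex $v\in V\setminus\Solved$; it is a non-output. Every $u$ in $\future{v}{c^*(v)}$ satisfies $v\prec^* u$, so $u\in\Solved$. Lemma~\ref{lemma:linear systems} therefore yields a solution for $v$ in the current round, a contradiction. Monotonicity is automatic, because enlarging $\Solved$ only enlarges the solution space of each system.

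\textbf{Complexity.} This is the main obstacle. The naive approach costs $\bigO(n^3)$ per system per vertex per round. The plan is to maintain a single incremental Gaussian elimination on the reduced adjacency matrix $\Adj$, with rows in $\comp{O}$ and columns in $\comp{I}$, following Definition~\ref{def:active}.

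1. Columns for $\Solved\setminus I$ are \emph{active}; all other columns are inactive.
2. Rows in $\Solved$ are deleted, or equivalently become unconstrained.
3. Along with the row operations, we carry the right-hand sides $e_a$, $b_a$ and $b_a+e_a$ for every unsolved $a$, as an augmented block of $\bigO(n)$ columns. The $b_a$ are exactly the inactive columns of $\Adj$, so they are transformed for free.

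When a vertex is solved, two things happen: its column becomes active, and its row leaves the system. Activating a column means pivoting on it once, at $\bigO(n^2)$ cost over the $\bigO(n)$ rows and $\bigO(n)$ augmented columns. Removing a row requires care: we keep row operations as a record and drop the constraint by treating that row as free. This amounts to an elimination in which solved rows are eliminated last, or to maintaining a basis of the left kernel restricted to unsolved rows.

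Consistency of $A\SolutionVector=t$ is decided by checking $t$ against the left-kernel vectors of $A$ restricted to unsolved rows, that is, by whether $t$ is zero on non-pivot rows. To avoid an extra factor of $n$, each round re-checks only the affected augmented columns. Each of the $\bigO(n)$ pivot or removal events costs $\bigO(n^2)$, since it touches $\bigO(n)$ rows times $\bigO(n)$ columns. Each round's consistency tests cost $\bigO(n)$ per candidate, hence $\bigO(n^2)$ per round over $\le n$ rounds. The total is $\bigO(n^3)$.

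The delicate point, where I expect most of the work, is showing that row deletion can be handled within this budget. I would first try ordering the elimination so that the pivot row chosen for a newly activated column is always one of the remaining unsolved rows. I would argue that a stale pivot on a now-solved row can be replaced by a single $\bigO(n^2)$ swap. If that does not work, the fallback is to charge the cost amortised across layers, as in \cite{mitosekAlgebraicInterpretationPauli2026}.
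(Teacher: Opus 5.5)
Your soundness and completeness arguments are fine. The completeness step via a $\prec^*$-maximal unsolved vertex is actually spelled out more explicitly than in the paper.

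\textbf{The gap.} It lies in the complexity part, at exactly the point you flag. You never give an $\bigO(n^2)$ procedure for deleting the original row of a newly solved vertex $v$ from a system that has already been row-reduced, and the whole $\bigO(n^3)$ bound rests on this step.

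Your first idea does not work as stated. After elimination, every current row is a sum of original rows. When pivots are chosen it is not yet known which vertices will be solved later. So the original row of $v$ will in general already have been added into many other current rows, including the zero rows used for the consistency test. Choosing, or later swapping, which row serves as pivot does not undo those additions. The consistency test would therefore still be run against constraints coming from $v$'s row, and could give wrong answers. The fallback of amortising ``as in'' the cited work is a pointer, not an argument.

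\textbf{How the paper closes it.} It uses two extra ingredients. The first is an identity (maintenance) block appended to the system, which records for each current row which original rows it is built from. The second is an untouched copy $\ILS$ of the initial system. To remove $v$:
\begin{enumerate}
\item Take the current rows with a $1$ in the maintenance column of $v$, and add the last of them, $r^*$, to all the others. Now only $r^*$ still depends on the original row $v$.
\item Add the original row $v$ from $\ILS$ to $r^*$. Now no row depends on $v$, and $r^*$ becomes a linear combination of the remaining rows.
\item Re-reduce $r^*$ against the existing pivots and reinsert it in echelon position.
\end{enumerate}
Each step costs $\bigO(n^2)$ and happens once per vertex.

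Your phrase ``treat that row as free'' can also be made rigorous directly. Add a slack variable with column $e_v$ for each solved row. Its transformed version is exactly the maintenance column of $v$. Dropping row $v$ is then just one more $\bigO(n^2)$ column activation, and consistency is still read off the rows that vanish on all active columns.

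Either way, you should also account for producing the correction sets. This can be done by back-substitution at $\bigO(n^2)$ per solved vertex, or by one standard $\bigO(n^3)$ gflow search once the labelling is known.
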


\begin{algorithm}
\caption{Constructing the labelling-finding linear system}\label{algo:system construction}
\begin{algorithmic}[1]
\Statex Returns a linear system.
\Statex The given graph is assumed to have some underlying order on the vertices $\Vertices$. We refer to it as graph-order. It is not the temporal order we are searching for. Rows are non-outputs $\Rows=\Vertices\setminus O$; first-block columns are non-inputs $\Cols=\Vertices\setminus I$. The first block is the reduced adjacency matrix. Attached vectors encode possible labels.
\Procedure{BuildLabellingSystem}{$G,I,O$}\label{proc:BuildLabellingSystem}
    \State $\Vertices \gets V(G)$
    \State $\Rows \gets$ graph-order $(\Vertices\setminus O)$ \Comment{ordered list of non-outputs according to presentation of $\Vertices$}
    \State $\Cols \gets$ graph-order $(\Vertices\setminus I)$ \Comment{ordered list of non-inputs according to presentation of $\Vertices$}
    \State $\Adj \gets$ reduced adjacency matrix of $G$ \Comment{$\Adj$ has shape $\Rows \times \Cols$}
    \State $\Req \gets [\,\,]$ \Comment{this will be the ordered list of potential vertex-label pairs $(v,\lambda)$}
    \State $\AttachedBlock \gets$ empty matrix with $|\Rows|$ rows
    \For{$v \in \Rows$}
        \State $\Labels_v \gets \Call{CandidateLabels}{v,I}$
        \State $\rhs_{v,\XY} \gets$ unit vector with a $1$ in row $v$
        \If{$\XY \in \Labels_v$}
            \State append column $\rhs_{v,\XY}$ to $\AttachedBlock$ and append $(v,\XY)$ to $\Req$
        \EndIf
        \If{$\YZ \in \Labels_v$}
            \State $\rhs_{v,\YZ} \gets \Adj_{*,v}$
            \State append column $\rhs_{v,\YZ}$ to $\AttachedBlock$ and append $(v,\YZ)$ to $\Req$
        \EndIf
        \If{$\XZ \in \Labels_v$}
            \State $\rhs_{v,\XZ} \gets \Adj_{*,v} + \rhs_{v,\XY}$
            \State append column $\rhs_{v,\XZ}$ to $\AttachedBlock$ and append $(v,\XZ)$ to $\Req$
        \EndIf
    \EndFor
    \State $\LS,\ILS \gets \bigl[\,\Adj \mid \AttachedBlock \mid \Identity_{\Rows}\,\bigr]$ \Comment{two independent copies}
    \State \Return $(\LS,\ILS,\Rows,\Cols,\Req)$
\EndProcedure
\Statex
\Statex Returns possible labels for vertex $v$ by checking whether $v$ is an input.
\Procedure{CandidateLabels}{$v,I$}\label{proc:CandidateLabels}
    \If{$v\in I$}
        \State $\Labels_v \gets \{\XY\}$
    \Else
        \State $\Labels_v \gets \{\XY,\YZ,\XZ\}$
    \EndIf
    \State \Return $\Labels_v$
\EndProcedure
\end{algorithmic}
\end{algorithm}

\begin{algorithm}
\caption{Main algorithm}\label{algo:main algo}
\begin{algorithmic}[1]
\Statex Returns a measurement labelling $\ld$ resulting in gflow or a message that no such $\ld$ exists. The starting open graph is assumed to satisfy $I \cap O = \emptyset$.
\Procedure{FindGflowLabelling}{$G,I,O$}\label{proc:FindGFlowLabelling}
    \State $\Vertices \gets V(G)$
    \State $(\LS,\ILS,\Rows,\Cols,\Req) \gets \Call{BuildLabellingSystem}{G,I,O}$
    \State $\activecount \gets 0$ \Comment{number of active first-block columns, initially zero}
    \State $\Solved \gets O$ \Comment{solved vertices, initially only the outputs}
    \State $\Layers \gets [\,O\,]$ \Comment{layers of the temporal order, initially only the outputs}
    \State $\Solutions \gets \emptyset$
    \State $(\LS,\ILS,\Cols,\activecount) \gets \Call{ActivateVertices}{\LS,\ILS,\Cols,\activecount,O}$
    \Statex \Comment{incrementally activates all output columns, cf.\ Definition~\ref{def:active}}
    \State $\Unsolved \gets V \setminus \Solved$ \Comment{unsolved non-outputs}
    \While{$\Solved \ne \Vertices$} \Comment{repeat until all vertices are solved}
        \State $\Found \gets \Call{FindNewSolutions}{\LS,\Req,\Cols,\activecount,\Solved,\Candidates}$
        \If{$\Found=\emptyset$}
            \State \Return No labelling found
        \EndIf
        \State $\New \gets$ vertices of $\Rows$ that appear in $\Found$\label{Line:New from Layers}
        \State append $\New$ to $\Layers$
        \State $(\LS,\ILS,\Req) \gets \Call{DiscardAttachedColumns}{\LS,\ILS,\Req,\New}$
        \For{$v\in\New$}
            \State $\Solutions[v]\gets \Found[v]$
            \State $(\LS,\ILS) \gets \Call{RemoveInitialRow}{\LS,\ILS,v,\activecount}$
            \State $(\LS,\ILS,\Cols,\activecount) \gets \Call{ActivateVertices}{\LS,\ILS,\Cols,\activecount,\{v\}}$
            \State $\Solved \gets \Solved \cup \{v\}$
        \EndFor
        \State $\Unsolved \gets V \setminus \Solved$ \Comment{update unsolved non-outputs}
    \EndWhile
    \State \Return successful result $(G,\Layers,\Solutions,\Solved)$
\EndProcedure
\end{algorithmic}
\end{algorithm}

\begin{algorithm}
\caption{Solving all currently available vertices}\label{algo:layer solving}
\begin{algorithmic}[1]
\Statex Finds solutions for the next layer of vertices.
\Procedure{FindNewSolutions}{$\LS,\Req,\Cols,\activecount,\Solved,\Candidates$}\label{proc:FindNewSolutions}
    \State $\CoefficientBlock \gets (\LS)[*,\ 1..\activecount]$ \Comment{coefficient block (part of the first block), already in row echelon form}
    \State $\ZeroRows \gets \{r \mid (\CoefficientBlock)_{r,*}=\zero\}$ \Comment{zero rows of the coefficient block}
    \State $\Active \gets$ list of first $\activecount$ vertices of $\Cols$ in order
    \State $\Requests\gets[\,]$
    \For{$v\in \Candidates$}
        \For{$\lambda$ such that $(v,\lambda)\in\Req$}
            \State $\rhs \gets$ attached column of $\LS$ labelled $(v,\lambda)$
            \If{$\rhs_r=0$ for every $r\in \ZeroRows$}
                \State append $(v,\lambda,\rhs)$ to $\Requests$
            \EndIf
        \EndFor
    \EndFor
    \If{$\Requests$ is empty}
        \State \Return $\emptyset$
    \EndIf
    \State Solve all systems $\CoefficientBlock\SolutionVector = \rhs$ for requested right-hand sides by back substitution\label{line:Ex=b}
    \State $\Found\gets\emptyset$
    \For{each solved request $(v,\lambda,\rhs)$ with solution $\SolutionVector$}
        \State $\CorrectionSet \gets \{\,\Active_j \mid \SolutionVector_j=1\,\}$
        \If{$\lambda\in\{\YZ,\XZ\}$}
            \State $\CorrectionSet \gets \CorrectionSet\cup\{v\}$
        \EndIf
        \State store witness $(\lambda,\CorrectionSet,\Solved,\Active,\SolutionVector)$ in $\Found[v]$
    \EndFor
    \State \Return $\Found$
\EndProcedure
\end{algorithmic}
\end{algorithm}

\begin{algorithm}
\caption{Single-row maintenance of the labelling linear system}\label{algo:row maintenance}
\begin{algorithmic}[1]
\Statex Discards attached vectors that belong to vertices in the newly solved layer. The corresponding labelled columns in the second block are removed from both systems (whereas the first- and third-block columns associated with those vertices are kept).
\Procedure{DiscardAttachedColumns}{$\LS,\ILS,\Req,\ActivationSet$}\label{proc:DiscardAttachedColumns}
    \State delete from $\LS$ and $\ILS$ every attached column labelled $(v,\lambda)$ with $v\in\ActivationSet$
    \State $\Req \gets [\,(v,\lambda)\in\Req\mid v\notin\ActivationSet\,]$ \Comment{preserve the order of remaining labels}
    \State \Return $(\LS,\ILS,\Req)$
\EndProcedure
\Statex
\Statex Transforms the linear system to the one that would be obtained if the row $v$ was not present at the beginning.
\Procedure{RemoveInitialRow}{$\LS,\ILS,v,\activecount$}\label{proc:RemoveInitialRow}
    \State $\TrackingColumn \gets$ third-block column corresponding to row $v$
    \State $\AffectedRows \gets [\,r \mid (\LS)_{r,\TrackingColumn}=1\,]$
    \If{$\AffectedRows$ is empty}
        \State \Return $(\LS,\ILS)$
    \EndIf
    \State $\ReplacementRow\gets$ last row of $\AffectedRows$
    \For{$r\in \AffectedRows$ except $\ReplacementRow$}
        \State add row $\ReplacementRow$ of $\LS$ to row $r$ of $\LS$
    \EndFor
    \State add row $v$ of $\ILS$ to row $\ReplacementRow$ of $\LS$
    \State \Call{RestoreEchelonOrder}{$\LS,\ReplacementRow,\activecount$}
    \State \Return $(\LS,\ILS)$
\EndProcedure
\Statex
\Statex Activates columns of vertices in $X$. Ensures that the active part of the first block is in row echelon form.
\Procedure{ActivateVertices}{$\LS,\ILS,\Cols,\activecount,\ActivationSet$}\label{proc:ActivateVertices}
    \For{$v\in \ActivationSet$ in graph order} \Comment{activate one vertex completely before proceeding to the next}
        \If{$v\in\Cols$ and the column of $v$ is not within first $\activecount$ columns} \Comment{\ie if $v$ is non-active non-input}
            \State swap the first-block column of $v$ into position $\activecount+1$ in both $\LS$ and $\ILS$
            \State swap the matching entries in the column order $\Cols$
            \State $\activecount\gets \activecount+1$ \Comment{increase number of active columns}
            \State $\AffectedRows\gets[\,r\mid (\LS)_{r,\activecount}=1\,]$ \Comment{non-zero rows of the new active column}
            \If{$\AffectedRows$ is not empty}
                \State $\PivotRow\gets$ last row of $\AffectedRows$
                \For{$r\in \AffectedRows$ except $\PivotRow$}
                    \State add row $\PivotRow$ of $\LS$ to row $r$ of $\LS$ \Comment{Gaussian elimination by hand}
                \EndFor
                \State \Call{RestoreEchelonOrder}{$\LS,\PivotRow,\activecount$}
            \EndIf
        \EndIf
    \EndFor
    \State \Return $(\LS,\ILS,\Cols,\activecount)$
\EndProcedure
\Statex
\Statex Restores row echelon form via by-hand Gaussian elimination
\Procedure{RestoreEchelonOrder}{$\LS,\ChangedRow,\activecount$}\label{proc:RestoreEchelonOrder}
    \State use existing non-zero pivot rows, from top to bottom, to cancel pivot columns in row $\ChangedRow$
    \State find the pivot of row $\ChangedRow$ among the first $\activecount$ columns, if any
    \State move row $\ChangedRow$ to a suitable place among the non-zero pivot rows so the pivot order is increasing
    \State place all zero-coefficient rows after the non-zero pivot rows
\EndProcedure
\end{algorithmic}
\end{algorithm}

\begin{proof}
    We follow the pseudocode and prove the correctness of the operations appearing in Algorithms~\ref{algo:system construction} to~\ref{algo:row maintenance}.
    
    First, note that vertices in $I \cap O$ may be deleted from the \LOG at the beginning and then reinserted once the algorithm run is complete: This is because they are neither measured vertices nor can they appear in correction sets. Deleting them changes no entry in the reduced adjacency matrix. Thus, without loss of generality we assume $I \cap O = \emptyset$ from here onwards (\cf \cite[Theorem 3.10]{mitosekPauliFlowOpen2024}), as does the pseudocode.

    Fix any order of the vertices. We call this the \emph{graph order}. It is not the partial order of the gflow; instead, it is the order according to which columns are stored in various matrices throughout the computation.

    Let $\Rows = V \setminus O$ and $\Cols = V \setminus I$, both in graph order, be the lists of rows and columns, respectively, of the linear systems defined next.
    For every $v \in \Rows$ define a column vector:
    \begin{equation}
        \beta_{v,\XY} = e_v, \label{eq:xy-attached-vector}
    \end{equation}
    and, when $v\notin I$, also define two more column vectors:
    \begin{equation}
        \beta_{v,\YZ} = \Adj[\Rows,\,\{v\}]
        \qquad \text{and} \qquad
        \beta_{v,\XZ} = \Adj[\Rows,\,\{v\}]+e_v.\label{eq:yz-xz-attached-vectors}
    \end{equation}

    Let $\Req\sse \Rows\times\{\XY,\XZ,\YZ\}$ be the list of the pairs indexing these vectors and let $B = \{\beta_q\mid q\in \Req\}$ contain the corresponding columns. Thus $\Req$ contains only $(v,\XY)$ when $v\in I$, and all three pairs when $v\notin I$. This $\Req$ is part of the output of \Call{BuildLabellingSystem}{} in Algorithm~\ref{algo:system construction}. That method also returns two copies of the linear system with three blocks:
    \begin{equation}
        \ILS = \LS = \bigl[\,\Adj\left[\Rows,\ \Cols\right]\mid B\mid \Identity_{\Rows}\,\bigr] \label{eq:maintained-labelling-system}
    \end{equation}
    We refer to these blocks as the coefficient block, the attached block, and the maintenance block, similar to the approach used in \cite{mitosekAlgebraicInterpretationPauli2026, Mitosek2025thesis}. We never perform row operations on $\ILS$; however, unlike in \cite{mitosekAlgebraicInterpretationPauli2026, Mitosek2025thesis}, the system in $\LS$ sometimes undergoes column operations, which we mirror in the $\ILS$.

    Throughout, only some columns of the coefficient block are part of the linear system.
    These are the active columns of Definition~\ref{def:active}; we denote the set of active columns by $\Active$. The active columns will always be maintained to be leftmost in the coefficient block.

    Set the number of active columns $\activecount$ to $0$, initialise the set $\Solved$ of vertices to contain exactly the outputs $O$ and let the set of unsolved vertices be $\Unsolved = V \setminus \Solved = \Rows$. Call \Call{ActivateVertices}{} (line~\ref{proc:ActivateVertices} in the pseudocode) with activation set $O$. For each $v \in O$, this operation performs the following:\begin{enumerate}
        \item Swap column $v$ into position $\activecount + 1$ in the coefficient block of both $\LS$ and $\ILS$, and also in the list $\Cols$.
        \item Increase $\activecount$ by $1$.
        \item Performs cancellation of rows as follows: let $\PivotRow$ be the index of the last row of column $v$. Add row $\PivotRow$ to all other rows of $\LS$.
        \item Call \Call{RestoreEchelonOrder}{} (line~\ref{proc:RestoreEchelonOrder}) to ensure that the active part of the coefficient block remains in row echelon form. The procedure reduces the row using other pivot rows and then moves the reduced row to the position determined by its resulting pivot, or below all pivot rows if its coefficient part is identically zero. See \cite{mitosekAlgebraicInterpretationPauli2026, Mitosek2025thesis} for a more detailed description of an analogous approach in a slightly different context.
    \end{enumerate}
    Thus, at this point, all outputs are initial in the coefficient block, $\Active = O$, and $\activecount = \abs{O}$. The total cost of operations up to this point is $\bigO(n^3)$ (or, more precisely, it is $\bigO(n^2 \abs{O})$), because \Call{ActivateVertices}{} runs in $\bigO(n^2 |\ActivationSet|)$ where $\ActivationSet$ is the set of activated vertices.

    Now, we go into the main loop of the algorithm that is repeated until $\Solved = V$.
    The loop below maintains the following invariant. Let $\Unsolved = V \setminus \Solved$, then:
    \begin{itemize}
        \item The first $\activecount$ columns of the coefficient block correspond precisely to vertices in $\Solved \setminus I$, and their submatrix is in row-echelon form.
        \item Projecting $\LS$ onto its active columns of the coefficient block and the attached column $\beta_{v,\lambda}$ gives a system equivalent to:
        \begin{equation}
            \bigl[\, \Adj\left[\Unsolved,\ \Solved \setminus I\right] \mid \beta_{v,\lambda}[\Unsolved] \,\bigr] \label{eq:current-labelling-system}
        \end{equation} where $\beta_{v,\lambda}[\Unsolved]$ is the restriction of $\beta_v$ to entries in $\Unsolved$. This system is precisely the system from Lemma~\ref{lemma:linear systems}.
    \end{itemize}
    Note that, at the start, the loop invariant is satisfied.
    \begin{enumerate}
        \item Call \Call{FindNewSolutions}{} (line~\ref{proc:FindNewSolutions}), which detects vertices that can be solved with forward cones contained in $\Solved$: Let $Z$ be the set containing those rows in which the
        first $\activecount$ columns of $\LS$ are zero. For every
        $(v,\lambda)\in\Req$, inspect its attached column: the check of $(v,\lambda)$ leads to a consistent system precisely when this column is zero on every row in $Z$.
        Let $\Found$ contain every consistent request and let $\New$ be the list of the
        vertices appearing in $\Found$, in graph order. If $\Found$ is
        empty, return that no lgflow exists.
        
        Since the active part of the coefficient block is in row echelon form, we can detect vertices in $\New$ with a single read of the $\LS$ matrix, \ie in time $\bigO(n^2)$. Furthermore, thanks to the second part of the loop invariant, we can record the correction sets corresponding to each vertex in $\New$ by following Lemma~\ref{lemma:linear systems} in time $\bigO(n^2)$ per vertex. Since each vertex is solved at most three times (once per detected plane), the backtracking amortises to $\bigO(n^3)$ over the entire main loop.

        \item Discard the attached columns of newly-solved vertices from both $\LS$ and $\ILS$ by calling \Call{DiscardAttachedColumns}{} (line~\ref{proc:DiscardAttachedColumns}): the columns of the solved vertices are no longer necessary to maintain in the linear system\footnote{In \cite{mitosekAlgebraicInterpretationPauli2026, Mitosek2025thesis}, such columns were maintained for easier implementation and were skipped from consideration on later layers. Both approaches yield the same results, with the previous approach being easier to implement and the approach used here leading to a slight speed-up due to maintenance of smaller linear systems.}.

        \item Next, we bring the linear system to a form equivalent to the one that would be obtained if the vertices in $\New$ had never been present (or, more precisely, if all such vertices were outputs). 
        To do this, repeat for all $v \in \New$:\begin{enumerate}
            \item Remove the dependence of $\LS$ on the original row of $v$ by calling \Call{RemoveInitialRow}{} (line~\ref{proc:RemoveInitialRow}). The operation can be performed in $\bigO(n^2)$ time. Since each vertex appears in $\New$ at most once, this operation amortises to $\bigO(n^3)$ over the entire main loop. For the technical details, see \cite{mitosekAlgebraicInterpretationPauli2026, Mitosek2025thesis} or the pseudocode.
            The general idea of dependency removal is to first cancel all but one dependency by adding the last row with dependency to all earlier rows with dependency (where rows with dependency can be read off from the third block), then the original row of $v$ (accessible in $\ILS$) is added to the only remaining dependency in $\LS$, and lastly, by-hand Gaussian elimination is used to update the changed row so that the system maintains the row echelon form.
            
            \item Activate the column of $v$. This is again done by calling \Call{ActivateVertices}{} (line~\ref{proc:ActivateVertices}). Afterwards, the active part of the coefficient block contains $\Solved\cup\{v\}$; and the active part of the coefficient block is in row echelon form.
            
            \item Update $\Solved = \Solved \cup \New$.
        \end{enumerate}

        \item Lastly, update $\Unsolved = V \setminus \Solved$.
        After that, the active part of the coefficient block in $\LS$ is in row echelon form and consists of all columns corresponding to vertices in $\Solved \setminus I$, \ie the first part of the loop invariant is satisfied. Although there are $|\comp{O}|$ rows, they are all linear combinations of the rows present in $\Unsolved$. This means that the active part of the coefficient block is equivalent to the coefficient part of the linear system in Equation~\ref{eq:current-labelling-system}, \ie the second part of the loop invariant is also satisfied. \qedhere
    \end{enumerate}
\end{proof}

The algorithm solves each vertex $v$ as soon as it can measure $v$ in some plane.
This is closely related to the `maximally delayed' property of a gflow on an open graph with a fixed labelling; see \cite{backensThereBackAgain2021,mhallaFindingOptimalFlows2008a,simmonsRelatingMeasurementPatterns2021}. 
Indeed, the returned flow $(c,\prec)$ for the \LOG $\lGIO$ is maximally delayed.

At the same time, the flow returned by the above algorithm is not guaranteed to be focused (\cf Definition~\ref{def:focusing}).
If a focused flow is necessary, then one can focus the found flow via a stepwise process of modifying the correction sets \cite[Section~3.3]{backensThereBackAgain2021} or run a $\bigO(n^3)$ flow-finding procedure on $\lGIO$, which is guaranteed to find a focused flow \cite{mitosekAlgebraicInterpretationPauli2026}.

The algorithm can be adjusted to search for measurement labellings that support additional requirements, such as restricting the measurement labels allowed for each vertex, by adjusting the filtering inside $\Call{CandidateLabels}{}$. In particular, if each vertex is restricted to only one allowed label, the problem collapses to standard gflow finding.
One may also restrict how the constructed partial order looks by filtering which vertices can be considered at each layer. In particular, one could supply a strict partial order and look only for measurement labellings compatible with it. Thus, our approach not only allows finding the extension of $(G,I,O)$ to $(G,I,O,\ld,c,\prec)$ where $(c,\prec)$ is a gflow on $(G,I,O,\ld)$, but also extending $(G,I,O,\ld)$, $(G,I,O,\prec)$, or any setting with $\ld$ and/or $\prec$ partially supplied.

\section{Properties of balancing}
\label{sec:balancing}

The algorithm in Section~\ref{sec:algorithm} finds an lgflow if one exists.
Given a fixed labelling, it is well-known that there is at most one focused gflow for a balanced \LOG \cite{mhallaWhichGraphStates2014a,simmonsRelatingMeasurementPatterns2021,mitosekAlgebraicInterpretationPauli2026} but there may be many focused gflows for unbalanced ones.
Here, we consider a similar question for lgflow: in particular, we study the \lspace\ resulting in gflow.
Again, different properties emerge depending on whether the open graph is balanced.
In particular, we show that for a balanced open graph there is at most one measurement labelling resulting in gflow.
In other words, the structure of the balanced open graph alone implies which measurement planes must be chosen for computation to be deterministic.
The same is not true for unbalanced open graphs.
The uniqueness property for balanced graphs is often useful when working with focused gflow, for example in the context of flow-preserving transformations: the uniqueness property for lgflow will likely have similar advantages.
We propose a balancing mechanism that transforms an unbalanced \LOG into a balanced one while making minimal changes to the measurement labelling by identifying vertices that can be made into inputs.

We also provide a probabilistic version of the lgflow-finding algorithm that can reach any valid measurement labelling resulting in gflow while maintaining $\bigO(n^3)$ complexity.
This enables the full \lspace\ to be analysed for any given open graph, with applications to the question of which computations are supported by some given open graph state that can be prepared in a lab, or in the study of foundational aspects of deterministic MBQC.

\subsection{Uniqueness for balanced open graphs}

We now show that there exists at most one labelling resulting in gflow for balanced instances of $\mainproblem$, \ie those with $\abs{I}=\abs{O}$.
This means that not only is focused gflow unique on balanced \LOG{}s (if it exists), but focused lgflow is also unique on balanced open graphs (again if it exists).

\begin{theorem}[Uniqueness]\label{Th:uniqueness}
    Let $(G,I,O)$ be a balanced open graph.
    Then there exists at most one measurement labelling $\ld$ such that $(G,I,O,\ld)$ has gflow.
\end{theorem}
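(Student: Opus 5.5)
The plan is to argue by induction on $n=\abs{V}$. We reduce every disagreement between two labellings to a disagreement of the form $\XZ$ versus $\YZ$ at a single vertex. That case is then settled by Lemma~\ref{lemma:Z-like removal}, the induction hypothesis, and Observation~\ref{obs:Z-like insertion}.

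Suppose $\ld$ and $\ld'$ are two distinct labellings such that both $(G,I,O,\ld)$ and $(G,I,O,\ld')$ have gflow, and pick $v\in\comp{O}$ with $\ld(v)\neq\ld'(v)$. By Remark~\ref{rem:inputs-XY} both labellings assign $\XY$ to every vertex of $I\setminus O$, so $v\notin I$. As in the algorithm section, we may assume $I\cap O=\emptyset$.

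\textbf{Reduction via local complementation.} We apply Theorem~\ref{th:LC} simultaneously to both labelled open graphs at the same non-input vertex. The two results still have gflow and live on the same balanced open graph $(G*u,I,O)$. The transformed labellings remain distinct at $v$, because for each vertex the update of Theorem~\ref{th:LC} is a bijection on labels and depends only on the graph and on $u$.
\begin{itemize}
\item If $\{\ld(v),\ld'(v)\}=\{\XY,\YZ\}$, local complementation at $v$ turns the pair into $\{\XZ,\YZ\}$.
\item If $\{\ld(v),\ld'(v)\}=\{\XY,\XZ\}$, we first complement at a non-input neighbour $w$ of $v$. This swaps $\XZ\leftrightarrow\YZ$ at $v$ and leaves $\XY$ unchanged, giving the previous case. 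Such a $w$ exists: the labelling with $\XZ$ at $v$ has a correction set $c(v)\subseteq\comp{I}$ with $v\in\odd{c(v)}$, so some neighbour of $v$ lies in $c(v)$, hence is a non-input.
\end{itemize}
After at most two such steps we have a balanced open graph $(G'',I,O)$ and labellings $\mu,\mu'$ both with gflow, and $\{\mu(v),\mu'(v)\}=\{\XZ,\YZ\}$.

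\textbf{Closing the argument.} Apply Lemma~\ref{lemma:Z-like removal} to both labelled open graphs at $v$. This gives $(G''-v,I,O,\mu|)$ and $(G''-v,I,O,\mu'|)$, both with gflow. The open graph $(G''-v,I,O)$ is still balanced, since $v\notin I\cup O$, and it has fewer vertices. By the induction hypothesis, $\mu$ and $\mu'$ therefore agree off $v$. Now set $S=N_{G''}(v)$ and reinsert $v$ as the vertex $v_S$ of Observation~\ref{obs:Z-like insertion}, applied to the balanced labelled open graph $(G''-v,I,O,\mu|)$, which has gflow. The two candidate results are exactly $(G'',I,O,\mu)$ and $(G'',I,O,\mu')$, labelled $\XZ$ and $\YZ$ at $v$ in some order, and the observation says at most one of them has gflow. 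This contradicts our assumption, so $\ld=\ld'$. The base case is vacuous: if $\comp{O}=\emptyset$ there is only the empty labelling.

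\textbf{Main obstacle.} The step needing the most care is checking that simultaneous local complementations genuinely preserve the situation. Both labellings must stay on a common open graph with unchanged $I,O$, which holds because local complementation never alters $I$ or $O$. The disagreement at $v$ must also survive, which relies on the label update at each vertex being a permutation determined solely by $G$ and $u$. One also has to make sure that the vertex reinserted via Observation~\ref{obs:Z-like insertion} reproduces $G''$ exactly. This holds because $S$ is taken to be the full neighbourhood of $v$ in $G''$, and $v\notin I\cup O$ agrees with the observation's setting, which adds a new non-input, non-output vertex.
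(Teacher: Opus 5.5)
Your proof is correct and follows essentially the same route as the paper. You use local complementation at $v$ for an $\{\XY,\YZ\}$ disagreement and at a non-input neighbour for an $\{\XY,\XZ\}$ disagreement, then settle the $\{\XZ,\YZ\}$ case via Lemma~\ref{lemma:Z-like removal}, minimality and Observation~\ref{obs:Z-like insertion}. The only differences are cosmetic: you induct on $\abs{V}$ rather than taking a counterexample with fewest non-outputs, and your assumption $I\cap O=\emptyset$ is harmless but not needed.
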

\begin{proof}
    Suppose for contradiction that the hypothesis does not hold, and let $(G,I,O)$ be the open graph with the fewest non-outputs for which the hypothesis does not hold.
    Therefore, there exists $\ld \ne \ld'$ such that both $(G,I,O,\ld)$ and $(G,I,O,\ld')$ have gflows.
    Let $v \in \comp{O}$ be such that $\ld(v) \ne \ld'(v)$.
    Then necessarily $v \notin I$, as at least one of $\ld(v)$ or $\ld'(v)$ must be different from $\XY$, and inputs in labelled open graphs with gflow cannot be measured in a basis other than $\XY$.
    We consider three cases depending on $\{ \ld(v), \ld'(v) \}$.
    \begin{enumerate}
        \item Suppose $\{ \ld(v), \ld'(v) \} = \{ \YZ, \XZ \}$.
        Then by Lemma~\ref{lemma:Z-like removal} $v$ can be removed from both $(G,I,O,\ld)$ and $(G,I,O,\ld')$ while preserving the existence of gflows, resulting for both cases in the same smaller open graph $(G-v,I,O)$.
        Thus, either we get a smaller counterexample in the number of non-outputs, or $v$ is the only vertex on which $\ld$ and $\ld'$ differ.
        The first case contradicts the minimality assumption.
        
        In the second case, let $\hat{\ld}$ be the restriction of $\ld$ (and also of $\ld'$) to $\comp{O} \setminus \{ v \}$.
        Then, $(G-v,I,O,\hat{\ld})$ has gflow.
        Moreover, the insertion of a new vertex $v$ with neighbours $N_G(v)$ will lead to the labelled open graphs $(G,I,O,\ld)$ if the new vertex is measured $\YZ$ and to $(G,I,O,\ld')$ if the new vertex is measured $\XZ$.
        Both of the resulting labelled open graphs have gflow by assumption.
        However, when inserting a vertex with a fixed neighbourhood into a balanced open graph, at most one of the planes $\YZ$ and $\XZ$ can lead to gflow by Observation~\ref{obs:Z-like insertion}, so again we get a contradiction.
        
        \item Suppose $\{ \ld(v), \ld'(v) \} = \{ \XY, \YZ \}$.
        Since $v \notin I$, by Theorem~\ref{th:LC}, we may perform a local complementation about $v$ in a gflow-preserving way.
        After this local complementation, $\{ \ld(v), \ld'(v) \} = \{ \XZ, \YZ \}$, \ie we land in the previous case, which leads to a contradiction.
        
        \item Finally, suppose $\{ \ld(v), \ld'(v) \} = \{ \XY, \XZ \}$.
        By the existence of a gflow, $\comp{I} \cap N(v) \ne \emptyset$, since at least one neighbour of $v$ must be in the correction set of $v$.
        Let $u \in \comp{I} \cap N(v)$; then, by Theorem~\ref{th:LC}, we may perform a local complementation about $u$ in a gflow-preserving way.
        After this local complementation, $\{ \ld(v), \ld'(v) \} = \{ \XY, \YZ \}$, \ie we land in the previous case, which leads to a contradiction.
    \end{enumerate}
    Thus, each case reduces to the first, which leads to a contradiction; therefore, the hypothesis must hold.
\end{proof}

For example, the open graph in Figure~\ref{fig:uniqueness-example} has $\abs{I}=\abs{O}$, meaning it is balanced, and therefore the given measurement labelling is the unique labelling resulting in gflow.

\subsection{Triple corrigibility}

When a labelled open graph is unbalanced, the above uniqueness result does not apply: many different measurement labellings may result in gflow.
It actually turns out that vertices on which measurement labellings can differ have special properties.
Firstly, whenever there exist two lgflows whose labellings differ only on a single vertex, then this vertex can be measured in all three planes.

\begin{theorem}\label{Th:if two planes then three planes}
    Let $(G,I,O)$ be an open graph and let $\ld$, $\ld'$, and $\ld''$ be three different measurement labellings that differ only on a vertex $a$.
    Suppose that $(G,I,O,\ld)$ and $(G,I,O,\ld')$ have gflow.
    Then $(G,I,O,\ld'')$ also has gflow.
\end{theorem}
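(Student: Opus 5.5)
The plan is to reduce the statement to the linear-algebraic criterion of Lemma~\ref{lemma:linear systems}, evaluated at one carefully chosen set $\Solved^*$ of ``solved'' vertices. The key observation is that, with $A=\Adj[V\setminus\Solved,\ \Solved\setminus I]$ fixed, the three right-hand sides for $a$ are $e_a$ (for $\XY$), $b_a$ (for $\YZ$) and $b_a+e_a$ (for $\XZ$). Any two of them sum to the third over $\Ftwo$. So if two of the systems are solvable for the same $\Solved$, the third is solvable too, by adding the two solution vectors. Since $\ld,\ld',\ld''$ are three different labellings agreeing off $a$, the values $\ld(a),\ld'(a),\ld''(a)$ are exactly the three planes, and $a\notin I$ by Remark~\ref{rem:inputs-XY}. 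The whole difficulty is therefore to find a single set $\Solved^*$ from which $a$ is solvable in both planes $\ld(a)$ and $\ld'(a)$.

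\textbf{Step 1: solvability is monotone in $\Solved$.} I will show that if $O\subseteq \Solved\subseteq \Solved'$, $a\notin\Solved'$, and $a$ admits a correction set $\mathcal{C}$ in a given plane with $\future{a}{\mathcal{C}}\subseteq\Solved$, then the same $\mathcal{C}$ works for $\Solved'$. This is immediate from the definition of the forward cone, or equivalently from the fact that enlarging $\Solved$ deletes rows of $A$ and adds columns. For a vertex $v\neq a$ with a fixed label, the same monotonicity holds, and its solvability does not depend on the label of $a$, only on whether $a\in\Solved$.

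\textbf{Step 2: a label-independent closure.} Consider the greedy process, with all labels other than that of $a$ fixed to their common values, which starting from $O$ repeatedly adds any vertex $v\neq a$ that is solvable. By Step~1 this process reaches a unique maximal set $\Solved^*$, independent of the order of additions and independent of the label of $a$. I then claim that $a$ is solvable from $\Solved^*$ in plane $\ld(a)$. To see this, take the gflow $(c,\prec)$ for $\ld$ and let $T=O\cup\{w : a\prec w\}$. Every non-output $w\in T$ has $\future{w}{c(w)}\subseteq T$ by \conref{g1} and \conref{g2} and transitivity. Ordering $T\setminus O$ by a linear extension of $\prec$ from the top, each such $w$ is solvable when added, so the greedy process can build $T$, and hence $T\subseteq \Solved^*$ by maximality. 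Since $\future{a}{c(a)}\subseteq T\subseteq\Solved^*$, the claim follows. The symmetric argument with $(c',\prec')$ shows that $a$ is also solvable from $\Solved^*$ in plane $\ld'(a)$. By the parity observation above, $a$ is then solvable from $\Solved^*$ in plane $\ld''(a)$, with correction set $c(a)\symd c'(a)$ (suitably taken to be the one produced from the two solution vectors).

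\textbf{Step 3: completing the gflow for $\ld''$.} Starting from $\Solved^*\cup\{a\}$, continue greedily. Under $\ld$ the full greedy process succeeds, because a gflow exists; the same argument as in Step~2 applied to up-sets shows that greedy never gets stuck when gflow exists. The vertices solvable after $a$ has been added do not depend on $a$'s label, so the same sequence of additions works for $\ld''$. Assembling the correction sets layer by layer gives $c''$, and the layering is acyclic, so Theorem~\ref{thm:extending correction set to Pauli flow} yields a gflow for $(G,I,O,\ld'')$. I expect the main obstacle to be Step~2: making rigorous that the greedy closure $\Solved^*$ is well defined and contains the whole forward cone of $a$ from \emph{both} gflows simultaneously. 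This rests on the monotonicity in Step~1 and on the up-set argument, and I will write those carefully.
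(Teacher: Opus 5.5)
Your proof is correct, and it rests on the same two facts as the paper's. First, whether a vertex $v\neq a$ can be corrected from a given solved set does not depend on the label of $a$. Second, $c(a)\symd c'(a)$ has exactly the membership pattern required by the third plane. Where you differ is in how you get a single solved set that contains both forward cones of $a$.

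The paper runs a deterministic maximally delayed layer algorithm on both labellings and assumes without loss of generality that $a$ sits at least as deep in the unprimed flow. It then appeals to the fact that layers of depth below $d'$ coincide in the two runs. From this it patches $(c,\prec)$ only at $a$, adding the cone of $c'(a)$ as new successors of $a$.

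You instead take the label-independent greedy closure $\Solved^*$ of everything except $a$, and prove its order-independence from monotonicity. You then use the up-sets $\{w : a\prec w\}$ and $\{w : a\prec' w\}$ of two \emph{arbitrary} gflows to show that both cones lie in $\Solved^*$. The rest of the flow for $\ld''$ is rebuilt by a second greedy pass.

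What your version buys:
\begin{itemize}
\item It is symmetric in $\ld$ and $\ld'$, with no depth comparison.
\item It needs no particular flow-finding algorithm.
\item It proves facts the paper's argument only asserts: that the early layers of the two runs agree, and that no element of the cone of $c'(a)$ lies below $a$ in $\prec$.
\end{itemize}

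What the paper's version buys: it is shorter, and it gives an explicit local modification in which $c''$ agrees with $c$ away from $a$. This is exactly what Corollary~\ref{cor:not-in-correction-sets} reuses. Your construction yields the same property if you observe that every addition after $\Solved^*\cup\{a\}$ is label-independent, so the witnesses for all $v\neq a$ can be shared across the three labellings.
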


\begin{proof}
    Fix a deterministic gflow-finding algorithm that works backwards from the outputs in layers to find a maximally delayed gflow.
    Let $(c,\prec)$ be the gflow this algorithm finds on $(G,I,O,\ld)$ and let $(c',\prec')$ be the gflow it finds on $(G,I,O,\ld')$.
    Define $d$ to be the depth of $a$ in $(c,\prec)$, \ie the longest path in the partial order from $a$ to an output, and define $d'$ to be the depth of $a$ in $(c',\prec')$; without loss of generality, assume $d \geq d'$ (otherwise swap the roles of the primed and the unprimed case).

    Note that for vertices with depth $< d'$, there can be no effect of the change in measurement label, as the flow-finding algorithm works backwards from the outputs.
    Thus, considering the primed correction function and the unprimed partial order, for any $u\in \left(c'(a) \cup \odd{c'(a)}\right) \setminus \{ a \}$, we have $\neg (u \prec a)$.
    Define a tuple $(c'',\prec'')$, where
    \begin{equation}\label{eq:def-cprimeprime}
        c''(v) := \begin{cases}
            c(a)\symd c'(a) &\text{if } v = a \\
            c(v) &\text{otherwise,}
        \end{cases}
    \end{equation}
    and $\prec''$ is the transitive closure of ${\prec} \cup \left(\{(a,w)\mid w\in c'(a) \vee w\in\odd{c'(a)}\}\setminus\{(a,a)\}\right)$.
    As we only introduce successors for $a$ and these successors are either already successors or previously incomparable to $a$, the relation $\prec''$ is a valid strict partial order.
    It is straightforward to check that $c''(a)$ is an appropriate correction set for $a$ with label $\ld''$.
    Hence $(c'',\prec'')$ is a gflow on $(G,I,O,\ld'')$.
\end{proof}

This leads to the following definition.

\begin{definition}\label{def:triple-correctable}
    Let $(G,I,O)$ be an open graph. We call a vertex $a \in \comp{O}$ \emph{triple-correctable} when there exist $\ld, \ld', \ld''$ pairwise differing only on $a$ with $(G,I,O,\mu)$ having gflow for all $\mu \in \{ \ld, \ld', \ld'' \}$.
\end{definition}

By Theorem~\ref{Th:if two planes then three planes}, any vertex that can be measured in two planes can be measured in all three planes, \ie is triple-correctable.
Note that it is important for triple-corrigibility that the measurement labels of other vertices do not change across the different labellings.
In other words, triple-corrigibility is relative to the measurement labels of the other vertices, as demonstrated in the following example.

\begin{example}
    Consider the path graph of length 3, \ie $G=(V,E)$ with $V = \{a,b,c\}$ and $E = \{ \{a,b\}, \{b,c\}\}$.
	Suppose $I = \emptyset$ and $O=\{b\}$.
    \ctikzfig{LOGs/triple-correctable-counterex}
    Then $\ld(a)=\YZ$ and $\ld(c)=\XY$ is a labelling that leads to gflow: we may take $g(a) = \{a\}$ and $g(c) = \{b\}$, with $c\prec a\prec b$.
    In fact, with $\ld(a)=\YZ$, the vertex $c$ is triple-correctable: we could instead change its measurement label to $\YZ$ and its correction set to $\{c\}$ or we could change its measurement label to $\XZ$ and its correction set to $\{b,c\}$.
    
    On the other hand, while $\ld(c)$ is fixed to $\XY$, changing the measurement label of vertex $a$ would break the gflow: the labels $\XY$ and $\XZ$ both require a neighbour in their correction set.
    Yet if $b$ is in both correction sets, then $a\in\odd{g(c)}$ and $c\in\odd{g(a)}$, which means there is no consistent partial order.
    Thus, in the context of the given labelling, $a$ is not triple-correctable.
    This difference in terms of triple-corrigibility arises even though in the unlabelled graph the roles of $a$ and $c$ are entirely symmetric.
\end{example}

Having several flows on \LOG{}s that differ only in one measurement label gives more flexibility in focusing.
In particular, we can ensure that the vertex $a$ (that can have different measurement labels) does not appear in correction sets of other vertices, even in the context of the labelling where $a$ is $\XY$-measured.

\begin{corollary}\label{cor:not-in-correction-sets}
    Let $(G,I,O)$ be an open graph and let $\ld$, $\ld'$, and $\ld''$ be three different measurement labellings that differ only on the vertex $a\in\comp{O}$.
    Suppose that $(G,I,O,\mu)$ has gflow for all $\mu\in\{\ld,\ld',\ld''\}$.
    Then for any such $\mu$ there exists a focused gflow on $(G,I,O,\mu)$ such that $a$ does not appear in correction sets (other than potentially its own).
\end{corollary}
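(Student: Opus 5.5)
Since $a$ can be measured in all three planes, the plan is to exploit the $\YZ$ labelling, where $a$ behaves like a ``Pauli-$Z$-type'' vertex: whenever $a$ appears in the correction set of another vertex $v$, we multiply in the correction set of $a$ itself to remove it. Concretely, I would fix $\mu$ and take a focused gflow $(c_\mu,\prec_\mu)$ on $(G,I,O,\mu)$, which exists by Proposition~\ref{prop:focused}. I would also take a gflow $(c_{\YZ},\prec_{\YZ})$ on the labelling among $\{\ld,\ld',\ld''\}$ that assigns $\YZ$ to $a$; by condition~\ref{g4}, $a\in c_{\YZ}(a)$ and $a\notin\odd{c_{\YZ}(a)}$. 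The labels of all vertices other than $a$ agree across the three labellings, so corrections for those vertices are interchangeable.

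The key step is the following. For every $v\ne a$ with $a\in c_\mu(v)$, set $\tilde c(v):=c_\mu(v)\symd c_{\YZ}(a)$, and keep $\tilde c(v)=c_\mu(v)$ otherwise, including $\tilde c(a)=c_\mu(a)$. Then $a\notin\tilde c(v)$, and the parity of $a$ in $\odd{\tilde c(v)}$ is unchanged. The local effect of the correction at $v$ is unchanged too, provided $v\notin c_{\YZ}(a)\cup\odd{c_{\YZ}(a)}$, which holds once $v$ precedes $a$.

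The main obstacle is the order. I have to show that $v\prec a$ in a common order, so that the forward cone of $a$ under $c_{\YZ}$ lies strictly after $v$. I would handle this as in the proof of Theorem~\ref{Th:if two planes then three planes}. Use maximally delayed gflows from the layered algorithm, and compare the depth of $a$ across the labellings. Then define the new order as the transitive closure of $\prec_\mu$ together with the pairs $(a,w)$ for $w$ in the forward cone of $a$ under $c_{\YZ}$. This needs the fact that $c_{\YZ}(a)$ only involves vertices not preceding $a$ in $\prec_\mu$, which is the same depth argument used there. Since $a\in c_\mu(v)$ gives $v\prec_\mu a$ by~\ref{g1}, the successors of $a$ become successors of $v$, and~\ref{g1} and~\ref{g2} hold for $\tilde c(v)$.

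Finally, $\tilde c$ may fail to be focused, because $c_{\YZ}(a)$ was arbitrary. I would refocus it using the stepwise focusing procedure of \cite[Section~3.3]{backensThereBackAgain2021}. That procedure replaces $\tilde c(v)$ by $\tilde c(v)\symd\tilde c(w)$ for offending $w\succ v$, so I must check that it never reintroduces $a$. This holds because $a\notin\tilde c(w)$ for all $w\ne a$. The only remaining case is $w=a$, which arises when $a\in\odd{\tilde c(v)}$ with $\mu(a)=\XY$. In that case I would instead apply the correction of $a$ with $\XY$ label, and argue it can be chosen without $a$ in it, which condition~\ref{g3} indeed guarantees. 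Thus every refocusing step keeps $a$ out of the correction sets of the other vertices.
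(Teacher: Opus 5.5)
There is a genuine gap, in the step you flag as the main obstacle. You need that the forward cone of $c_{\YZ}(a)$ contains no vertex preceding $a$ in $\prec_\mu$, and the depth argument of Theorem~\ref{Th:if two planes then three planes} does not give this. That argument only concerns the maximally delayed gflows of one fixed layered algorithm. It also runs in one direction only: the correction set of $a$ from the labelling in which $a$ is \emph{shallower} fits into the order of the labelling in which $a$ is \emph{deeper}. That is why the theorem can assume $d\ge d'$ without loss of generality. You have fixed $\mu$ with $\mu(a)=\XY$ and an arbitrary focused gflow on it, so you cannot swap roles.

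For the focused gflow you start from, the claim can actually fail. Take the path with edges $o_2x$, $xw$, $wa$, $ao_1$, with $I=\emptyset$, $O=\{o_1,o_2\}$, and $x,w$ labelled $\XY$. All three labels of $a$ admit gflow; for $\YZ$ take $c(a)=\{a\}$, $c(w)=\{x\}$, $c(x)=\{o_2\}$. For $\mu(a)=\XY$, the assignment $c_\mu(x)=\{w,o_1\}$, $c_\mu(w)=\{a\}$, $c_\mu(a)=\{o_1\}$ is a focused gflow with $x\prec w\prec a$.

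Every $\YZ$ correction set $C$ of $a$ has $w$ or $x$ in its forward cone, because if $w,x\notin C$ then $w\in\odd{C}$. So your new order always has a cycle. Worse, $\tilde c(w)=C\setminus\{a\}$ must contain $x$ to satisfy \ref{g3} at $w$, which forces $w\prec x$. Meanwhile $\tilde c(x)=c_\mu(x)=\{w,o_1\}$ forces $x\prec w$. Hence no choice of $c_{\YZ}(a)$ and no order make $\tilde c$ a gflow. In this example $a$ has depth $1$ under $\XY$ and depth $2$ under $\YZ$, which is exactly the direction the depth argument cannot handle.

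What is missing is a link between the two correction functions. The paper does not start from an arbitrary focused gflow. It first uses the construction in the proof of Theorem~\ref{Th:if two planes then three planes}, with the deeper labelling as common base and the analogous $c'''$. This gives gflows $c_{\XY},c_{\XZ},c_{\YZ}$ on the three labellings that agree on every vertex other than $a$.

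It then modifies $c_{\XY}$ exactly as you propose, working in the induced order of $c_{\XY}$. Acyclicity follows from the agreement. A chain witnessing $w\preceq v\prec a$, up to the point where it first reaches $a$, only uses correction sets of vertices $\neq a$, and these coincide with those of $c_{\YZ}$. So $w$ would also precede $a$ for $c_{\YZ}$, contradicting that $w$ lies in the forward cone of $c_{\YZ}(a)$.

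Focusing is applied only at the end. Your argument that refocusing never reintroduces $a$ is fine and matches the paper's last step: it uses $a\notin\tilde c(u)$ for $u\neq a$, and \ref{g3} for $u=a$.
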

\begin{proof}
    The result follows from basic focusing if $\mu(a)\in\{\YZ,\XZ\}$, so we only need to consider the case where $\mu(a)=\XY$.
    In the proof of Theorem~\ref{Th:if two planes then three planes}, note that the correction function $c''$ defined in \eqref{eq:def-cprimeprime} for $(G,I,O,\ld'')$ differs from $c$ only on $a$.
    We could similarly define an alternative correction function $c'''$ for $(G,I,O,\ld')$ by
    \[
        c'''(v) := \begin{cases}
            c'(a) &\text{if } v = a \\
            c(v) &\text{otherwise,}
        \end{cases}
    \]
    with an analogously defined partial order, to arrive at a second valid gflow for $(G,I,O,\ld')$.
    Thus there exist (not necessarily focused) gflows on all three labelled open graphs whose correction functions differ only for $a$.
    Denote these three correction functions by $c_\XY, c_\XZ, c_\YZ$ depending on the respective label of $a$ in the \LOG\ to which they apply.
    Let $\prec$ be the minimal partial order such that $(c_{\XY},\prec)$ is a gflow.
    Define
    \[
        \hat{c}(v) := \begin{cases}
            c_{\XY}(v) \symd c_{\YZ}(a) &\text{if } a\in c_{\XY}(v) \\
            c_{\XY}(v) &\text{otherwise},
        \end{cases}
    \]
    then this has the desired property of $a$ not appearing in correction sets.
    It still satisfies conditions \conref{g3}--\conref{g5} because $c_\XY(v) = c_\YZ(v)$, so $v\notin c_\YZ(a)\cup \odd(c_\YZ(a))$ as this would contradict the partial order conditions for the gflow induced by $c_\YZ$.

    Next define $\hat{\prec}$ to be the transitive closure of ${\prec} \cup \{(v,w)\mid a\in c_{\XY}(v) \wedge w\in c_{\YZ}(a) \cup \odd{c_{\YZ}(a)}\}$.
    The pair $(\hat{c}, \hat{\prec})$ is a gflow if and only if $\hat{\prec}$ is a partial order, which remains to prove.
    In the following, we will write $w\preceq v$ as a shorthand for\footnote{This is a different convention than that used in some related work \cite[Definition~4.1]{simmonsRelatingMeasurementPatterns2021}, where $u \preceq v$ means $\neg(v \prec u)$.} $w = v \vee w\prec v$.
    
    Suppose for a contradiction that $\hat{\prec}$ is not a strict partial order, then there exists $v$ such that $a\in c_{\XY}(v)$ and $w\in c_{\YZ}(a) \cup \odd{c_{\YZ}(a)}$ such that $w\preceq v$ holds.
    As $a\in c_{\XY}(v)$, this then implies that $w\preceq v\prec a$.
    But $w\prec a$ implies there is a sequence $w_0,\ldots,w_k$ such that $w=w_0$ and $w_k=a$ and for each $j\in [k-1]$, we have $w_{j+1}\in c_{\XY}(w_j)\cup\odd{c_{\XY}(w_j)}$.
    Moreover, $w_0,\ldots, w_{k-1} \neq a$ and thus $c_{\XY}(w_j)=c_{\YZ}(w_j)$ for all $j\in [k-1]$.
    But then $c_{\YZ}$ cannot be a valid gflow as $w$ must both precede and succeed $a$ in any associated partial order, a contradiction.
    Hence $\hat{\prec}$ is indeed a strict partial order and $(\hat{c},\hat{\prec})$ is a valid gflow.
    
    This flow can be focused by standard techniques and the focusing process will not re-introduce $a$ to the correction sets of any vertices in $\comp{O}\setminus\{a\}$.
\end{proof}

Having a gflow in which an $\XY$-measured vertex does not appear in correction sets means that this vertex can be turned into an input:

\begin{theorem}\label{Th:if two planes then input}
    Let $(G,I,O)$ be an open graph and let $\ld$ and $\ld'$ be two different measurement labellings that differ only on $a$.
    Suppose that $(G,I,O,\ld)$ and $(G,I,O,\ld')$ have gflow.
    Then $(G,I \cup \{ a \},O,\ld_{\XY})$ also has gflow where $\ld_{\XY}(a) = \XY$ and $\ld_{\XY}(u) = \ld(u) = \ld'(u)$ for $u \ne a$.
\end{theorem}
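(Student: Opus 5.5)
The plan is to reduce to Corollary~\ref{cor:not-in-correction-sets} and then verify the gflow conditions directly for the enlarged input set. By Theorem~\ref{Th:if two planes then three planes}, since $\ld$ and $\ld'$ differ only on $a$ and both result in gflow, the third labelling $\ld''$ that differs from them only on $a$ also results in gflow. So $a$ is triple-correctable, and in particular $(G,I,O,\ld_{\XY})$ has gflow, because $\ld_{\XY}$ is one of $\ld,\ld',\ld''$. Note also that $a\notin I$: if it were, one of the two labellings would give $a$ a non-$\XY$ label, contradicting Remark~\ref{rem:inputs-XY}. Hence $a\notin O$ is a non-input non-output, and $I\cup\{a\}$ is a genuinely new input set.

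Next, I apply Corollary~\ref{cor:not-in-correction-sets} with $\mu=\ld_{\XY}$. This gives a focused gflow $(c,\prec)$ on $(G,I,O,\ld_{\XY})$ in which $a\notin c(v)$ for all $v\neq a$. Since $\ld_{\XY}(a)=\XY$, condition \conref{g3} gives $a\notin c(a)$ as well. So $a$ lies in no correction set, and every $c(v)$ is a subset of $\comp{(I\cup\{a\})}$. Therefore $c$ is a valid correction function for the open graph $(G,I\cup\{a\},O)$.

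It remains to check \conref{g1}--\conref{g5} for $(c,\prec)$ on $(G,I\cup\{a\},O,\ld_{\XY})$. The underlying graph, labelling, correction sets and the sets $\odd{c(v)}$ are all unchanged, so the conditions hold verbatim. The only definitional difference is the codomain restriction $c\colon\comp{O}\to\powerset(\comp{(I\cup\{a\})})$, which was secured in the previous step. The new input $a$ is $\XY$-labelled, so Remark~\ref{rem:inputs-XY} is consistent with this. If one also wants $a$ minimal in the order, this follows from focusing as in Observation~\ref{obs:inputs-minimal-outputs-maximal}, though it is not needed.

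The main obstacle is already handled by the earlier results: all the genuine work is in Corollary~\ref{cor:not-in-correction-sets}, namely removing $a$ from other correction sets. The one point I would double-check is that the final focusing step in that corollary does not reintroduce $a$. The corollary asserts this, and it is consistent with \conref{FX}: focusing only removes $\XY$-labelled vertices from correction sets and never adds them.
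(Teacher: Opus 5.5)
Your proof is correct and follows the paper's argument: Theorem~\ref{Th:if two planes then three planes} gives gflow for $\ld_{\XY}$, Corollary~\ref{cor:not-in-correction-sets} removes $a$ from the other correction sets, and the same gflow is then reused with $a$ as an input; you also make explicit a point the paper leaves implicit, namely that $a\notin c(a)$ by \conref{g3}. One correction to your closing aside: focusing does not merely remove $\XY$-labelled vertices from correction sets (\conref{FX} in fact expels non-$\XY$ vertices, and a focusing step $c(v)\mapsto c(v)\symd c(w)$ can add vertices), so the real reason $a$ is not reintroduced is that no correction set contains $a$, including $c(a)$, and hence no such symmetric difference can add it.
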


\begin{proof}
    By Theorem~\ref{Th:if two planes then three planes}, the \LOG $(G,I,O,\ld_{\XY})$ has gflow since the assumption of having two labellings that result in gflow implies that in fact all three possible labellings result in gflow.
    Moreover, by Corollary~\ref{cor:not-in-correction-sets}, $(G,I,O,\ld_{\XY})$ has a gflow in which $a$ does not appear in correction sets; denote this gflow by $(c,\prec)$.
    Then $(c,\prec)$ is also a gflow on the \LOG with $a$ turned into an input: $(G,I\cup\{a\},O,\ld_{\XY})$, ending the proof.
\end{proof}

Therefore, any triple-correctable vertex can be turned into an input.

\begin{corollary}\label{cor:triple-correctable then input}
    Let $(G,I,O)$ be an open graph with lgflow and let $v \in V$ be triple-correctable. Then $(G,I\cup\{v\},O)$ also has lgflow.
\end{corollary}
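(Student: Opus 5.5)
This follows almost directly from Definition~\ref{def:triple-correctable} and Theorem~\ref{Th:if two planes then input}; the plan is simply to unpack the definition and feed it into that theorem. Since $v$ is triple-correctable, there exist labellings $\ld,\ld',\ld''$ of $(G,I,O)$ that pairwise differ only on $v$, and each of the three resulting labelled open graphs has gflow. In particular $v\in\comp{O}$ by the definition, and $v\notin I$: by Remark~\ref{rem:inputs-XY}, an input non-output can only carry the label $\XY$, so it could not take two different labels in labellings resulting in gflow. Hence $I\cup\{v\}$ is a genuinely new input set, and the $\XY$-only requirement on inputs will be respected below.

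Next, pick any two of these labellings, say $\ld$ and $\ld'$. They differ only on $v$, and both $(G,I,O,\ld)$ and $(G,I,O,\ld')$ have gflow, so the hypotheses of Theorem~\ref{Th:if two planes then input} hold with $a=v$. That theorem gives gflow for $(G,I\cup\{v\},O,\ld_{\XY})$, where $\ld_{\XY}(v)=\XY$ and $\ld_{\XY}$ agrees with $\ld$ elsewhere. The triple $(\ld_{\XY},c,\prec)$, with $(c,\prec)$ the gflow supplied by the theorem, is then an lgflow on $(G,I\cup\{v\},O)$ in the sense of Definition~\ref{def:lgflow}, which is exactly the claim.

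The only point needing care is the bookkeeping. The new open graph must still be well formed: its correction sets must avoid the enlarged input set. This is precisely what Corollary~\ref{cor:not-in-correction-sets} guarantees through Theorem~\ref{Th:if two planes then input}, since $v$ does not lie in any correction set, its own included, because $\ld_{\XY}(v)=\XY$ and {\conref{g3}} gives $v\notin c(v)$. The hypothesis that $(G,I,O)$ has lgflow is implied by triple-corrigibility and is not needed separately. Beyond these checks there is no real obstacle; the substantive work has already been done in Theorem~\ref{Th:if two planes then three planes} and Corollary~\ref{cor:not-in-correction-sets}.
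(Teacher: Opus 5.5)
Your proposal is correct and is the same argument as the paper, which says only that the result is immediate from Definition~\ref{def:triple-correctable} combined with Theorem~\ref{Th:if two planes then input}. Your extra checks ($v\notin I$ via Remark~\ref{rem:inputs-XY}, and $v$ avoiding every correction set) are sound, but they are already handled inside the proof of Theorem~\ref{Th:if two planes then input}.
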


\begin{proof}
    Immediate by combining Definition~\ref{def:triple-correctable} with Theorem~\ref{Th:if two planes then input}.
\end{proof}

Running the algorithm from Section~\ref{sec:algorithm} on an unbalanced open graph need not detect any triple-correctable vertices, thus giving no candidate for a vertex that may be turned into an input.
This is because assigning different measurement labels to a vertex (while keeping the rest of the labelling the same) may mean that it appears at different depth in the resulting gflow, while our algorithm only identifies the minimal-depth labellings.
For example, in Appendix~\ref{subsec:worked example unbalanced}, the starting open graph is unbalanced, but the algorithm does not flag any vertex as triple-correctable.
Yet, as we show in the next subsection, we can still efficiently detect vertices that can be turned into inputs to transform an unbalanced open graph into a balanced one.

\subsection{Balancing the number of inputs and outputs}\label{subsec:balancing}

So far, we have considered various aspects of flow on open graphs, dropping the requirement for the labelling $\ld$ to be specified in advance.
We now go one step further and show that, for an unbalanced \LOG $(G,I,O,\ld)$ (\ie with $\abs{I}<\abs{O}$) that has gflow, additional inputs compatible with the continued existence of gflow can also be identified efficiently.
By an `additional input', we here mean a vertex $a\in\comp{I}$ such that $(G,I\cup\{a\},O,\ld')$ has gflow, where $\ld'(a)=\XY$ (unless $a$ is an output) and $\ld'(v)=\ld(v)$ for all $v\in\comp{O}\setminus\{a\}$.

We already have one way of identifying some vertices that can be made into inputs: by Theorem~\ref{Th:if two planes then input}, if there exist at least two measurement labellings for the open graph $(G,I,O)$ that are compatible with gflow and that differ only in the label assigned to a single vertex $v$, then $(G,I\cup\{v\},O)$ also has lgflow.
This condition is sufficient but not necessary; for example, it cannot identify outputs that can also become inputs.

Instead, we will make use of the focused sets of Definition~\ref{def:focusing} that form a vector space of dimension $\abs{O}-\abs{I}$ and thus are intuitively linked to how `unbalanced' the open graph is.
First, define the `extended focused set' to be the set of vertices that receive a non-trivial effect from the graph stabiliser that is implicitly defined by this focused set.

\begin{definition}\label{def:extended-focused-set}
	Let $(G,I,O,\ld)$ be an open graph and let $\focusedset$ be one of its focused sets.
	The corresponding \emph{extended focused set} is $\extfocset = \focusedset \cup (\odd{\focusedset}\setminus O)$.
	Define $\allfocused$ to be the set of all focused sets, and let $\extallfoc := \bigcup_{\focusedset\in\allfocused} \extfocset$.
\end{definition}

There are technical reasons for excluding from the definition of an extended focused set those outputs that are simply in the odd neighbourhood of the original focused set, \cf Remark~\ref{rem:extended-focused-set} at the end of this section.

While the number of distinct focused sets grows exponentially with the difference $\abs{O}-\abs{I}$, to determine $\extallfoc$, it suffices to know a spanning set of the vector space $\allfocused$: $v\in\extallfoc$ if and only if there exists an element $\focusedset$ of the spanning set such that $v\in\extfocset$.
This spanning set need not be a basis, but $\abs{O}-\abs{I}$ elements suffice to characterise the entire space.

Next, we show that the set $\extallfoc$ defined above does not contain any inputs, so we will not need to worry about candidate vertices from this set already being inputs.

\begin{lemma}\label{lem:focused-set-inputs-outputs}
	Let $\Gamma = (G,I,O,\ld)$ be a labelled open graph with gflow, then $I\cap\extallfoc =\emptyset$.
\end{lemma}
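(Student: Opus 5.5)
Fix a focused set $\focusedset\in\allfocused$ and an input $i\in I$; we show $i\notin\extfocset$, and since $\extallfoc$ is the union of the $\extfocset$ this suffices. By definition $\extfocset = \focusedset\cup(\odd{\focusedset}\setminus O)$, so there are two memberships to exclude.

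The first is immediate: a focused set satisfies $\focusedset\subseteq\comp{I}$ by Definition~\ref{def:focusing}, so $i\notin\focusedset$.

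The second membership, $i\in\odd{\focusedset}\setminus O$, is the substantive case. Assume $i\in\odd{\focusedset}$ and $i\notin O$; we derive a contradiction. Since $i\in I\setminus O$ and $\Gamma$ has gflow, Remark~\ref{rem:inputs-XY} gives $\ld(i)=\XY$. On the other hand, $\focusedset$ is focused over $\comp{O}$ and $i\in\comp{O}\cap\odd{\focusedset}$, so condition \conref{FZ} forces $\ld(i)\in\{\XZ,\YZ\}$. These two conclusions are incompatible, so $i\notin\odd{\focusedset}\setminus O$.

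Combining both parts, no input lies in any extended focused set, hence $I\cap\extallfoc=\emptyset$. The only role of the gflow hypothesis is to invoke Remark~\ref{rem:inputs-XY}. The case $i\in I\cap O$ causes no trouble: such an $i$ is excluded from $\odd{\focusedset}\setminus O$ by the set difference and from $\focusedset$ because $\focusedset\subseteq\comp{I}$. There is no real obstacle here. The only point needing care is that the definition of extended focused sets removes outputs from the odd-neighbourhood part but not from the $\focusedset$ part, and $\focusedset\subseteq\comp{I}$ handles the latter.
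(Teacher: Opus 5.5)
Your proof is correct and follows essentially the same route as the paper. Inputs are excluded from $\focusedset$ itself because $\focusedset\subseteq\comp{I}$. Remark~\ref{rem:inputs-XY} together with \conref{FZ} then forces any input in $\odd{\focusedset}$ to be an output, and outputs are removed from the odd-neighbourhood part of $\extfocset$ by definition.
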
	
\begin{proof}
	Let $\focusedset$ be any focused set of $\Gamma$.
	By Definition~\ref{def:focusing}, $\focusedset\sse\comp{I}$, \ie inputs do not appear in focused sets.
	Moreover, by Remark~\ref{rem:inputs-XY}, all inputs are either outputs or $\XY$-measured.
	Thus, condition~\conref{FZ} means that \emph{measured} vertices in $\odd{\focusedset}$ cannot be inputs: hence $I\cap\odd{\focusedset} \sse O$.
	But outputs in the odd neighbourhood were excluded from the definition of the extended focused set, so $I\cap\extfocset = \emptyset$.
	As $\focusedset$ was arbitrary, we have $I\cap\extallfoc = \bigcup_{\focusedset\in\allfocused} (I\cap \extfocset) = \emptyset$.
\end{proof}

\begin{theorem}\label{thm:add-input}
	Let $\Gamma = (G,I,O,\ld)$ be a labelled open graph which satisfies $\abs{I}<\abs{O}$ and has focused gflow $(g,\prec)$.
	Let $\extallfoc$ be the set of all vertices that appear in some extended focused set of $\Gamma$.
	Suppose $a\in\extallfoc$ is minimal with respect to $\prec$ among vertices in $\extallfoc$.
	Define $\ld':\comp{O}\to\{\XY,\XZ,\YZ\}$ as follows:
	\[
		\ld'(v) := \begin{cases}
				\XY &\text{if } v = a \\
				\ld(v) &\text{otherwise.}
			\end{cases}
	\]
	Then $(G,I\cup\{a\},O,\ld')$ has gflow.
\end{theorem}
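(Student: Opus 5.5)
We will build a new gflow $(g',\prec')$ on $(G,I\cup\{a\},O,\ld')$ by modifying the focused gflow $(g,\prec)$ with a single focused set. Since $a\in\extallfoc$, fix a focused set $\focusedset\in\allfocused$ with $a\in\extfocset$. By Lemma~\ref{lem:focused-set-inputs-outputs}, $a\notin I$. The idea is that $\focusedset$ induces an open graph stabiliser that acts nontrivially on $a$ and, by focusing, acts only on vertices whose labels make it "harmless" for their corrections. We use it to remove $a$ from all correction sets and, where needed, to supply a correction for $a$ itself.

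\textbf{Step 1: every vertex of $\extfocset$ lies weakly above $a$.} Minimality of $a$ in $\extallfoc$ only says that no element of $\extallfoc$ is strictly below $a$. For the construction we need more: for every vertex $v$ whose correction set we modify, all of $\extfocset$ (and $\odd{\focusedset}\cap O$) must come after $v$. We obtain this as follows. For each $v\in\comp{O}$, the set $g(v)$ is focused over $\comp{O}\setminus\{v\}$. Hence $g(v)\symd\focusedset$ is again focused over $\comp{O}\setminus\{v\}$, so it is a legitimate replacement correction set wherever the labelling conditions at $v$ are unchanged. We only replace $g(v)$ when $a\in g(v)$, and $a\in g(v)$ implies $v\prec a$.

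\textbf{Step 2: kill $a$ in other correction sets.} Set $g'(v):=g(v)\symd\focusedset$ if $v\ne a$ and $a\in g(v)$, and $g'(v):=g(v)$ otherwise. Then we must check:
\begin{itemize}
\item $a\in\focusedset$, so $a\notin g'(v)$ for $v\ne a$.
\item The labelling conditions \conref{g3}--\conref{g5} at $v$ are preserved. If $v\in\focusedset\cup\odd{\focusedset}$, then $v\in\extfocset$. Since $v\prec a$, this would contradict the minimality of $a$. So $v\notin\focusedset\cup\odd{\focusedset}$, and the membership of $v$ in $g'(v)$ and in $\odd{g'(v)}$ is unchanged.
\end{itemize}
This step needs $a\in\focusedset$ rather than only $a\in\odd{\focusedset}\setminus O$. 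By \conref{FX}/\conref{FZ} and the label of $a$, the case $a\in\odd{\focusedset}\setminus\focusedset$ forces $\ld(a)\in\{\XZ,\YZ\}$. In that case $a$ is not in other focused correction sets anyway, because \conref{FX} fails. So in that case we leave $g$ unchanged except at $a$.

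\textbf{Step 3: correct $a$ itself as an $\XY$ vertex not lying in its own correction set.} We take $g'(a)$ to be $g(a)$, $g(a)\symd\focusedset$, or $\focusedset\setminus\{a\}$-type combinations, chosen so that $a\notin g'(a)$ and $a\in\odd{g'(a)}$. The right choice depends on $\ld(a)$ and on which of $a\in\focusedset$, $a\in\odd{\focusedset}$ hold:
\begin{itemize}
\item If $\ld(a)=\XY$, then $a\in\focusedset$ is impossible by \conref{FX}, so $a\in\odd{\focusedset}$. Keep $g'(a)=g(a)$.
\item If $\ld(a)\in\{\XZ,\YZ\}$, then $a\in g(a)$. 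Use $g(a)\symd\focusedset$ when $a\in\focusedset$. This also requires checking that the $\odd{}$-membership of $a$ ends up odd, which is a case split on \conref{FZ}.
\end{itemize}
In every case, the new successors of $a$ lie in $\extfocset\cup O$.

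\textbf{Step 4: acyclicity.} We take $\prec'$ to be the induced order $\prec_{g'}$ and invoke Theorem~\ref{thm:extending correction set to Pauli flow}. The new relations are of the form $v\trl w$ with $w\in\extfocset\cup O$ and $v\preceq a$. A cycle would have to re-enter the region below $a$ from some $w\in\extfocset$. But $\extfocset$ contains nothing strictly below $a$, by minimality, and its elements are incomparable or above $a$ in $\prec$.

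This ordering argument is the main obstacle. Minimality gives only the "not strictly below" condition, not "above". We would handle it by adding the new relations in order of increasing depth, as in the proof of Corollary~\ref{cor:not-in-correction-sets}, and showing that any cycle yields a $\prec$-chain from some element of $\extfocset$ down to $a$. Finally, $a$ does not appear in any correction set, so $(g',\prec')$ is a gflow with $a$ added as an input.
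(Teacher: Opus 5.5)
Your overall plan is the paper's. You fix $\focusedset$ with $a\in\extfocset$ and split on whether $a$ is an output/$\XY$-measured or $\XZ$/$\YZ$-measured. In the first case you replace $g(v)$ by $g(v)\symd\focusedset$ whenever $a\in g(v)$; in the second you re-correct $a$. Your Step~2, including the check of \conref{g3}--\conref{g5} via minimality, matches the paper.

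\textbf{Step~3 is where the proof breaks.} You apply the focusing conditions backwards there.
\begin{itemize}
\item For $\ld(a)=\XY$, \conref{FX} does not exclude $a\in\focusedset$. It is \conref{FZ} that excludes $a\in\odd{\focusedset}\setminus O$, so $a\in\focusedset$ is forced. Step~2 relies on exactly this, since otherwise it would not remove $a$ from any correction set. Keeping $g'(a)=g(a)$ is still correct.
\item For $\ld(a)\in\{\XZ,\YZ\}$, \conref{FX} excludes $a\in\focusedset$. So the only correction set you actually specify, $g(a)\symd\focusedset$ ``when $a\in\focusedset$'', covers an empty case. In the real case $a\in\odd{\focusedset}\setminus\focusedset$, and $g(a)\symd\focusedset$ still contains $a$, because $a\in g(a)$ and $a\notin\focusedset$. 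That violates \conref{g3} and also the new input constraint.
\end{itemize}
The missing choice is simply $g'(a):=\focusedset$. It avoids $I\cup\{a\}$, satisfies $a\notin g'(a)$ and $a\in\odd{g'(a)}$, and its forward cone is $(\focusedset\cup\odd{\focusedset})\setminus\{a\}$.

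\textbf{Step~4 is not actually an obstacle.} The heading of Step~1, that every vertex of $\extfocset$ lies weakly above $a$, is false in general because such vertices may be incomparable to $a$. It is also unnecessary. Take $\prec'=\prec_{g'}$ as you propose. Every edge of $\trl_{g'}$ not already in $\trl_g$ goes from a source $u$ to some $w\in(\focusedset\cup\odd{\focusedset})\setminus\{u\}$. Here $u\prec a$ in the first case, and $u=a$ in the second.

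A cycle must use such a new edge $v\to w$. Follow the cycle from $w$ to the source $u$ of the next new edge (possibly $u=v$). This segment consists only of $\trl_g$-edges, so $w=u$ or $w\prec_g u$. Using $\prec_g\subseteq\prec$, and $w\neq a$ in the second case, this gives $w\prec a$. Outputs have no outgoing edges, so $w\notin O$, and therefore $w\in\extfocset$. That contradicts the minimality of $a$.

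So ``not strictly below $a$'' is exactly what is needed, and no depth-ordered construction is required.
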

\begin{proof}
	Let $\focusedset$ be a focused set of $\Gamma$ that satisfies $a\in\extfocset$.
	We will distinguish two cases: either $a$ is `$X$-like', meaning it is an output or $\XY$-measured, or $a$ is `$Z$-like', meaning it is $\XZ$- or $\YZ$-measured.
	\begin{itemize}
		\item Suppose either $a\in O$, or $a\notin O$ and $\ld(a)=\XY$, then the overall measurement labelling does not change.
		Hence the only potential obstacle to having a gflow on $(G,I\cup\{a\},O,\ld)$ is if $a$ appears in correction sets of $g$.
		Now, by Definitions~\ref{def:focusing} and~\ref{def:extended-focused-set}, $a\in\extfocset$ implies $a\in\focusedset$.
		Define
		\[
			g'(v) := \begin{cases}
					g(v)\symd\focusedset &\text{if } a\in g(v) \\
					g(v) &\text{otherwise,} \\
				\end{cases}
		\]
		then $a\notin g'(v)$ for all $v\in\comp{O}$.
		Conditions \conref{g3}--\conref{g5} are satisfied by $g'$ because $a\in g(v)$ implies $v\prec a$ whereas all measured elements of $\focusedset\cup\odd{\focusedset}$ are incomparable to or greater than $a$: therefore, $v\in g'(v) \Leftrightarrow v\in g(v)$ and $v\in\odd{g'(v)} \Leftrightarrow v\in\odd{g(v)}$.
		Define $\prec'$ to be the transitive closure of ${\prec} \cup (\{v\in\comp{O} \mid a\in g(v) \}\times(\focusedset\cup\odd{\focusedset}))$, then \conref{g1} and \conref{g2} hold for all vertices.
		It remains to show that $\prec'$ is a strict partial order.
		Now, the only way for $\prec'$ to contain a cycle is if there exists some $v$ such that $a\in g(v)$ and some $w\in\focusedset$ such that $w\prec v$.
		But $a\in g(v)$ implies $v\prec a$ by \conref{g1}.
		Therefore $w\prec a$ by transitivity, contradicting the assumption that $a$ is minimal with respect to $\prec$ within $\extallfoc$.
		Hence $\prec'$ is a strict partial order and $(g',\prec')$ is a gflow on $(G,I\cup\{a\},O,\ld')$.
		
		\item Suppose $a\notin O$, and $\ld(a)\in\{\XZ,\YZ\}$, then $a\notin g(v)$ for any $v\in\comp{O}\setminus\{a\}$ by \conref{FX}.
		Thus the only potential obstacle to having a gflow on $(G,I\cup\{a\},O,\ld')$ is the correction set of $a$ itself, which needs to change because the measurement label of $a$ is changing.
		Now, by Definitions~\ref{def:focusing} and~\ref{def:extended-focused-set}, $a\in\extfocset$ implies $a\in\odd{\focusedset}$ because $a\notin\focusedset$ by the definition of focusing.
		Define
		\[
			g'(v) := \begin{cases}
				\focusedset &\text{if } v = a \\
				g(v) &\text{otherwise,} \\
			\end{cases}
		\]
		and let $\prec'$ to be the transitive closure of ${\prec} \cup \{(a,w)\mid w\in \focusedset\cup\odd{\focusedset}\}$.
		Then $(g',\prec')$ satisfies the gflow conditions for all vertices (including $a$), and it only remains to show that $\prec'$ is a strict partial order.
		That argument is analogous to the previous case.
	\end{itemize}
	We have constructed an explicit gflow on $(G,I\cup\{a\},O,\ld')$ for any possible type of $a$, as desired.
	Note that the gflows constructed in either case are again focused.
\end{proof}

This suggests the following algorithm for equalising the number of inputs and outputs given an unbalanced \LOG.

\begin{theorem}\label{thm:balancing algo}
    There exists an $\bigO(n^3)$ algorithm that, given an unbalanced open graph $(G,I,O,\ld)$ with $n = |V|$, finds $J \subseteq \left(V \setminus I\right)$ such that $(G,I\cup J,O,\ld')$ has gflow, where $\ld'(v) = \XY$ for $v \in J$ and $\ld'(v) = \ld(v)$ otherwise.
\end{theorem}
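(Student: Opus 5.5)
The plan is to iterate Theorem~\ref{thm:add-input}, adding one input per round until the open graph is balanced, and to run each round in amortised $\bigO(n^2)$ time so that the total cost is $\bigO(n^3)$.

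\textbf{Initial step.} First, if $(G,I,O,\ld)$ has gflow at all, compute a focused gflow $(g,\prec)$ in $\bigO(n^3)$ with the algebraic algorithm of \cite{mitosekAlgebraicInterpretationPauli2026}. If none exists, report failure. From the same Gaussian elimination of the flow-demand matrix $M_\Gamma$, read off a basis $\focusedset_1,\dots,\focusedset_k$ of its kernel, where $k=\abs{O}-\abs{I}$. By \cite[Theorem~3.23]{mitosekAlgebraicInterpretationPauli2026}, this kernel is exactly the space $\allfocused$ of focused sets. For each basis vector, compute its odd neighbourhood in $\bigO(n^2)$. As remarked before Lemma~\ref{lem:focused-set-inputs-outputs}, a vertex lies in $\extallfoc$ if and only if it lies in some $\widetilde{\focusedset_j}$. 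So $\extallfoc$ can be computed in $\bigO(kn^2)\subseteq\bigO(n^3)$.

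\textbf{One round.} Pick a vertex $a\in\extallfoc$ that is $\prec$-minimal within $\extallfoc$. Such a vertex exists because the space has positive dimension when $\abs{I}<\abs{O}$, and a nonzero focused set $\focusedset$ satisfies $\emptyset\neq\focusedset\subseteq\extfocset$. Finding $a$ costs $\bigO(n^2)$ given the layer/depth structure of $\prec$. By Lemma~\ref{lem:focused-set-inputs-outputs}, $a\notin I$. Theorem~\ref{thm:add-input} then gives an explicit focused gflow $(g',\prec')$ on $(G,I\cup\{a\},O,\ld')$ with $\ld'(a)=\XY$. Add $a$ to $J$.

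\textbf{Updating the focused-set basis.} For the new labelled open graph, the focused sets are exactly the old focused sets $\focusedset$ that satisfy two conditions:
\begin{itemize}
\item $a\notin\focusedset$, because focused sets avoid inputs;
\item if $a$ was $Z$-like, $a\notin\odd{\focusedset}$, because \conref{FZ} now forbids the new $\XY$ label of $a$ appearing in the odd neighbourhood.
\end{itemize}
Conversely, nothing else changes, so the new space $\allfocused'$ is the subspace of $\allfocused$ cut out by one linear constraint. It has dimension $k-1$, consistent with the $\abs{O}-\abs{I}$ count, and the constraint is nontrivial because some $\focusedset$ witnesses $a\in\extfocset$. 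To update the basis, choose a basis element $\focusedset_j$ that violates the constraint, add it to every other violating element, and drop it. This costs $\bigO(kn)$ symmetric differences, and the odd neighbourhoods update in the same way at $\bigO(kn)$. Then recompute $\extallfoc$ and update the order. Here $\prec'$ arises from $\prec$ by adding edges out of predecessors of $a$ (or out of $a$ itself), so the relevant minimality can be rechecked in $\bigO(n^2)$. It may be simpler to recompute the induced relation from $g'$, using bitset rows and the fact that only correction sets containing $a$ changed.

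\textbf{Termination and the main obstacle.} After $k$ rounds the open graph is balanced and has gflow, with labels changed only on $J$, as required. The main obstacle is keeping the per-round cost to $\bigO(n^2)$ rather than $\bigO(n^3)$. Two points need care: the focused sets must be updated incrementally as above, rather than recomputed by fresh elimination, and the partial order and minimal element must be maintained cheaply. If maintaining $\prec'$ exactly proves awkward, the fallback is the following. Any linear extension of $\prec$ (a topological order by depth) remains usable across rounds, since the new relations only point from earlier to later vertices. Choosing $a$ as the first vertex of $\extallfoc$ in that fixed linear extension still satisfies the minimality hypothesis of Theorem~\ref{thm:add-input}. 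This needs checking in the proof, but it would make each selection $\bigO(n)$.
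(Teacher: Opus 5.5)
Your proposal is correct and follows the paper's proof essentially step for step: an initial focused gflow and kernel basis in $\bigO(n^3)$, then $\abs{O}-\abs{I}$ rounds of choosing a $\prec$-minimal $a\in\extallfoc$, applying Theorem~\ref{thm:add-input}, and eliminating one basis element whose extended set contains $a$; you are also more explicit than the paper about why this update yields a basis of the new focused-set space and about maintaining $\prec$ between rounds, and your fixed-linear-extension fallback does work provided outputs are placed last. One caveat you inherit from the paper is that in the $\XZ$/$\YZ$ case the gflow built in Theorem~\ref{thm:add-input} need not be focused, since an unchanged $g(v)$ may still have $a$ (now $\XY$-labelled) in its odd neighbourhood; this is harmless, because the next round only uses that $\XZ$/$\YZ$ vertices avoid other correction sets, which both cases of that theorem preserve, and alternatively one can refocus by replacing each such $g(v)$ with $g(v)\symd\focusedset$.
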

\begin{proof}
Building on Theorem~\ref{thm:add-input}, the algorithm is immediate:
\begin{enumerate}
	\item Find a focused gflow $(g,\prec)$ as well as a basis $\{\focusedset_k\}_{k=1}^{\abs{O}-\abs{I}}$ for the space of focused sets on $\Gamma$.
	\item Pick a vertex $a$ that is minimal according to $\prec$ in $\extallfoc = \bigcup_{k=1}^{\abs{O}-\abs{I}} \extfocset_k$.
	Theorem~\ref{thm:add-input} describes how to construct a gflow $(g',\prec')$ for $(G,I\cup\{a\},O,\ld')$ and this gflow is again focused.
	\item If the new \LOG still has more outputs than inputs, update the basis for the space of focused sets as follows:
    \begin{itemize}
        \item Choose one focused set $\focusedset_j$ in the basis such that $a\in\extfocset_j$, and remove it from the basis.
        \item For every other focused set $\focusedset_\ell$ in the basis that satisfies $a\in\extfocset_\ell$, replace $\focusedset_\ell$ by $\focusedset_\ell\symd\focusedset_j$.
        \item Then go back to step 2 with the updated \LOG, updated gflow, and updated basis.
    \end{itemize} 
	Otherwise, stop.
\end{enumerate}
We now show that the algorithm runs in $\bigO(n^3)$ time.
To see this, we can upper-bound the time for each step.\begin{enumerate}
    \item Finding a basis for the space of focused sets of a fixed \LOG takes $\bigO(n^3)$; recall that this space corresponds precisely to the kernel of the flow-demand matrix \cite[Theorem~3.23]{mitosekAlgebraicInterpretationPauli2026}.
    Once the elements $\{ \focusedset_k \}$ making up this basis are known, we can compute the set $\{ \extfocset_k \}$ and maintain both from this point.
    At this point, the presentation of $\prec$ might be a $\trl_c$ matrix if we follow the algorithm in \cite{mitosekAlgebraicInterpretationPauli2026}.
    By using the Floyd–Warshall algorithm, we can compute the transitive closure of $\trl_c$, \ie the matrix of $\prec$, in $\bigO(n^3)$ \cite[Section~23.2]{cormenIntroductionAlgorithmsFourth2022}.
    \item To pick a minimal $a$, we first compute the union of the sets $\{ \extfocset_k \}$ (taking at most time equivalent to the total size of these sets, which is definitely within $\bigO(n^2)$), and then check for the minimal element according to $\prec$ within this union.
    Since $\prec$ is at this point given as a matrix, comparing two elements takes $\bigO(1)$, and thus the minimal element $a$ can be found in $\bigO(n)$, \ie time proportional to the size of the union set. In total, the second step therefore takes $\bigO(n^2)$.
    \item The subparts of the third step are completable in $\bigO(n^2)$ again:\begin{itemize}
        \item Finding and removing the first set within $\{ \extfocset_k \}$ containing $a$ can be done by looping through all sets, so at most $\bigO(n^2)$ time.
        \item Updating all sets can again be done in $\bigO(n^2)$ by looping through all sets and applying symmetric difference to those containing $a$.
    \end{itemize}
    Thus, the third step again takes $\bigO(n^2)$ in total.
    
    Now, the second and third step are each repeated a total of $|O|-|I|$ times, which is within $\bigO(n)$.
    Thus, the total time spent in both steps is $\bigO(n^3)$, which combined with the $\bigO(n^3)$ time for the first step results in $\bigO(n^3)$ overall\footnote{In fact, if $(c,\prec)$ is supplied up front, a slightly better bound of $\tilde{\bigO}(n^{\omega} + (|O|-|I|)^2 n)$ can be obtained by methods based on matrix multiplication for the first two steps \cite{StrassenVolkerMatmul,FischerMeyerTransitive,burgisserComputationalProblemsLinear1991} and observing that initially the size of the basis of space of focused sets is $|O|-|I|$ and the total number of repetitions for the last two steps is also $|O|-|I|$ which may be smaller than $\bigO(n)$. Here, $\omega$ is the exponent of matrix multiplication, for instance $\omega \le \log_2 7 \approx 2.81$ via Strassen's algorithm \cite{strassenGaussianEliminationNot1969}. Such methods are not practical unless $n$ is sufficiently large (for example, see \cite{huangStrassenAlgorithmReloaded2016}); even better bounds are `galactic' and never suitable in practice, for instance see \cite{dupont2026improvingmatrixmultiplicationexponent}.}. \qedhere
\end{enumerate}
\end{proof}

\begin{remark}\label{rem:extended-focused-set}
	We can now see why outputs that appear only in $\odd{\focusedset}$ have been excluded from the definition of $\extfocset$.
	Consider the path graph of length 3, \ie $G=(V,E)$ with $V = \{a,b,c\}$ and $E = \{ \{a,b\}, \{b,c\}\}$.
	Suppose $I = \{a\}$, $O=\{b,c\}$, and $\ld(a)=\XY$.\begin{equation*}
	    \begin{tikzpicture}
	\begin{pgfonlayer}{nodelayer}
		\node [style=GR Tiny Empty] (1) at (0, 0) {};
		\node [style=GR Tiny Empty] (2) at (1.5, 0) {};
        \node [style=ZH H] (3) at (-1.5, 0) {};
		\node [style=GR Tiny Black] (4) at (-1.5, 0) {};
		\node [style=BlackTEXT] (5) at (-1.5, -0.5) {$\scriptstyle a$};
		\node [style=BlackTEXT] (6) at (0, -0.5) {$\scriptstyle b$};
		\node [style=BlackTEXT] (7) at (1.5, -0.5) {$\scriptstyle c$};
	\end{pgfonlayer}
	\begin{pgfonlayer}{edgelayer}
		\draw (4) to (1);
        \draw (1) to (2);
	\end{pgfonlayer}
\end{tikzpicture}

	\end{equation*}
	The \LOG $(G,I,O,\ld)$ has a gflow given by $g(a)=\{b\}$ and $a\prec b,c$.
    It has a unique focused set $\focusedset = \{c\}$; the corresponding extended focused set is $\extfocset = \focusedset = \{c\}$.
	It is straightforward to see that $c$ can be made into an input without losing the existence of gflow.
	Yet $b$ cannot be made into an input, even though $b\in\odd{\focusedset}$, because it is needed to correct $a$.
	This is why Definition~\ref{def:extended-focused-set} excludes outputs that appear only in $\odd{\focusedset}$ from $\extfocset$.
    
    The issue does not arise for $\YZ$- or $\XZ$-measured vertices because, in a focused gflow, these do not appear in correction sets other than their own.
\end{remark}

The space of focused sets of a labelled open graph is independent of the choice of gflow.
Yet choosing a different focused gflow when applying Theorem~\ref{thm:add-input} may lead to different sets of inputs.
This is because different partial orders will affect which vertices are minimal in $\extallfoc$.
In the above proof, we considered only local changes to the measurement labelling, in the sense that the measurement labelling remains invariant except that if a $\YZ$- or $\XZ$-measured vertex that becomes an input, its label changes.
If the measurement labelling is allowed to be changed globally, this may lead to even more candidates for vertices that can become inputs; see Appendix~\ref{subsec:worked example balancing} for an example.

\subsection{Exploration of the \lspace}

The methods in Section~\ref{sec:algorithm} and Subsection~\ref{subsec:balancing} can find measurement labellings that result in gflow and additional inputs that balance any labelled open graph, both while preserving the existence of gflow.
These methods efficiently find a valid solution but are not suited for enumerating solutions.
Here, we present a modified probabilistic version of the algorithm from Section~\ref{sec:algorithm} that is less efficient but has a non-zero probability of finding any valid measurement labelling resulting in gflow.
By exploring the algorithm's entire decision tree, we can also enumerate all possible solutions to the measurement-labelling problem.

Before presenting the modified algorithm, we strengthen the relationship between potential inputs and triple-correctable vertices.
In Theorem~\ref{cor:triple-correctable then input}, we showed that triple-correctable vertices can become inputs.
We now show the reverse statement.

\begin{theorem}\label{th:if input then triple-correctable}
    Let $(G,I,O,\ld)$ be a \LOG with gflow and $a \in I \setminus O$ be a non-output input. Then $a$ is triple-correctable in $(G,I\setminus\{a\},O)$.
\end{theorem}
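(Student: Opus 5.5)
We need to show: given $(G,I,O,\ld)$ with gflow $(c,\prec)$ and $a\in I\setminus O$, there exist three labellings of $(G,I\setminus\{a\},O)$, differing only on $a$, each with gflow. By Remark~\ref{rem:inputs-XY}, $\ld(a)=\XY$. So the plan is to exhibit one gflow for each of the three labels $\XY$, $\YZ$, $\XZ$ on $a$, all on the open graph $(G,I\setminus\{a\},O)$, keeping $\ld$ fixed elsewhere. Equivalently, by Theorem~\ref{Th:if two planes then three planes}, it would suffice to exhibit two of them; we will nonetheless write all three down, since they are all easy.

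\textbf{Label $\XY$.} Removing $a$ from the inputs only enlarges the set of allowed correction-set elements, since correction sets must lie in $\comp{I}$, and $\comp{I}\subseteq\comp{(I\setminus\{a\})}$. Conditions \conref{g1}--\conref{g5} do not mention $I$. Hence $(c,\prec)$ itself remains a gflow on $(G,I\setminus\{a\},O,\ld)$.

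\textbf{Label $\YZ$.} Define $c_{\YZ}(a)=\{a\}$ and $c_{\YZ}(v)=c(v)$ otherwise. Now $\{a\}$ is an allowed correction set because $a\notin I\setminus\{a\}$. Moreover $a\in c_{\YZ}(a)$, and $a\notin\odd{\{a\}}$ since $G$ is simple, so \conref{g4} holds.

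The order conditions now require $a\prec' w$ for every $w\in N(a)$. To handle this, we first pass to a focused gflow by Proposition~\ref{prop:focused}. By Observation~\ref{obs:inputs-minimal-outputs-maximal}, the input $a$ (not an output) is then minimal in $\prec_c$, so nothing lies below $a$. Define $\prec'$ as the transitive closure of ${\prec_c}\cup\{(a,w)\mid w\in N(a)\}$. Adding relations out of a minimal element cannot create a cycle: a cycle would have to pass through some element below $a$, and there is none. Hence $(c_{\YZ},\prec')$ is a gflow.

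The only subtle point is that other vertices' correction sets do not contain $a$ (they lie in $\comp{I}$), so they need no change. Their conditions \conref{g3}--\conref{g5} are unaffected, since they depend only on $c(v)$ and not on $a$'s label.

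\textbf{Label $\XZ$.} Define $c_{\XZ}(a)=c(a)\cup\{a\}$. This is a disjoint union, since $a\notin c(a)$ by the definition of a correction function. Then $a\in c_{\XZ}(a)$. Also, because $a\notin\odd{\{a\}}$,
\[
\odd{c_{\XZ}(a)}=\odd{c(a)}\symd N(a),
\]
which contains $a$ since $a\in\odd{c(a)}$ by \conref{g3} and $a\notin N(a)$. So \conref{g5} holds. Alternatively, this case follows from the previous two by Theorem~\ref{Th:if two planes then three planes}. The order is extended exactly as in the $\YZ$ case, using minimality of $a$ in the focused gflow.

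\textbf{Main obstacle.} The main obstacle is guaranteeing that $a$ is minimal in the order, so that the new relations out of $a$ create no cycles. This is exactly what focusing provides via Observation~\ref{obs:inputs-minimal-outputs-maximal}, so we start the proof by fixing a focused gflow.
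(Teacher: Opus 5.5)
Your proof is correct and follows the paper's argument. You pass to a focused gflow so that the input $a$ is minimal in $\prec_c$, keep $c$ for the $\XY$ label, use $\{a\}$ as the $\YZ$ correction set, and extend the order out of $a$; the paper adds all pairs $\{a\}\times(V\setminus\{a\})$ where you add only $\{a\}\times N(a)$. The one real difference is that you construct the $\XZ$ gflow explicitly via $c(a)\cup\{a\}$, whereas the paper obtains the third plane from Theorem~\ref{Th:if two planes then three planes}, so your version does not depend on that theorem.
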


\begin{proof}
    Let $(c,\prec_c)$ be a focused gflow on $(G,I,O,\ld)$. By Observation~\ref{obs:inputs-minimal-outputs-maximal}, $a$ is minimal in $\prec_c$. Since $a \in I \setminus O$, by Remark~\ref{rem:inputs-XY} we have $\ld(a) = \XY$. Then $(c,\prec_c)$ is also a focused gflow  on $(G,I \setminus \{ a \},O,\ld)$. Define $\ld'$ as:\begin{gather*}
        \ld'(u) = \begin{cases}
            \YZ &\text{if } u = a \\
            \ld(u) &\text{otherwise}
        \end{cases}
    \end{gather*}
    We show that $(G,I\setminus\{a\},O,\ld')$ has gflow.
    Let $c'$ be a correction function for $(G,I\setminus\{a\},O,\ld')$ defined as:
    \begin{gather*}
        c'(u) = \begin{cases}
            \{a\} &\text{if } u = a \\
            c(a) &\text{otherwise}
        \end{cases}
    \end{gather*}
    Let $\prec$ be the transitive closure of ${\prec_c} \cup (\{ a \} \times 
    \left(V \setminus \{ a \}\right))$.
    Then $(c',\prec)$ is the desired gflow on $(G,I\setminus\{a\},O,\ld')$: since $a$ is minimal in $\prec_c$ and $(c,\prec_c)$ is a gflow, $a$ does not appear in correction sets of other vertices or their odd neighbourhoods.
    Moreover, $c'(a)$ satisfies \ref{g4} in $(G,I\setminus\{a\},O,\ld')$, meaning $a$ is appropriately corrected. Thus, $a$ can be measured in the $\XY$ and $\YZ$ planes and by Theorem~\ref{Th:if two planes then three planes}, $a$ is triple-correctable, ending the proof.
\end{proof}

Therefore, triple-correctable vertices and inputs are interchangeable.

\begin{theorem}\label{th:probabilistic algo}
    Let $(G,I,O)$ be a labelled open graph and define
    \[
        \Lambda = \{\lambda:\comp{O}\to\{\XY,\XZ,\YZ\} \mid (G,I,O,\ld) \text{ has gflow}\}.
    \]
    There exists a probabilistic algorithm that outputs any $\lambda\in\Lambda$ with positive probability.
\end{theorem}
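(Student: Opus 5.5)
The plan is to randomise the layer-by-layer procedure of Section~\ref{sec:algo-overview}. At each iteration, instead of greedily labelling every vertex whose forward cone fits into $\Solved$, the algorithm picks a subset of such vertices uniformly at random (possibly empty, as long as it does not stall forever; a simple choice is to require a non-empty subset). For each chosen vertex it picks uniformly at random one of the planes found feasible by Lemma~\ref{lemma:linear systems}. The proof then shows that every $\ld\in\Lambda$ has a positive-probability run producing it. Termination is handled by noting that each iteration solves at least one new vertex, so there are at most $n$ iterations, each of polynomial cost. If no vertex is feasible, the run simply fails.

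The key step is constructing, for a fixed $\ld\in\Lambda$, a run producing exactly $\ld$.

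First, take a gflow $(c,\prec)$ on $\lGIO$. By Theorem~\ref{thm:extending correction set to Pauli flow} we may assume $\prec=\prec_c$. Define the layers by depth: $L_0=O$, and $L_k$ is the set of vertices whose longest $\prec_c$-chain to a maximal element has length $k$.

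Next, check by induction on $k$ that when $\Solved=L_0\cup\dots\cup L_{k-1}$, every $v\in L_k$ has $\future{v}{c(v)}\subseteq\Solved$. Indeed, every $u$ with $v\trl_c u$ has strictly smaller depth. So by Lemma~\ref{lemma:linear systems}, $\ld(v)$ is among the feasible planes, since the plane is determined by whether $v\in c(v)$ and $v\in\odd{c(v)}$ through conditions \ref{g3}--\ref{g5}. For $v\in I$ the only admissible plane is $\XY$, which agrees with Remark~\ref{rem:inputs-XY}.

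Finally, the run that at step $k$ chooses exactly the subset $L_k$ and the labels $\ld|_{L_k}$ has positive probability, because each choice is one of finitely many options drawn uniformly. Its output is $\ld$.

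Two points need care. The first is that the random subset must be allowed to differ from the full set of feasible vertices: a greedy algorithm may assign some vertex of a deeper layer early, and with a label different from $\ld$. This is exactly why arbitrary subsets are permitted. The second is soundness: every run that completes must output an element of $\Lambda$. This follows from the correctness argument of the main algorithm, since at each step the recorded $\mathcal{C}$ has its forward cone in $\Solved$. Then $c(v):=\mathcal{C}$, together with the order given by the layers (later-solved vertices are earlier in $\prec$), satisfies \ref{g1}--\ref{g5}. I expect this bookkeeping — that a partially random, non-maximal layering still yields a valid strict partial order — to be the main obstacle, though it reduces to the same forward-cone argument as before. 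Enumeration of $\Lambda$ then follows by exploring all branches of the decision tree.
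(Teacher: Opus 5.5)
Your proposal is correct and takes essentially the same route as the paper: randomise which feasible vertices get solved and which feasible plane each receives, then show that the run tracing a fixed gflow of $(G,I,O,\lambda)$ has positive probability. The paper solves a single vertex per iteration, following a linear extension of $\prec$, whereas you solve whole depth layers; you also spell out the soundness argument that the paper leaves implicit. One small fix is needed, because an isolated non-output vertex with $c(v)=\{v\}$ (necessarily $\YZ$-measured) is maximal in $\prec_c$, so setting $L_0=O$ conflicts with defining layers by ``longest chain to a maximal element''. Instead, set the depth of every output to $0$ and of each non-output $v$ to $1+\max\{\mathrm{depth}(u) \mid v \prec_c u\}$, with $\max\emptyset=0$.
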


\begin{proof}
    We modify the algorithm from Section~\ref{sec:algorithm}. Whenever solving a layer and identifying $\New$ in line~\ref{Line:New from Layers}, instead of solving all such vertices, pick a single vertex $v \in \New$. If $v$ has more than one measurement plane associated with it in $\Found$, then choose a valid label $\ld(v)$ at random; otherwise pick the unique $\ld(v)$ recorded in $\Found$. After solving for $v$, instead of continuing for the remaining vertices in $\New$, immediately move to the next layer.

    Suppose $\ld\in\Lambda$.
    Let $(c,\prec)$ be a gflow on $(G,I,O,\ld)$, and let $<$ be any totalisation of $\prec$. There is a non-zero probability that the algorithm picks non-output vertices precisely in descending order according to $<$. If any vertex is triple-correctable, the algorithm also has a non-zero probability of choosing the same labelling as in $\ld$; otherwise, it must pick the measurement label that appears in $\ld$ with certainty. Therefore, the presented probabilistic approach has a non-zero probability of choosing the branch that yields $\ld$, as desired.
\end{proof}

The algorithm is less efficient than the one presented in Section~\ref{sec:algorithm}, but it still achieves $\bigO(n^3)$ complexity while allowing access to any $\ld$.

By exploring the full decision tree encountered by the algorithm in Theorem~\ref{th:probabilistic algo}, one can enumerate all possible measurement labellings that result in gflow for a given open graph.
This full exploration may require exponential time in the size of the initial graph, \ie it is no longer efficient. Furthermore, different branches may reach the same labelling.
Nevertheless, this approach may be useful when considering the question of which one-way computations are supported on some given graph state.

\section{Conclusions}
\label{sec:conclusions}

Where previous work on flow properties and flow-finding focused on labelled open graphs where the measurement labelling is fixed, we instead consider the scenario where the measurement labels may be unknown or only partially specified.
We give an algorithm that resolves the problem of finding gflow in this scenario in $\bigO(n^3)$, where $n$ is the number of qubits in the computation.
This is the same complexity as the best-known standard gflow-finding algorithm \cite{mitosekAlgebraicInterpretationPauli2026}, which our work strictly generalises.
Without affecting the complexity, our algorithm can also take in constraints on the order in which the measurements are to be performed.

We moreover show that if an open graph is balanced -- i.e.\ the number of inputs equals the number of outputs -- then there exists at most one measurement labelling compatible with gflow.
This generalises the property that a special variant of gflow called focused is unique on balanced labelled open graphs (if it exists) \cite{mhallaWhichGraphStates2014a}, which is often useful when working with flow properties.
As a next step we show that if an unbalanced open graph (labelled or not) has gflow, one can efficiently identify a set of non-input vertices that can be turned into inputs to balance the open graph while preserving the existence of flow and requiring only minimal changes to the measurement labelling.
In particular, this means that given only a graph and a set of output vertices (plus potentially some constraints on the time order or some partial labelling constraints), we can efficiently find a measurement labelling and a set of inputs such that the resulting labelled open graph is balanced and has gflow (or conclude no such labelling exists).
This is a significant step towards addressing the question of which robustly deterministic one-way computations can be implemented given some graph state that can be experimentally prepared.

Flow structures were also defined outside of qubit computation with $\mathbb{F}_d$-flow (also called $\mathbb{Z}_d$-flow) on qudits of odd prime dimension $d$ \cite{boothMeasurementbasedQuantumComputation2022, boothOutcomeDeterminismMeasurementbased2023,mitosekWorking2026}.
There also exists a continuous variable flow \cite{boothMeasurementbasedQuantumComputation2022, BoothCVflow2023}.
An adaptation of gflow to hypergraph states was demonstrated in \cite{IJcken2024Masters}.
In \cite{deFeliceDataflow2026}, a partial static flow was considered as a special type of Pauli flow for more efficient reasoning for fusion networks.
Another adaptation of Pauli flow is ZX-flow, which can be defined on a broader class of ZX-calculus diagrams \cite{kissinger2026zxflowflexiblecriteriondeterministic}.
All of these offer directions in which our results could be generalised in the future.

\section*{Acknowledgements}
We thank Simon Perdrix for interesting and useful discussions.

VG and PM contributed equally.
VG found the initial version of the algorithm, which RR and PM also found independently.
PM proved the uniqueness of the measurement labelling for balanced open graphs.
MB, PM, and VG together optimised the main algorithm (building on the initial version by VG) and MB devised a balancing procedure.
RR provided physical insights.
PM and MB wrote the paper, with notes from VG and RR.

MB is supported by the Plan France 2030 through the PEPR integrated project EPiQ ANR-22-PETQ-0007 and the HQI platform ANR-22-PNCQ-0002.
PM is supported by the Niedersächsisches Ministerium für Wissenschaft und Kultur.
RR and PM are supported by the Alexander von Humboldt Foundation.

\begin{sloppypar}
\bibliographystyle{eptcs}
\bibliography{libref}
\end{sloppypar}

\newpage
\appendix

\section{Worked example of labelling-finding}\label{sec:worked example}

We present an example of the assignments our algorithm makes to two closely related open graphs.

\subsection{Balanced open graph}\label{subsec:worked example balanced}
We consider the open graph from Figure~\ref{fig:uniqueness-example}, now without the labelling from that figure:
\begin{align*}
    \begin{tikzpicture}
    \begin{pgfonlayer}{nodelayer}
        \node [style=ZH H] (18) at (-3, 0) {};
        \node [style=ZH H] (19) at (2, 0) {};
        \node [style=GR Tiny Black] (0) at (-3, 0) {};   
        \node [style=GR Tiny Black] (1) at (-1, 2) {};   
        \node [style=GR Tiny Black] (2) at (-1, -2) {};  
        \node [style=GR Tiny Black] (3) at (0, 0) {};    
        \node [style=GR Tiny Black] (4) at (2, 0) {};    
        \node [style=GR Tiny Empty] (5) at (4, 1) {};    
        \node [style=GR Tiny Empty] (6) at (4, -2) {};   
        \node [style=BlackTEXT] (7) at (-3, -0.6) {$i$};
        \node [style=BlackTEXT] (8) at (-1, 2.5) {$a$};
        \node [style=BlackTEXT] (9) at (-1, -2.5) {$b$};
        \node [style=BlackTEXT] (10) at (0.2, -0.5) {$c$};
        \node [style=BlackTEXT] (11) at (2, 0.6) {$j$};
        \node [style=BlackTEXT] (12) at (4, 1.5) {$o_1$};
        \node [style=BlackTEXT] (13) at (4, -2.5) {$o_2$};
        \node [style=none] (14) at (-4.5, 0) {};
        \node [style=none] (15) at (4.5, 0) {};
        \node [style=none] (16) at (0, 3.5) {};
        \node [style=none] (17) at (0, -3.5) {};
    \end{pgfonlayer}
    \begin{pgfonlayer}{edgelayer}
        \draw (0) to (1); 
        \draw (0) to (2); 
        \draw (1) to (2); 
        \draw (1) to (3); 
        \draw (1) to (4); 
        \draw (2) to (3); 
        \draw (2) to (4); 
        \draw (3) to (4); 
        \draw (4) to (5); 
        \draw (2) to (6); 
    \end{pgfonlayer}
\end{tikzpicture}
\end{align*}
As this is a balanced open graph, by Theorem~\ref{Th:uniqueness} we will expect the algorithm to recover the same labelling.

The initial system of linear equations constructed for this open graph is:
\begin{equation*}
    \begin{pNiceArray}{ccc:cc||c|c|ccc|ccc|ccc||ccccc}[first-row, first-col]
        & a & b & c & o_1 & o_2
        & i_{\XY} & j_{\XY}
        & a_{\XY} & a_{\YZ} & a_{\XZ}
        & b_{\XY} & b_{\YZ} & b_{\XZ}
        & c_{\XY} & c_{\YZ} & c_{\XZ}
        & i' & j' & a' & b' & c' \\
    i   & 1 & 1 & 0 & 0 & 0
        & 1 & 0
        & 0 & 1 & 1
        & 0 & 1 & 1
        & 0 & 0 & 0
        & 1 & 0 & 0 & 0 & 0 \\
    j   & 1 & 1 & 1 & 1 & 0
        & 0 & 1
        & 0 & 1 & 1
        & 0 & 1 & 1
        & 0 & 1 & 1
        & 0 & 1 & 0 & 0 & 0 \\
    \hdottedline
    a   & 0 & 1 & 1 & 0 & 0
        & 0 & 0
        & 1 & 0 & 1
        & 0 & 1 & 1
        & 0 & 1 & 1
        & 0 & 0 & 1 & 0 & 0 \\
    b   & 1 & 0 & 1 & 0 & 1
        & 0 & 0
        & 0 & 1 & 1
        & 1 & 0 & 1
        & 0 & 1 & 1
        & 0 & 0 & 0 & 1 & 0 \\
    c   & 1 & 1 & 0 & 0 & 0
        & 0 & 0
        & 0 & 1 & 1
        & 0 & 1 & 1
        & 1 & 0 & 1
        & 0 & 0 & 0 & 0 & 1 \\
    \end{pNiceArray}
\end{equation*}
Recall that in this system, the first block (columns labelled $a,b,c,o_1,o_2$) corresponds to the coefficients of the linear equations.
There are five variables, one for each of the non-input vertices.
The second block (columns labelled $i_\XY$ to $c_\XZ$) represents the constants: in fact, we are here looking at 11 systems of linear equations that share the same coefficients but may differ in their constants.
The role of the third block (columns with primed labels) is to help in the process of updating the system after each `layer' of labellings has been assigned; it is otherwise irrelevant to the linear systems.
Since this matrix is quite large, we will make some simplifications to the algorithm given in Section~\ref{sec:algorithm} to make it easier to follow `by hand', in particular:
\begin{itemize}
    \item We do not re-order the columns once new variables are `activated', \cf Definition~\ref{def:active}.
    In the algorithm, columns are re-ordered so that the columns of coefficients corresponding to active variables -- those that are allowed to take value 1 given the labels that have already been found -- are at the beginning for faster access.
    Here, we just manually keep track of which variables are active.
    \item Once a solution has been found for a vertex, the algorithm sets a corresponding row in the matrix to zero by using the second copy of the data and the information encoded in the third block.
    We instead simply delete these rows.
    (It so happens that we will never be in the situation where multiple rows depend on the original row of a vertex we've just solved, so we do not need to deal with that scenario.)
\end{itemize}

For the first layer, only variables corresponding to outputs can take non-zero values; in the algorithm this means that Gaussian elimination is performed only with respect to these two variables.
In this case, it suffices to reorder the rows; we will nevertheless now drop the row labels as they lose their meaning once we create linear combinations of rows anyway.
Similarly, we drop the dotted lines.
\begin{equation*}
    \begin{pNiceArray}{ccccc||c|c|ccc|ccc|ccc||ccccc}[first-row, first-col]
        & a & b & c & o_1 & o_2
        & i_{\XY} & j_{\XY}
        & a_{\XY} & a_{\YZ} & a_{\XZ}
        & b_{\XY} & b_{\YZ} & b_{\XZ}
        & c_{\XY} & c_{\YZ} & c_{\XZ}
        & i' & j' & a' & b' & c' \\
       & 1 & 1 & 1 & 1 & 0
        & 0 & 1
        & 0 & 1 & 1
        & 0 & 1 & 1
        & 0 & 1 & 1
        & 0 & 1 & 0 & 0 & 0 \\
       & 1 & 0 & 1 & 0 & 1
        & 0 & 0
        & 0 & 1 & 1
        & 1 & 0 & 1
        & 0 & 1 & 1
        & 0 & 0 & 0 & 1 & 0 \\
       & 1 & 1 & 0 & 0 & 0
        & 1 & 0
        & 0 & 1 & 1
        & 0 & 1 & 1
        & 0 & 0 & 0
        & 1 & 0 & 0 & 0 & 0 \\
       & 0 & 1 & 1 & 0 & 0
        & 0 & 0
        & 1 & 0 & 1
        & 0 & 1 & 1
        & 0 & 1 & 1
        & 0 & 0 & 1 & 0 & 0 \\
       & 1 & 1 & 0 & 0 & 0
        & 0 & 0
        & 0 & 1 & 1
        & 0 & 1 & 1
        & 1 & 0 & 1
        & 0 & 0 & 0 & 0 & 1 \\
    \end{pNiceArray}
\end{equation*}
The first round of the algorithm presented in Section~\ref{sec:algorithm} solves the initial set of vertices as follows.
Note that only $o_1,o_2$ may be non-zero, so the only vertices that can be solved are ones for which there exists a column of constants that ends in three zeroes.
In this case this is true for the following columns:
\begin{itemize}
    \item Column $j_\XY$ satisfies the property with a single 1 in the same row as column $o_1$.
    We can thus assign label $\XY$ to vertex $j$, witnessed by the correction set $g(j) = \{ o_1 \}$.

    \item Column $b_\XY$ satisfies the property with a single 1 in the same row as column $o_2$.
    We can thus assign label$\XY$ to vertex $b$, witnessed by the correction set $g(b) = \{ o_2 \}$.
\end{itemize}
This means that after the first layer, the partially labelled open graph is:
\begin{align*}
    \begin{tikzpicture}
    \begin{pgfonlayer}{nodelayer}
        \node [style=ZH H] (18) at (-3, 0) {};
        \node [style=ZH H] (19) at (2, 0) {};
        \node [style=GR Tiny Black] (0) at (-3, 0) {};   
        \node [style=GR Tiny Black] (1) at (-1, 2) {};   
        \node [style=GR Tiny Black] (2) at (-1, -2) {};  
        \node [style=GR Tiny Black] (3) at (0, 0) {};    
        \node [style=GR Tiny Black] (4) at (2, 0) {};    
        \node [style=GR Tiny Empty] (5) at (4, 1) {};    
        \node [style=GR Tiny Empty] (6) at (4, -2) {};   
        \node [style=BlackTEXT] (7) at (-3, -0.6) {$i$};
        \node [style=BlackTEXT] (8) at (-1, 2.5) {$a$};
        \node [style=BlackTEXT] (9) at (-1, -2.5) {$b^{\XY}$};
        \node [style=BlackTEXT] (10) at (0.2, -0.5) {$c$};
        \node [style=BlackTEXT] (11) at (2, 0.6) {$j^{\XY}$};
        \node [style=BlackTEXT] (12) at (4, 1.5) {$o_1$};
        \node [style=BlackTEXT] (13) at (4, -2.5) {$o_2$};
        \node [style=none] (14) at (-4.5, 0) {};
        \node [style=none] (15) at (4.5, 0) {};
        \node [style=none] (16) at (0, 3.5) {};
        \node [style=none] (17) at (0, -3.5) {};
    \end{pgfonlayer}
    \begin{pgfonlayer}{edgelayer}
        \draw (0) to (1); 
        \draw (0) to (2); 
        \draw (1) to (2); 
        \draw (1) to (3); 
        \draw (1) to (4); 
        \draw (2) to (3); 
        \draw (2) to (4); 
        \draw (3) to (4); 
        \draw (4) to (5); 
        \draw (2) to (6); 
    \end{pgfonlayer}
\end{tikzpicture}
\end{align*}
From now on, we no longer care about side effects on $b$ or $j$, so the corresponding rows -- which are marked by having a 1 in column $b'$ or $j'$ -- are removed.
For simplicity, we will also drop all columns in the second and third block with labels derived from $b$ or $j$, as these are no longer relevant.
\begin{equation*}
    \begin{pNiceArray}{ccccc||c|ccc|ccc||ccc}[first-row, first-col]
        & a & b & c & o_1 & o_2
        & i_{\XY}
        & a_{\XY} & a_{\YZ} & a_{\XZ}
        & c_{\XY} & c_{\YZ} & c_{\XZ}
        & i' & a' & c' \\
       & 1 & 1 & 0 & 0 & 0
        & 1
        & 0 & 1 & 1
        & 0 & 0 & 0
        & 1 & 0 & 0 \\
       & 0 & 1 & 1 & 0 & 0
        & 0
        & 1 & 0 & 1
        & 0 & 1 & 1
        & 0 & 1 & 0 \\
       & 1 & 1 & 0 & 0 & 0
        & 0
        & 0 & 1 & 1
        & 1 & 0 & 1
        & 0 & 0 & 1 \\
    \end{pNiceArray}
\end{equation*}
Now, the variable corresponding to $b$ becomes available so a further partial Gaussian elimination is performed; in this case we add the top row to each of the other two rows.
\begin{equation*}
    \begin{pNiceArray}{ccccc||c|ccc|ccc||ccc}[first-row, first-col]
        & a & b & c & o_1 & o_2
        & i_{\XY}
        & a_{\XY} & a_{\YZ} & a_{\XZ}
        & c_{\XY} & c_{\YZ} & c_{\XZ}
        & i' & a' & c' \\
       & 1 & 1 & 0 & 0 & 0
        & 1
        & 0 & 1 & 1
        & 0 & 0 & 0
        & 1 & 0 & 0 \\
       & 1 & 0 & 1 & 0 & 0
        & 1
        & 1 & 1 & 0
        & 0 & 1 & 1
        & 1 & 1 & 0 \\
       & 0 & 0 & 0 & 0 & 0
        & 1
        & 0 & 0 & 0
        & 1 & 0 & 1
        & 1 & 0 & 1 \\
    \end{pNiceArray}
\end{equation*}
Since columns $o_1$ and $o_2$ are identically 0 (so that vertices $o_1$ and $o_2$ can no longer contribute to corrections), we now need to find a column in the second block that is zero everywhere except possibly the top row.
This holds for column $a_\XZ$, so vertex $a$ is detected as solvable in the plane $\XZ$.
The corresponding correction set is $g(a) = \{ a,b \}$ (where the vertex itself is added to the solution of the linear system because it gets a label in $\{\XZ,\YZ\}$), leading to:
\begin{align*}
    \begin{tikzpicture}
    \begin{pgfonlayer}{nodelayer}
        \node [style=ZH H] (18) at (-3, 0) {};
        \node [style=ZH H] (19) at (2, 0) {};
        \node [style=GR Tiny Black] (0) at (-3, 0) {};   
        \node [style=GR Tiny Black] (1) at (-1, 2) {};   
        \node [style=GR Tiny Black] (2) at (-1, -2) {};  
        \node [style=GR Tiny Black] (3) at (0, 0) {};    
        \node [style=GR Tiny Black] (4) at (2, 0) {};    
        \node [style=GR Tiny Empty] (5) at (4, 1) {};    
        \node [style=GR Tiny Empty] (6) at (4, -2) {};   
        \node [style=BlackTEXT] (7) at (-3, -0.6) {$i$};
        \node [style=BlackTEXT] (8) at (-1, 2.5) {$a^{\XZ}$};
        \node [style=BlackTEXT] (9) at (-1, -2.5) {$b^{\XY}$};
        \node [style=BlackTEXT] (10) at (0.2, -0.5) {$c$};
        \node [style=BlackTEXT] (11) at (2, 0.6) {$j^{\XY}$};
        \node [style=BlackTEXT] (12) at (4, 1.5) {$o_1$};
        \node [style=BlackTEXT] (13) at (4, -2.5) {$o_2$};
        \node [style=none] (14) at (-4.5, 0) {};
        \node [style=none] (15) at (4.5, 0) {};
        \node [style=none] (16) at (0, 3.5) {};
        \node [style=none] (17) at (0, -3.5) {};
    \end{pgfonlayer}
    \begin{pgfonlayer}{edgelayer}
        \draw (0) to (1); 
        \draw (0) to (2); 
        \draw (1) to (2); 
        \draw (1) to (3); 
        \draw (1) to (4); 
        \draw (2) to (3); 
        \draw (2) to (4); 
        \draw (3) to (4); 
        \draw (4) to (5); 
        \draw (2) to (6); 
    \end{pgfonlayer}
\end{tikzpicture}
\end{align*}
There is a single 1 in column $a'$, indicating that only the middle row depends on the original row $a$.
We can thus drop this row, and again eliminate the columns in blocks two and three whose labels are derived from $a$.
\begin{equation*}
    \begin{pNiceArray}{ccccc||c|ccc|ccc||ccc}[first-row, first-col]
        & a & b & c & o_1 & o_2
        & i_{\XY}
        & c_{\XY} & c_{\YZ} & c_{\XZ}
        & i' & c' \\
       & 1 & 1 & 0 & 0 & 0
        & 1
        & 0 & 0 & 0
        & 1 & 0 \\
       & 0 & 0 & 0 & 0 & 0
        & 1
        & 1 & 0 & 1
        & 1 & 1 \\
    \end{pNiceArray}
\end{equation*}
Variable $a$ now becomes available.
The following partial Gaussian elimination step is nevertheless trivial.
Next, vertex $c$ can be solved in $\YZ$ with $g(c) = \{ c \}$:
\begin{align*}
    \begin{tikzpicture}
    \begin{pgfonlayer}{nodelayer}
        \node [style=ZH H] (18) at (-3, 0) {};
        \node [style=ZH H] (19) at (2, 0) {};
        \node [style=GR Tiny Black] (0) at (-3, 0) {};   
        \node [style=GR Tiny Black] (1) at (-1, 2) {};   
        \node [style=GR Tiny Black] (2) at (-1, -2) {};  
        \node [style=GR Tiny Black] (3) at (0, 0) {};    
        \node [style=GR Tiny Black] (4) at (2, 0) {};    
        \node [style=GR Tiny Empty] (5) at (4, 1) {};    
        \node [style=GR Tiny Empty] (6) at (4, -2) {};   
        \node [style=BlackTEXT] (7) at (-3, -0.6) {$i$};
        \node [style=BlackTEXT] (8) at (-1, 2.5) {$a^{\XZ}$};
        \node [style=BlackTEXT] (9) at (-1, -2.5) {$b^{\XY}$};
        \node [style=BlackTEXT] (10) at (0.2, -0.5) {$c^{\YZ}$};
        \node [style=BlackTEXT] (11) at (2, 0.6) {$j^{\XY}$};
        \node [style=BlackTEXT] (12) at (4, 1.5) {$o_1$};
        \node [style=BlackTEXT] (13) at (4, -2.5) {$o_2$};
        \node [style=none] (14) at (-4.5, 0) {};
        \node [style=none] (15) at (4.5, 0) {};
        \node [style=none] (16) at (0, 3.5) {};
        \node [style=none] (17) at (0, -3.5) {};
    \end{pgfonlayer}
    \begin{pgfonlayer}{edgelayer}
        \draw (0) to (1); 
        \draw (0) to (2); 
        \draw (1) to (2); 
        \draw (1) to (3); 
        \draw (1) to (4); 
        \draw (2) to (3); 
        \draw (2) to (4); 
        \draw (3) to (4); 
        \draw (4) to (5); 
        \draw (2) to (6); 
    \end{pgfonlayer}
\end{tikzpicture}
\end{align*}
Again, there is a single 1 in column $c'$ so we can just drop the corresponding row and (for simplicity) the columns with labels derived from $c$.
\begin{equation*}
    \begin{pNiceArray}{ccccc||c|ccc|ccc||ccc}[first-row, first-col]
        & a & b & c & o_1 & o_2
        & i_{\XY}
        & i' \\
       & 1 & 1 & 0 & 0 & 0
        & 1
        & 1 \\
    \end{pNiceArray}
\end{equation*}
Lastly, vertex $i$ can be solved in $\XY$, $g(i)=\{ b \}$ recovering the labelled open graph from Figure~\ref{fig:uniqueness-example}:
\begin{align*}
    \begin{tikzpicture}
	\begin{pgfonlayer}{nodelayer}
		\node [style=ZH H] (18) at (-3, 0) {};
        \node [style=ZH H] (19) at (2, 0) {};
		\node [style=GR Tiny Black] (0) at (-3, 0) {};   
		\node [style=GR Tiny Black] (1) at (-1, 2) {};   
		\node [style=GR Tiny Black] (2) at (-1, -2) {};  
		\node [style=GR Tiny Black] (3) at (0, 0) {};    
		\node [style=GR Tiny Black] (4) at (2, 0) {};    
		\node [style=GR Tiny Empty] (5) at (4, 1) {};    
		\node [style=GR Tiny Empty] (6) at (4, -2) {};   
		\node [style=BlackTEXT] (7) at (-3, -0.6) {$i^{\XY}$};
		\node [style=BlackTEXT] (8) at (-1, 2.5) {$a^{\XZ}$};
		\node [style=BlackTEXT] (9) at (-1, -2.5) {$b^{\XY}$};
		\node [style=BlackTEXT] (10) at (0.2, -0.5) {$c^{\YZ}$};
		\node [style=BlackTEXT] (11) at (2, 0.6) {$j^{\XY}$};
		\node [style=BlackTEXT] (12) at (4, 1.5) {$o_1$};
		\node [style=BlackTEXT] (13) at (4, -2.5) {$o_2$};
		\node [style=none] (14) at (-4.5, 0) {};
		\node [style=none] (15) at (4.5, 0) {};
		\node [style=none] (16) at (0, 3.5) {};
		\node [style=none] (17) at (0, -3.5) {};
	\end{pgfonlayer}
	\begin{pgfonlayer}{edgelayer}
		\draw (0) to (1); 
		\draw (0) to (2); 
		\draw (1) to (2); 
		\draw (1) to (3); 
		\draw (1) to (4); 
		\draw (2) to (3); 
		\draw (2) to (4); 
		\draw (3) to (4); 
		\draw (4) to (5); 
		\draw (2) to (6); 
	\end{pgfonlayer}
\end{tikzpicture}
\end{align*}
Alternatively, $g(i) = \{a\}$ would also have been an option here, but this is just a change of the correction set: the label remains unaffected.

The layers in which vertices were assigned a plane are: $b, j \mid a \mid c \mid i$.

\subsection{Unbalanced labelled open graph}\label{subsec:worked example unbalanced}

If the starting labelled open graph did not have $j$ specified as an input, that is:
\begin{align*}
    \begin{tikzpicture}
    \begin{pgfonlayer}{nodelayer}
        \node [style=ZH H] (18) at (-3, 0) {};
        \node [style=GR Tiny Black] (0) at (-3, 0) {};   
        \node [style=GR Tiny Black] (1) at (-1, 2) {};   
        \node [style=GR Tiny Black] (2) at (-1, -2) {};  
        \node [style=GR Tiny Black] (3) at (0, 0) {};    
        \node [style=GR Tiny Black] (4) at (2, 0) {};    
        \node [style=GR Tiny Empty] (5) at (4, 1) {};    
        \node [style=GR Tiny Empty] (6) at (4, -2) {};   
        \node [style=BlackTEXT] (7) at (-3, -0.6) {$i$};
        \node [style=BlackTEXT] (8) at (-1, 2.5) {$a$};
        \node [style=BlackTEXT] (9) at (-1, -2.5) {$b$};
        \node [style=BlackTEXT] (10) at (0.2, -0.5) {$c$};
        \node [style=BlackTEXT] (11) at (2, 0.6) {$j$};
        \node [style=BlackTEXT] (12) at (4, 1.5) {$o_1$};
        \node [style=BlackTEXT] (13) at (4, -2.5) {$o_2$};
        \node [style=none] (14) at (-4.5, 0) {};
        \node [style=none] (15) at (4.5, 0) {};
        \node [style=none] (16) at (0, 3.5) {};
        \node [style=none] (17) at (0, -3.5) {};
    \end{pgfonlayer}
    \begin{pgfonlayer}{edgelayer}
        \draw (0) to (1); 
        \draw (0) to (2); 
        \draw (1) to (2); 
        \draw (1) to (3); 
        \draw (1) to (4); 
        \draw (2) to (3); 
        \draw (2) to (4); 
        \draw (3) to (4); 
        \draw (4) to (5); 
        \draw (2) to (6); 
    \end{pgfonlayer}
\end{tikzpicture}
\end{align*}
then the algorithm would find a different solution with fewer layers.

The initial system of linear equations constructed for this modified open graph is the following, which differs from the one in the previous subsection by having a column labelled $j$ in the first block and two extra columns $j_\YZ,j_\XZ$ in the second block.
\begin{equation*}
	\scalemath{0.9}{
	\begin{pNiceArray}{cccc:cc||c|ccc|ccc|ccc|ccc||ccccc}[first-row, first-col]
		& j & a & b & c & o_1 & o_2
		& i_{\XY}
		& j_{\XY} & j_{\YZ} & j_{\XZ}
		& a_{\XY} & a_{\YZ} & a_{\XZ}
		& b_{\XY} & b_{\YZ} & b_{\XZ}
		& c_{\XY} & c_{\YZ} & c_{\XZ}
		& i' & j' & a' & b' & c' \\
	i   & 0 & 1 & 1 & 0 & 0 & 0
		& 1 
		& 0 & 0 & 0
		& 0 & 1 & 1
		& 0 & 1 & 1
		& 0 & 0 & 0
		& 1 & 0 & 0 & 0 & 0 \\
		\hdottedline
	j   & 0 & 1 & 1 & 1 & 1 & 0
		& 0
		& 1 & 0 & 1
		& 0 & 1 & 1
		& 0 & 1 & 1
		& 0 & 1 & 1
		& 0 & 1 & 0 & 0 & 0 \\
	a   & 1 & 0 & 1 & 1 & 0 & 0
		& 0 
		& 0 & 1 & 1
		& 1 & 0 & 1
		& 0 & 1 & 1
		& 0 & 1 & 1
		& 0 & 0 & 1 & 0 & 0 \\
	b   & 1 & 1 & 0 & 1 & 0 & 1
		& 0 
		& 0 & 1 & 1
		& 0 & 1 & 1
		& 1 & 0 & 1
		& 0 & 1 & 1
		& 0 & 0 & 0 & 1 & 0 \\
	c   & 1 & 1 & 1 & 0 & 0 & 0
		& 0 
		& 0 & 1 & 1
		& 0 & 1 & 1
		& 0 & 1 & 1
		& 1 & 0 & 1
		& 0 & 0 & 0 & 0 & 1 \\
	\end{pNiceArray}
	}
\end{equation*}
As before, we begin by bringing the columns $o_1$ and $o_2$ into echelon form, which can be done simply by permuting rows.
\begin{equation*}
	\scalemath{0.9}{
		\begin{pNiceArray}{cccccc||c|ccc|ccc|ccc|ccc||ccccc}[first-row, first-col]
			& j & a & b & c & o_1 & o_2
			& i_{\XY}
			& j_{\XY} & j_{\YZ} & j_{\XZ}
			& a_{\XY} & a_{\YZ} & a_{\XZ}
			& b_{\XY} & b_{\YZ} & b_{\XZ}
			& c_{\XY} & c_{\YZ} & c_{\XZ}
			& i' & j' & a' & b' & c' \\
			& 0 & 1 & 1 & 1 & 1 & 0
			& 0
			& 1 & 0 & 1
			& 0 & 1 & 1
			& 0 & 1 & 1
			& 0 & 1 & 1
			& 0 & 1 & 0 & 0 & 0 \\
			& 1 & 1 & 0 & 1 & 0 & 1
			& 0 
			& 0 & 1 & 1
			& 0 & 1 & 1
			& 1 & 0 & 1
			& 0 & 1 & 1
			& 0 & 0 & 0 & 1 & 0 \\
			& 0 & 1 & 1 & 0 & 0 & 0
			& 1 
			& 0 & 0 & 0
			& 0 & 1 & 1
			& 0 & 1 & 1
			& 0 & 0 & 0
			& 1 & 0 & 0 & 0 & 0 \\
			& 1 & 0 & 1 & 1 & 0 & 0
			& 0 
			& 0 & 1 & 1
			& 1 & 0 & 1
			& 0 & 1 & 1
			& 0 & 1 & 1
			& 0 & 0 & 1 & 0 & 0 \\
			& 1 & 1 & 1 & 0 & 0 & 0
			& 0 
			& 0 & 1 & 1
			& 0 & 1 & 1
			& 0 & 1 & 1
			& 1 & 0 & 1
			& 0 & 0 & 0 & 0 & 1 \\
		\end{pNiceArray}
	}
\end{equation*}
Again $b$ and $j$ can be assigned $\XY$ and solved with $g(b) = \{ o_2 \}$ and $g(j) = \{ o_1 \}$, as in Subsection~\ref{subsec:worked example balanced}:
\begin{align*}
    \begin{tikzpicture}
    \begin{pgfonlayer}{nodelayer}
        \node [style=ZH H] (18) at (-3, 0) {};
        \node [style=GR Tiny Black] (0) at (-3, 0) {};   
        \node [style=GR Tiny Black] (1) at (-1, 2) {};   
        \node [style=GR Tiny Black] (2) at (-1, -2) {};  
        \node [style=GR Tiny Black] (3) at (0, 0) {};    
        \node [style=GR Tiny Black] (4) at (2, 0) {};    
        \node [style=GR Tiny Empty] (5) at (4, 1) {};    
        \node [style=GR Tiny Empty] (6) at (4, -2) {};   
        \node [style=BlackTEXT] (7) at (-3, -0.6) {$i$};
        \node [style=BlackTEXT] (8) at (-1, 2.5) {$a$};
        \node [style=BlackTEXT] (9) at (-1, -2.5) {$b^{\XY}$};
        \node [style=BlackTEXT] (10) at (0.2, -0.5) {$c$};
        \node [style=BlackTEXT] (11) at (2, 0.6) {$j^{\XY}$};
        \node [style=BlackTEXT] (12) at (4, 1.5) {$o_1$};
        \node [style=BlackTEXT] (13) at (4, -2.5) {$o_2$};
        \node [style=none] (14) at (-4.5, 0) {};
        \node [style=none] (15) at (4.5, 0) {};
        \node [style=none] (16) at (0, 3.5) {};
        \node [style=none] (17) at (0, -3.5) {};
    \end{pgfonlayer}
    \begin{pgfonlayer}{edgelayer}
        \draw (0) to (1); 
        \draw (0) to (2); 
        \draw (1) to (2); 
        \draw (1) to (3); 
        \draw (1) to (4); 
        \draw (2) to (3); 
        \draw (2) to (4); 
        \draw (3) to (4); 
        \draw (4) to (5); 
        \draw (2) to (6); 
    \end{pgfonlayer}
\end{tikzpicture}
\end{align*}
We can now drop the rows corresponding to $b$ and $j$ (which rows these are can be read off from the third block), together with any second- and third-block columns with labels derived from either vertex.
\begin{equation*}
		\begin{pNiceArray}{cccccc||c|ccc|ccc||ccccc}[first-row, first-col]
			& j & a & b & c & o_1 & o_2
			& i_{\XY}
			& a_{\XY} & a_{\YZ} & a_{\XZ}
			& c_{\XY} & c_{\YZ} & c_{\XZ}
			& i' & a' & c' \\
			& 0 & 1 & 1 & 0 & 0 & 0
			& 1 
			& 0 & 1 & 1
			& 0 & 0 & 0
			& 1 & 0 & 0 \\
			& 1 & 0 & 1 & 1 & 0 & 0
			& 0 
			& 1 & 0 & 1
			& 0 & 1 & 1
			& 0 & 1 & 0 \\
			& 1 & 1 & 1 & 0 & 0 & 0
			& 0 
			& 0 & 1 & 1
			& 1 & 0 & 1
			& 0 & 0 & 1 \\
		\end{pNiceArray}
\end{equation*}
As $j$ is no longer an input, the algorithm can now use $j$ as well as $b$ in correction sets.
We bring the matrix into echelon form with respect to columns $j,b,o_1,o_2$ by
\begin{itemize}
	\item Switching the top two rows,
	\item adding the new top row to the bottom one, and
	\item adding the new middle row to the top one.
\end{itemize}
\begin{equation*}
	\begin{pNiceArray}{cccccc||c|ccc|ccc||ccccc}[first-row, first-col]
		& j & a & b & c & o_1 & o_2
		& i_{\XY}
		& a_{\XY} & a_{\YZ} & a_{\XZ}
		& c_{\XY} & c_{\YZ} & c_{\XZ}
		& i' & a' & c' \\
		& 1 & 1 & 0 & 1 & 0 & 0
		& 1 
		& 1 & 1 & 0
		& 0 & 1 & 1
		& 1 & 1 & 0 \\
		& 0 & 1 & 1 & 0 & 0 & 0
		& 1 
		& 0 & 1 & 1
		& 0 & 0 & 0
		& 1 & 0 & 0 \\
		& 0 & 1 & 0 & 1 & 0 & 0
		& 0 
		& 1 & 1 & 0
		& 1 & 1 & 0
		& 0 & 1 & 1 \\
	\end{pNiceArray}
\end{equation*}
This means any columns in the second block that have a 0 on the third row become solvable.
Thus all remaining vertices can be solved in the second layer with:\begin{align*}
    \ld(i) &= \XY, & g(i) &= \{b, j\};\\
    \ld(a) &= \XZ, & g(a) &= \{a, b\}; \text{and}\\
    \ld(c) &= \XZ, & g(c) &= \{c, j\}.
\end{align*}
The resulting labelled open graph is:
\begin{align*}
    \begin{tikzpicture}
    \begin{pgfonlayer}{nodelayer}
        \node [style=ZH H] (18) at (-3, 0) {};
        \node [style=GR Tiny Black] (0) at (-3, 0) {};   
        \node [style=GR Tiny Black] (1) at (-1, 2) {};   
        \node [style=GR Tiny Black] (2) at (-1, -2) {};  
        \node [style=GR Tiny Black] (3) at (0, 0) {};    
        \node [style=GR Tiny Black] (4) at (2, 0) {};    
        \node [style=GR Tiny Empty] (5) at (4, 1) {};    
        \node [style=GR Tiny Empty] (6) at (4, -2) {};   
        \node [style=BlackTEXT] (7) at (-3, -0.6) {$i^{\XY}$};
        \node [style=BlackTEXT] (8) at (-1, 2.5) {$a^{\XZ}$};
        \node [style=BlackTEXT] (9) at (-1, -2.5) {$b^{\XY}$};
        \node [style=BlackTEXT] (10) at (0.2, -0.5) {$c^{\XZ}$};
        \node [style=BlackTEXT] (11) at (2, 0.6) {$j^{\XY}$};
        \node [style=BlackTEXT] (12) at (4, 1.5) {$o_1$};
        \node [style=BlackTEXT] (13) at (4, -2.5) {$o_2$};
        \node [style=none] (14) at (-4.5, 0) {};
        \node [style=none] (15) at (4.5, 0) {};
        \node [style=none] (16) at (0, 3.5) {};
        \node [style=none] (17) at (0, -3.5) {};
    \end{pgfonlayer}
    \begin{pgfonlayer}{edgelayer}
        \draw (0) to (1); 
        \draw (0) to (2); 
        \draw (1) to (2); 
        \draw (1) to (3); 
        \draw (1) to (4); 
        \draw (2) to (3); 
        \draw (2) to (4); 
        \draw (3) to (4); 
        \draw (4) to (5); 
        \draw (2) to (6); 
    \end{pgfonlayer}
\end{tikzpicture}
\end{align*}
The layers in which vertices were assigned a plane are: $b, j \mid i, a, c$.
Note that the contraction of layers required $\ld(c) = \XZ$ instead of $\YZ$, whereas the latter was necessary when $j$ was an input.

\subsection{Balancing procedure}\label{subsec:worked example balancing}
Suppose that we have the labelled open graph at the end of the previous subsection.
The gflow constructed by our algorithm is not necessarily focused; indeed, the stated gflow would need to be focused before proceeding with our balancing argument, resulting in the following correction function (note, that the correction sets of $b$ and $j$ did not require any changes):\begin{align*}
    g(i) &= \{b, j, o_1, o_2\},\\
    g(a) &= \{a, b, o_2\},\\
    g(c) &= \{c, j, o_1\},\\
    g(b) &= \{o_2\}, \text{and}\\
    g(j) &= \{o_1\}.
\end{align*}
This has a unique non-trivial focused set $\focusedset = \{j, o_2\}$.
Following the procedure after from Theorem~\ref{thm:balancing algo}, we find $\extfocset = \{ a, c, j, o_2 \}$ ($o_1$ is not included despite being in $\odd{\focusedset}$ as it is an output).
Since there are no other non-trivial focused sets, we have $\extallfoc = \{ a, c, j, o_2 \}$.
Among these, both $a$ and $c$ are minimal.
Turning either $a$ or $c$ into an input balances the open graph, resulting in the following labellings:\\
\noindent\makebox[\textwidth]{%
    \hfill
    \begin{tikzpicture}
    \begin{pgfonlayer}{nodelayer}
        \node [style=ZH H] (18) at (-3, 0) {};
        \node [style=ZH H] (19) at (-1, 2) {};
        \node [style=GR Tiny Black] (0) at (-3, 0) {};   
        \node [style=GR Tiny Black] (1) at (-1, 2) {};   
        \node [style=GR Tiny Black] (2) at (-1, -2) {};  
        \node [style=GR Tiny Black] (3) at (0, 0) {};    
        \node [style=GR Tiny Black] (4) at (2, 0) {};    
        \node [style=GR Tiny Empty] (5) at (4, 1) {};    
        \node [style=GR Tiny Empty] (6) at (4, -2) {};   
        \node [style=BlackTEXT] (7) at (-3, -0.6) {$i^{\XY}$};
        \node [style=BlackTEXT] (8) at (-1, 2.5) {$a^{\XY}$};
        \node [style=BlackTEXT] (9) at (-1, -2.5) {$b^{\XY}$};
        \node [style=BlackTEXT] (10) at (0.2, -0.5) {$c^{\XZ}$};
        \node [style=BlackTEXT] (11) at (2, 0.6) {$j^{\XY}$};
        \node [style=BlackTEXT] (12) at (4, 1.5) {$o_1$};
        \node [style=BlackTEXT] (13) at (4, -2.5) {$o_2$};
        \node [style=none] (14) at (-4.5, 0) {};
        \node [style=none] (15) at (4.5, 0) {};
        \node [style=none] (16) at (0, 3.5) {};
        \node [style=none] (17) at (0, -3.5) {};
    \end{pgfonlayer}
    \begin{pgfonlayer}{edgelayer}
        \draw (0) to (1); 
        \draw (0) to (2); 
        \draw (1) to (2); 
        \draw (1) to (3); 
        \draw (1) to (4); 
        \draw (2) to (3); 
        \draw (2) to (4); 
        \draw (3) to (4); 
        \draw (4) to (5); 
        \draw (2) to (6); 
    \end{pgfonlayer}
\end{tikzpicture}
    \hfill\hfill
    \begin{tikzpicture}
    \begin{pgfonlayer}{nodelayer}
        \node [style=ZH H] (18) at (-3, 0) {};
        \node [style=ZH H] (19) at (0, 0) {};
        \node [style=GR Tiny Black] (0) at (-3, 0) {};   
        \node [style=GR Tiny Black] (1) at (-1, 2) {};   
        \node [style=GR Tiny Black] (2) at (-1, -2) {};  
        \node [style=GR Tiny Black] (3) at (0, 0) {};    
        \node [style=GR Tiny Black] (4) at (2, 0) {};    
        \node [style=GR Tiny Empty] (5) at (4, 1) {};    
        \node [style=GR Tiny Empty] (6) at (4, -2) {};   
        \node [style=BlackTEXT] (7) at (-3, -0.6) {$i^{\XY}$};
        \node [style=BlackTEXT] (8) at (-1, 2.5) {$a^{\XZ}$};
        \node [style=BlackTEXT] (9) at (-1, -2.5) {$b^{\XY}$};
        \node [style=BlackTEXT] (10) at (0.2, -0.5) {$c^{\XY}$};
        \node [style=BlackTEXT] (11) at (2, 0.6) {$j^{\XY}$};
        \node [style=BlackTEXT] (12) at (4, 1.5) {$o_1$};
        \node [style=BlackTEXT] (13) at (4, -2.5) {$o_2$};
        \node [style=none] (14) at (-4.5, 0) {};
        \node [style=none] (15) at (4.5, 0) {};
        \node [style=none] (16) at (0, 3.5) {};
        \node [style=none] (17) at (0, -3.5) {};
    \end{pgfonlayer}
    \begin{pgfonlayer}{edgelayer}
        \draw (0) to (1); 
        \draw (0) to (2); 
        \draw (1) to (2); 
        \draw (1) to (3); 
        \draw (1) to (4); 
        \draw (2) to (3); 
        \draw (2) to (4); 
        \draw (3) to (4); 
        \draw (4) to (5); 
        \draw (2) to (6); 
    \end{pgfonlayer}
\end{tikzpicture}
    \hfill
}

Note that both are different from the original labelled open graph in Figure~\ref{fig:uniqueness-example} since obtaining that combination of labels and inputs would require making $j$ into an input and also changing the label of $c$ to $\YZ$.
Yet our balancing procedure is `local' in that it only changes the measurement label of a newly introduced input to $\XY$ (if needed), otherwise keeping the labelling unchanged.
In fact, if the labelling is allowed to be modified globally, it is also possible to find a \LOG that results in gflow while making $b$ into an input:
\begin{align*}
    \begin{tikzpicture}
    \begin{pgfonlayer}{nodelayer}
        \node [style=ZH H] (18) at (-3, 0) {};
        \node [style=ZH H] (19) at (-1, -2) {};
        \node [style=GR Tiny Black] (0) at (-3, 0) {};   
        \node [style=GR Tiny Black] (1) at (-1, 2) {};   
        \node [style=GR Tiny Black] (2) at (-1, -2) {};  
        \node [style=GR Tiny Black] (3) at (0, 0) {};    
        \node [style=GR Tiny Black] (4) at (2, 0) {};    
        \node [style=GR Tiny Empty] (5) at (4, 1) {};    
        \node [style=GR Tiny Empty] (6) at (4, -2) {};   
        \node [style=BlackTEXT] (7) at (-3, -0.6) {$i^{\XY}$};
        \node [style=BlackTEXT] (8) at (-1, 2.5) {$a^{\XY}$};
        \node [style=BlackTEXT] (9) at (-1, -2.5) {$b^{\XY}$};
        \node [style=BlackTEXT] (10) at (0.2, -0.5) {$c^{\XZ}$};
        \node [style=BlackTEXT] (11) at (2, 0.6) {$j^{\XY}$};
        \node [style=BlackTEXT] (12) at (4, 1.5) {$o_1$};
        \node [style=BlackTEXT] (13) at (4, -2.5) {$o_2$};
        \node [style=none] (14) at (-4.5, 0) {};
        \node [style=none] (15) at (4.5, 0) {};
        \node [style=none] (16) at (0, 3.5) {};
        \node [style=none] (17) at (0, -3.5) {};
    \end{pgfonlayer}
    \begin{pgfonlayer}{edgelayer}
        \draw (0) to (1); 
        \draw (0) to (2); 
        \draw (1) to (2); 
        \draw (1) to (3); 
        \draw (1) to (4); 
        \draw (2) to (3); 
        \draw (2) to (4); 
        \draw (3) to (4); 
        \draw (4) to (5); 
        \draw (2) to (6); 
    \end{pgfonlayer}
\end{tikzpicture}
\end{align*}
Yet neither $o_1$ nor $o_2$ can be turned into an input while preserving the property that the open graph has lgflow.

\section{Optimisation}\label{sec:implementation}

The presented algorithm achieves the target $\bigO(n^3)$ complexity.
However, an efficient implementation can include further optimisation tricks not included in the pseudocode.

We avoided including these optimisation tricks in the presented algorithm, as the vectors in the first and last blocks play a double role, which can be confusing:
\begin{enumerate}
    \item Firstly, observe that the second block of the constructed linear system contains separate vectors for each of the three possible measurement planes.
    It is possible to greatly reduce the attached block by observing that vectors corresponding to $\XY$ measurements are already present in the maintenance block of the linear system.
    Furthermore, the vectors corresponding to $\YZ$ measurements are also present but in the coefficient block instead.
    Thus, it is possible to maintain only the vectors corresponding to $\XZ$, while accessing information necessary for $\XY$ or $\YZ$ from the other two blocks.
    Furthermore, the vectors for $\XZ$ are simply the XOR of those for $\XY$ and $\YZ$ and could potentially also be dropped and reconstructed only when necessary.
    This way, we can drop two-thirds of the attached block, or even the entire attached block, from the linear system, leading to a more memory-efficient procedure that also runs slightly faster in practice.
    \item Various parts of the pseudocode require access to the pivots in the active part of the coefficient block.
    Since they do not change until further activations, we can cache them when we start considering the next layer of vertices, skipping reads from part of the coefficient block.
    \item Many parts can be parallelised.
    For example, checking vectors in the attached block (or elsewhere if an earlier optimisation trick is implemented) for consistency, as well as finding corresponding solutions of the linear system, can all be parallelised.
    It is also possible to parallelise restoration of the echelon order, \ie to perform it once for all newly solved vertices simultaneously instead of doing it one by one.
    \item Lastly, when solving the linear systems in line~\ref{line:Ex=b} of the pseudocode, it is possible to only check for consistency (\ie reading whether relevant part of the column for considered attached vector is identically zero) without computing the actual solution. This can be particularly useful if one is interested only in a measurement labelling resulting in gflow (but not the witnessing gflow) or if one plans to look for focused gflow after the run.
\end{enumerate}

We note that these optimisations may also be applied to the existing $\bigO(n^3)$ procedure for flow finding where the measurement labelling is part of the input \cite{mitosekAlgebraicInterpretationPauli2026,uldemolins2026graphixsoftwareframeworkmeasurementbased}.

\end{document}